\documentclass[12pt]{article}
\usepackage{amsmath, amsthm}
\usepackage{graphicx}
\usepackage{natbib}
\usepackage{url} 
\usepackage{amssymb}
\usepackage{amsbsy}
\usepackage{setspace}
\usepackage{enumerate}
\usepackage{algorithm}
\usepackage{multirow}
\usepackage{mathtools}
\usepackage{booktabs}
\usepackage{array}             
\usepackage{diagbox}
\usepackage[noend]{algpseudocode}
\usepackage[dvipsnames]{xcolor}

\usepackage{xr}
\makeatletter
\newcommand*{\addFileDependency}[1]{
\typeout{(#1)}
%
%
\@addtofilelist{#1}
%
\IfFileExists{#1}{}{\typeout{No file #1.}}
}\makeatother


\newcommand{\Y}{{\mbox{\boldmath $Y$}}}

\newcommand{\mbf}[1]{\mbox{\boldmath${#1}$}}
\newcommand{\X}{{\mbox{\boldmath $X$}}}

\newcommand{\Z}{{\mbox{\boldmath $Z$}}}
\newcommand{\U}{{\mbox{\boldmath $U$}}}
\newcommand{\W}{{\mbox{\boldmath $W$}}}
\newcommand{\V}{{\mbox{\boldmath $V$}}}

\newcommand{\bG}{{\mbox{\boldmath $G$}}}

\newcommand{\I}{{\mbox{\boldmath $I$}}}

\newcommand{\re}{{\mbox{\boldmath $b$}}}

\newcommand{\be}{\mbf{\beta}}

\newcommand{\balpha}{{\mbox{\boldmath $\alpha$}}}

\newcommand{\bgamma}{{\mbox{\boldmath $\gamma$}}}
\newcommand{\bomega}{{\mbox{\boldmath $\omega$}}}
\newcommand{\btau}{{\mbox{\boldmath $\tau$}}}

\newcommand{\btheta}{{\mbox{\boldmath $\theta$}}}

\newcommand{\bSigma}{{\mbox{\boldmath $\Sigma$}}}
\newcommand{\bPsi}{{\mbox{\boldmath $\Psi$}}}
\newcommand{\bphi}{{\mbox{\boldmath $\phi$}}}
\newcommand{\bfeta}{{\mbox{\boldmath $\xi$}}}

\newcommand{\bvarphi}{{\mbox{\boldmath $\bvarphi$}}}

\newcommand{\blind}{0}

\addtolength{\oddsidemargin}{-.5in}%
\addtolength{\evensidemargin}{-1in}%
\addtolength{\textwidth}{1in}%
\addtolength{\textheight}{1.7in}%
\addtolength{\topmargin}{-1in}%

\newtheorem{proposition}{Proposition}

\begin{document}

\def\spacingset#1{\renewcommand{\baselinestretch}%
{#1}\small\normalsize} \spacingset{1}


\if0\blind
{
  \title{\bf Sparse high-dimensional linear mixed modeling with a partitioned empirical Bayes ECM algorithm}
  \author{Anja Zgodic$^a$, Ray Bai$^b$, Jiajia Zhang$^a$, Peter Olejua$^a$, \\ and Alexander C. McLain$^a$\hspace{.2cm}\\
    $^a$Department of Epidemiology and Biostatistics, University of South Carolina \\
    $^b$Department of Statistics, University of South Carolina}
  \maketitle
} \fi

\if1\blind
{
  \bigskip
  \bigskip
  \bigskip
  \begin{center}
    {\LARGE\bf Title}
\end{center}
  \medskip
} \fi

\bigskip
\begin{abstract}
High-dimensional longitudinal data is increasingly used in a wide range of scientific studies. \textcolor{black}{ To properly account for dependence between longitudinal observations, statistical methods for high-dimensional linear mixed models (LMMs) have been developed. However, few packages implementing these high-dimensional LMMs are available in the statistical software \texttt{R}. Additionally, some packages suffer from scalability issues.} This work presents an efficient and accurate Bayesian framework for high-dimensional LMMs. We use empirical Bayes estimators of hyperparameters for increased flexibility and an Expectation-Conditional-Minimization (ECM) algorithm for computationally efficient maximum a posteriori probability (MAP) estimation of parameters. The novelty of the approach lies in its partitioning and parameter expansion as well as its fast and scalable computation. We illustrate Linear Mixed Modeling with PaRtitiOned empirical Bayes ECM (LMM-PROBE) in simulation studies evaluating fixed and random effects estimation along with computation time. A real-world example is provided using data from a study of lupus in children, where we identify genes and clinical factors associated with a new lupus biomarker and predict the biomarker over time. \textcolor{black}{Supplementary materials are available online}. 
\end{abstract}

\noindent%
{\it Keywords:} Bayesian variable selection, Expectation-Conditional-Maximization, longitudinal data analysis, ultra high-dimensional linear regression, sparsity, random effects.
\vfill

\newpage
\spacingset{1.75} 

\section{Introduction} \label{sec.intro}

\textcolor{black}{Research on longitudinal high-dimensional data or grouped (clustered) high-dimensional data has recently attracted greater interest.} For example, some genetic studies gather gene expression levels for an individual on multiple occasions over time \citep{Banchereau2016}. Other ongoing studies -- like the UK Biobank and the Adolescent Brain Cognitive Development Study -- collect high-dimensional genetic/imaging information longitudinally to learn how individual changes in these markers are related to outcomes \citep{Cole2020, Sara2022}. Such data usually violates the traditional linear regression assumption that observations are independently and identically distributed. Data analysis should account for the dependence between observations belonging to the same individual. 

\textcolor{black}{For the low dimensional setting where $n \gg p$, linear mixed models (LMMs) are available for handling dependent data structures. For high dimensional settings where $n \ll p$, LMMs have been extended through four types of procedures: information criteria \citep[][among others]{Ariyo2020, Craiu2018}; shrinkage methods \citep[][among others]{Bondell2010, Fan2012Aim3, Groll2014, Ibrahim2011, Opoku2021, Sholokhov2024}; the Fence method \citep[][among others]{Jiang2008}; and Bayesian approaches \citep[][among others]{Degani2022, Kinney2007, Zhou2013}.} 
\textcolor{black}{Of these procedures, a small subset are implemented in readily available packages for the commonly used statistical software \texttt{R}}. 

\textcolor{black}{The shrinkage-based high-dimensional LMM approaches} focus on scenarios where either only the fixed effects or both the fixed and random effects are high-dimensional. None of the procedures that perform variable selection on both fixed and random \textcolor{black}{effects \citep[c.f., ][among others]{Bondell2010, Chen2003, Ibrahim2011, Li2021}\nocite{Fan2012Aim3, Komarek2008, Peng2012} have publicly available \texttt{R} packages. Note that we differentiate this research from genetic studies subjected to population stratification \citep{Reisetter2021}, }where LMMs are commonly used to adjust for unobserved environmental confounding \citep{Bhatnagar2020, Rakitsch2012}.  

For scenarios with high-dimensional fixed effects and low-dimensional random effects, \cite{Schelldorfer2011Aim3} introduced the first usage of penalized LMMs through a LASSO-penalized maximum likelihood estimation procedure. This procedure is implemented in the \verb|lmmlasso| \verb|R| package, but performs $p \times p$ matrix operations that are computationally expensive for large $p$. Alternatively, \cite{Rohart2014} penalize the fixed effects of the complete-data likelihood, treat random effects as missing data, and perform estimation using an Expectation-Conditional-Maximization \citep[ECM][]{Meng1993} algorithm in the \verb|MMS| \verb|R| package. This approach is doubly iterative, where fixed effects are updated with LASSO within each ECM iteration. Other software for high-dimensional clustered data includes \verb|PGEE| \citep{Wang2012}, which implements penalized generalized estimating equations (PGEE) but requires computations with $p \times p$ matrices. 

We use the aforementioned packages with a toy example to illustrate their lack of scalability to (ultra) high-dimensional longitudinal datasets. We leverage the popular riboflavin dataset \citep{Buhlmann2014}, which contains $p = 4088$ predictors, and model the production rate of riboflavin over time using the \verb|lmmlasso|, \verb|MMS|, and \verb|PGEE| packages. For one iteration of Cross-Validation (CV), \verb|lmmlasso| and \verb|PGEE| did not converge within eight hours of computation, while \verb|MMS| with LASSO took approximately 19 minutes per CV iteration (with Elastic-Net $\sim 25$ minutes, see Figure \ref{fig.time.example2}). This demonstrates that the scalability of \verb|lmmlasso|, \verb|MMS|, and \verb|PGEE| is limited to a fraction of what is currently considered `high-dimensional' in the statistical literature -- $p$ growing at a rate of $\mathcal{O}(n)$ \citep{Fan2008}. We further detail the computational complexity of these methods in Section \ref{sec.comp.anal}. 

These limitations in scalability highlight a lack of efficient software for high-dimensional LMMs. To fill this crucial gap, we propose an estimation procedure named Linear Mixed Modeling via a PaRtitiOned empirical Bayes ECM (LMM-PROBE), a method for sparse high-dimensional linear mixed modeling. In contrast to the above methods, analyzing the riboflavin dataset with LMM-PROBE took 10 seconds per CV iteration. LMM-PROBE performs Bayesian variable selection on the fixed effects and provides maximum a posteriori probability (MAP) estimates of the variable selection indicators, fixed effects, random effects, and variance components while circumventing $p \times p$ matrix computations and doubly iterative optimization. The approach is based on the recently proposed PROBE algorithm \citep{McLain2022}, which focuses on MAP estimation of the regression coefficients with minimal prior assumptions. PROBE utilizes a Parameter-Expanded ECM \citep[PX-ECM, ][]{MenRub92,Liuetal98} algorithm for variable selection and estimation. The E-step is motivated by the two-group approach to multiple testing \citep{Efron2001,Sun2007}, and is facilitated with empirical Bayes estimates of hyperparameters. 

The novelty of LMM-PROBE lies in its partitioning and parameter expansion set in a Bayesian framework, where we update fixed effects in closed form in each iteration. The contributions of the LMM-PROBE method are i) a new framework for (ultra) high-dimensional linear mixed effects regression with variable selection, ii) a Bayesian approach with increased flexibility (minimal assumptions) through the use of empirical Bayes estimators for hyperparameters, iii) a competitively fast and effective estimation procedure that scales linearly in $p$ and $n$ through a PX-ECM algorithm, and iv) an \verb|R| package \verb|lmmprobe| which implements the proposed method \citep{ZgoMcL23}. The remainder of this article is structured as follows. In Section \ref{sec.methods}, we introduce our proposed LMM-PROBE framework. In Section \ref{sec.multicycle}, we describe the PX-ECM algorithm and its computational complexity. Section \ref{sec.sim} shows numerical study results, while Section \ref{sec.data} applies the method to a real dataset on lupus in children. \textcolor{black}{Finally, Section \ref{sec.disc} provides a deeper contrast between approaches as well as a brief discussion.  }

\section{Methods}\label{sec.methods}

\subsection{Bayesian linear mixed model setup}\label{sec.blmm.setup}

Consider a nested data structure with clusters $i$ (units), $i = 1,2,\ldots,N$, each with $n_i$ observations. For cluster $i$, let $\Y_i \in \mathbb{R}^{n_i}$ denote the response vector, $\X_i \in \mathbb{R}^{n_i \times p}$ the \textit{sparse} fixed effects design matrix, and $\V_i \in \mathbb{R}^{n_i \times r}$ the \textit{non-sparse} random effects design matrix. For notational convenience and without loss of generality, we denote the \textit{non-sparse} fixed effect design matrix as $\V_i$. Our method can include other `adjustment' variables (fixed effects) that are not subject to the sparsity assumption, nor are random effects. 

For cluster $i$ our linear mixed effects model is given by 
\begin{equation}\label{eq.lmm}
\Y_{i} = \X_{i} (\bgamma \be)  + \V_{i} \mbf{\omega}  +  \V_{i} \re_i + \mbf{\epsilon}_{i},
\end{equation}
where $\bgamma\be$ denotes the Hadamard product of $\bgamma$ and $\be$, $\be \in \mathbb{R}^{p}$, $\bgamma \in \{0,1\}^p$, and $\mbf{\omega} \in \mathbb{R}^{r}$. $\be$ and $\mbf{\omega}$ are the sparse and non-sparse fixed effects, respectively, $\re_i \in \mathbb{R}^{r}$ are the random effects with $\re_i \sim N(\mbf{0}, \mbf{G})$, $\mbf{\epsilon}_{i}$ is the error term with $E(\mbf{\epsilon}_{i}) = \mbf{0}$,  $Var(\mbf{\epsilon}_{i}) = \sigma^2\mbf{I}_{n_i}$, where $\mbf{I}_{m}$ denotes an $m \times m$ identity matrix, and $Cov(\mbf{\epsilon}_{i}, \mbf{\epsilon}_{i'}) = 0$ . \textcolor{black}{At a minimum, $\V_i$ includes an intercept variable ($\V_{i,1} = \mbf{1}'$). }
Further, define $\bSigma_i = \V_{i}' \mbf{G} \V_{i} +  \sigma^2\mbf{I}_{n_i}$ and let $\Y$, $\re$, $\mbf{\epsilon}$, $\X$, $\V$ be obtained by vertically stacking vectors $\Y_i$, $\re_i$, $ \mbf{\epsilon}_i$ and matrices $\X_i$, $\V_i$ for each cluster $i$, respectively. Additionally, let $\mbf{\mathcal{V}}$, $\bSigma$, and $\mbf{\mathcal{G}}$ represent block diagonal matrices, with the $i$th block being $\V_i$, $\bSigma_i$, and \bG, respectively. Adding a Gaussian assumption on $\mbf{\epsilon}_i$ yields
\begin{eqnarray*}
\Y | \re \sim N\left\{\X' (\bgamma \be) + \mbf{\V}' \bomega + \mbf{\mathcal{V}}' \re ,  \sigma^2\I \right\} \mbox{  and  } \Y \sim N\left\{\X' (\bgamma \be) + \mbf{\V}' \bomega, \bSigma \right\}, 
\end{eqnarray*}
where $\I = \I_{M}$ an $M \times M$ identity matrix and $M = \sum_{i} n_i$.

Our Bayesian linear mixed model uses the following priors for $(\bG^{-1}, \sigma^2, \be, \bgamma)$:
\begin{eqnarray*}
\mbf{G}^{-1} &\sim& \mathcal{W}\left\{(\rho \mbf{C})^{-1}, \rho \right\}, \\
\sigma^2  &\sim& f_{\sigma^2} \propto (\sigma^2)^{-1}, \\
p(\be ) &=& \prod_{k=1}^p f_\beta(\beta_k), \\
p(\bgamma|\pi) &=& \pi^{p-|\gamma| } (1-\pi)^{|\gamma|},
\end{eqnarray*}
where $|\bgamma|$ is the sum of $\gamma_k$ values, and $\mathcal{W}$ refers to the Wishart distribution. 
To avoid specifying informative hyperpriors, we leave hyperparameters $\pi$ and $f_\beta$ unspecified and estimate them using plug-in empirical Bayes estimators. Throughout, we use $\mbf{C} =0 \I_r$ and $\rho = r + 1$.

Given $\sigma^2$, $\bgamma$, and $\re$, standard results can be leveraged to show that, when breaking $\X$ down into $(\X_{\gamma}, \X_{\bar \gamma})'$, for predictors where  $\gamma_k=1$ and $\gamma_k=0$, respectively, and $\be$ into $(\be_{\gamma} \ \be_{\bar \gamma})^\prime$, the posterior distribution of $\bgamma \be$ is 
\begin{equation*}\label{eq.post.beast}
\be_\gamma |(\Y,\sigma^2,\bgamma) \sim N\left\{  \hat{\be}_{\gamma} , \sigma^2(\X_\gamma'\X_\gamma)^{-1} \right\}
\end{equation*}
for predictors $\X_{\gamma}$, where $\hat \be_\gamma = (\X_\gamma'\X_\gamma)^{-1}\X'_\gamma(\Y - \mbf{\V}' \bomega - \mbf{\mathcal{V}}' \re)$, while for predictors in $\X_{\bar \gamma}$ the posterior of $\bgamma \be$ is $\delta_0(\cdot)$, a point mass at zero. The conditional distribution for $\sigma^2$ is inverse-gamma ($IG$) with parameters $\left(a, d\right)$, 
\begin{equation*}\label{eq.post.sigma2}
\sigma^2 |(\Y, a, d, \re, \bgamma, \be, \bomega) \sim 
\\ IG\left\{ \underbrace{\frac{M-2}{2}}_{a} , \underbrace{ \frac{1}{2} \lVert \Y - \X' \hat \be_\gamma - \mbf{\V}' \bomega - \mbf{\mathcal{V}}' \re \rVert_2^{2} }_{d} \right\},
\end{equation*}
where $\lVert \boldsymbol{x} \rVert_2$ denotes the $\ell_2$-norm of $\boldsymbol{x}$. To obtain the marginal distribution of $\be_\gamma$, we integrate out $\sigma^2$, leading to 
\begin{equation*}\label{eq.post.betagamma}
\be_\gamma|(\Y, \bomega, \re,a,d,\bgamma) \sim t_{2a}\left\{ \hat{\be}_{\gamma} , \frac{d}{a} (\X_\gamma'\X_\gamma)^{-1} \right\}.
\end{equation*}
It is worth noting that the marginal prior of $\bgamma \be$ can decompose into the typical spike-and-slab regression form and that the posterior mean of $\be_\gamma$, $\hat{\be}_{\gamma}$ does not require prior specifications. For random effects $\re$, letting $\bPsi = \left(\mbf{\mathcal{V}}' \mbf{\mathcal{V}} + \sigma^2 \mbf{\mathcal{G}}^{-1} \right)$, with analogous $\bPsi_i = \left(\V_i' \V_i + \sigma^2 \mbf{G}^{-1} \right)$, the conditional distribution is
\begin{equation*}\label{eq.post.re}
\re |(\Y, \sigma^2, \mbf{G}, \bgamma, \be, \bomega) \sim N \left\{ \bPsi^{-1} \mbf{\mathcal{V}}' \left(\Y - \X' \hat \be_\gamma - \V' \bomega \right), \left(\sigma^{-2} \bPsi\right)^{-1} \right \},
\end{equation*}
as is common in Bayesian linear mixed models \citep{Fearn1975, Harville1996, Lange1992, Lindley1972}. For the precision matrix of the random effects, the conditional distribution follows a Wishart distribution,
\begin{equation*}\label{eq.post.g}
\mbf{G}^{-1} |(\Y, \mbf{C}, \rho, \sigma^2, \re, \bgamma, \be, \mbf{\omega})  \sim \mathcal{W} \left\{ \left( \re \re' + \rho \mbf{C}\right)^{-1}, N + \rho \right\}.
\end{equation*}
Finally, for $\bgamma = \left( \gamma_1, \ldots, \gamma_p \right)$ we use empirical Bayes estimators (Section \ref{sec.first.e}) of $\pi$ and $f_{\beta}$ to estimate the posterior expectation $\mbf{p} = \left( p_1, \ldots, p_p \right)$, where $p_k = P\left(\gamma_k = 1| \Y, \pi \right)$. 

\subsection{Parameter expansion and partitioning}\label{sec.pep}

In this section, we present the main aspects of the LMM-PROBE method, which uses a quasi-PX-ECM algorithm to obtain MAP estimates of $\bgamma \be$, $\bomega$, $\re$, $\bG$, and $\sigma^2$. Specifically, we expand the model in (\ref{eq.lmm}) by including latent terms and partitioning the model by the predictor. Let $\X_k$ denote the column of $\X$ corresponding to the $k$th predictor, and let $/k$ indicate a matrix or vector without column or element $k$. Next, consider $\W_k = \X'_{/k} \left( \bgamma_{/k} \be_{/k} \right)$, a term that encompasses the impact of all fixed effect predictors except for $k$, under the sparsity assumption. The partitioned version of model (\ref{eq.lmm}) is then
\begin{eqnarray}\label{eq.part}
\Y|(\W_k, \re) =  \X_k \beta_k + \W_k +  \mbf{\V}' \bomega + \mbf{\mathcal{V}}' \re + \mbf{\epsilon}.
\end{eqnarray}
By expanding $ \X' (\bgamma \be) $ into $ \X_k \beta_k + \W_k $ and partitioning for each $k$, we can conveniently estimate $\beta_k$ while adjusting for the impact of all predictors excluding $k$ (through $\mbf{\V}' \bomega$, $\W_k$) and of all random effects (through $\mbf{\mathcal{V}}' \re$). It is this partitioning (\ref{eq.part}) that allows LMM-PROBE to update each coordinate of the fixed effects in a computationally efficient closed-form - avoiding large matrix operation or intermediate iterative algorithms to numerically solve for parameters.

Note that $\W_k$ and $\re$ are unknown and will be treated as missing data. 
We include expanded parameters on $\W_k$, $\mbf{\V}' \bomega$, and $\mbf{\mathcal{V}}' \re$ to allow for more accurate estimation of the posterior variance of $\beta_k|\gamma_k=1$, since it captures their potential dependence with $\X_k$. These posterior variances -- usually not required or estimated in MAP procedures -- are critical in the empirical Bayes portion of the E-step and facilitate the use of weakly informative priors.
Incorporating expanded parameters $\alpha_k$, $\bomega_k$, and $\tau_k$ in model (\ref{eq.part}), gives
$\Y|(\W_k, \re) =   \X_k\gamma_k \beta_k + \mbf{\V}' \bomega_k + \W_k \alpha_k + \mbf{\mathcal{V}}' \re \tau_k + \mbf{\epsilon}$, 
where to simplify subsequent notation, we use $\bomega_k$ as a coefficient encompassing both $\bomega$ and the expanded parameter. 
Finally, we introduce $\bphi_k = ( \alpha_k, \tau_k)'$ and $\U'_k = [\W_k \quad \mbf{\mathcal{V}}' \re]'$, giving 
\begin{eqnarray}\label{eq.final}
\Y|\U_k =   \X_k\gamma_k \beta_k + \mbf{\V}' \bomega_k + \U'_k \bphi_k + \mbf{\epsilon}.
\end{eqnarray}
Further, we define $\U_{0} = [\W_{0} \quad \mbf{\mathcal{V}}' \re]'$,   where $\W_{0} = \X'(\bgamma\be)$.

We use model (\ref{eq.final}) to perform MAP estimation for $\bgamma\be$, $\mbf{b}$, $\mbf{G}$, and $\sigma^2$. To this end, consider estimating the MAP of $\beta_k|\gamma_k=1$ given $\U_k$ and let $\mbf{\xi}_k=(\beta_k, \bomega_k, \bphi_k)'$ denote the regression parameters in (\ref{eq.final}) given $\gamma_k=1$, with priors $\bomega_k\sim f_\bomega\propto 1$ and $\bphi_k\sim \mbf{f}_\bphi\propto 1$. Standard results can be used to derive the posterior $\mbf{\xi}_k|\Y,\U_k, \sigma^2, \gamma_k = 1  \sim N\left\{\hat{\mbf{\xi}}_k, \sigma^2 (q\Z_k'\Z_k)^{-1} \right\}$, with $\hat{\mbf{\xi}}_k = (\hat{{\beta}}_k, \hat{\bomega}_k, \hat{\bphi}_k)'= (\Z_k'\Z_k)^{-1}\Z_k'\Y$ and $\Z_k = [\X_k \quad \mbf{\V} \quad \U_k]$. Specifically, for parameter of interest $\beta_k$, this gives the posterior $\beta_k|\Y,\U_k,\sigma^2, \gamma_k = 1 \sim N(\hat{\beta}_k, \hat S^2_k )$ where $S^2_k$ denotes the first diagonal element of $\sigma^2 (q\Z_k'\Z_k)^{-1}$. As a result, the MAP of $\beta_k|\gamma_k=1$ is $\hat{\beta}_k$.
  
The parameter expansion and partitioning in (\ref{eq.final}) allows computationally effective MAP estimation via a multi-cycle Expectation-Conditional-Maximization \citep[ECM, ][]{Meng1993} algorithm. A contrast between our approach and PX-EM algorithms is available in \cite{McLain2022}. Section \ref{sec.multicycle} outlines the multi-cycle ECM steps and differentiates LMM-PROBE's computational approach from those of \cite{Schelldorfer2011Aim3}, \cite{Wang2012}, and \cite{Rohart2014}.

\section{PX Multi-cycle ECM 
algorithm}\label{sec.multicycle}

The LMM-PROBE algorithm uses a multi-cycle approach with two M-steps, each followed by an E-step. One complete ECM iteration contains four cycles. We use  $t_{M1}$ and $t_{E1}$  to indicate a quantity estimated in the first M- and E-steps, respectively, with analogous notation ($t_{M2}$, $t_{E2}$) for the second M- and E-steps. The E-steps focus on the expectations of the first two moments of $\U_k$, denoted by $\U_k^{(t_{E1})}$ and $\U_k^{2(t_{E1})}$, respectively, at iteration $t$. As discussed in Section \ref{sec.first.e}, the elements of $\U_k^{(t_{E1})}$ consist of $\W_k^{(t_{E1})}$ and $\re^{(t_{E1})}$, corresponding to the expectations of $\W_k$ and $\re$, respectively. Note that elements containing quantities derived from different cycles are indexed by the latest cycle at which an estimate was updated. Obtaining $\U_k^{2(t_{E1})}$ requires the second moments of $\W_k$ and $\re$ -- denoted by $\W_k^{2(t_{E1})}$ and $\re^{2(t_{E1})}$, respectively --  along with the expectation of $(\W_k\re)$ denoted by $(\W_k \re)^{(t_{E1})}$. The E-steps also update the posterior distributions of $\beta_k$ and $\bgamma$. Finally, the M-steps use these quantities to obtain MAP estimates for $\beta_k$, $\bG$, and $\sigma^2$. Sections \ref{sec.first.e} and \ref{sec.second.m} show how to leverage the properties of the exponential family to compute the moments described in the E-steps, while Sections \ref{sec.first.m} and \ref{sec.second.m} detail the M-steps. 

\subsection{First M-step (M1)}\label{sec.first.m}

In this Section, we describe the first M-step. We introduce subscript $\ell$ to include the $0th$ partition in the M-step, with $\ell \in (0,1,\ldots,p)$, and let $\W=(\W_{0},\W_{1},\ldots,\W_{p})$ and $\bphi = (\phi_0', \phi_1', \ldots, \phi_p')$. For each partition, we maximize the expected complete-data log-posterior distribution of parameters $\bfeta_{\ell} \in \Xi=(\bfeta_0,\bfeta_1,\ldots,\bfeta_P)$, which is denoted by $l(\bfeta_{\ell}|\Y,\U_{\ell}, \mbf{\Gamma}_{\ell})$ where $\mbf{\Gamma}_{\ell}$ are the hyperparameters for $\bfeta_{\ell}$ (Section \ref{sec.pep}). Specifically, we condition on $\Theta = (\btheta_0, \btheta_2, \ldots, \btheta_p)'$, a collection of parameter vectors $\btheta_k = (\beta_k, p_k)$, and maximize the $Q$ function via
\begin{equation}
\hat{\bfeta}_{\ell}^{(t_{M1})} = \mbox{argmax}_{\bfeta_{\ell}} E_{\U_{\ell}}\left\{l(\bfeta_{\ell}|\Y,\U_{\ell}, \mbf{\Gamma}_{\ell})|\Theta_{/\ell}^{(t_{E2}-1)} \right\} \ \mbox{for} \ \ell=0,1,\ldots,p, \nonumber
\end{equation}
where $\btheta_0 = (\bfeta_0, \sigma^2, \bG)$ and $t-1$ represents the iteration prior to $t$. 

The MAP estimates if $\hat{\mbf{\xi}}^{(t_{M1})}_k = \left(\hat{{\beta}}_k^{(t_{M1})}, \hat{\omega}_k^{(t_{M1})}, \hat{\bphi}_k^{(t_{M1})}\right)$ are given by 
\begin{equation*}\label{eq.beta.upd}
\hat{\mbf{\xi}}^{(t_{M1})}_k=\left\{(\Z_{k}'\Z_{k})^{(t_{E2}-1)}\right\}^{-1}\Z^{(t_{E2}-1)'}_{k}\Y,
\end{equation*}
with 
\begin{eqnarray*}\label{xtx}
(\Z_{k}^{\prime} \Z_{k})^{(t_{E2}-1)} = \left[
\begin{array}{ccc}
\X^2_{k} &  \X'_{k} \V & \X_{k}^{\prime} \U^{(t_{E2}-1)}_{k} \\
\V' \X_{k} &  \V'\V & \V' \U^{(t_{E2}-1)}_{k} \\
\U^{(t_{E2}-1)\prime}_{k} \X_{k} & \U^{(t_{E2}-1)\prime}_{k} \V & \U^{2(t_{E2}-1)}_{k}
\end{array}  \right],
\end{eqnarray*}
for $k=1,\ldots,p$. Assuming no additional non-sparse predictors (beyond those included as random effects), the dimensions of $(\Z_{k}^{\prime} \Z_{k})^{(t_{E2}-1)}$ are $2(r+1) \times 2(r+1)$. This is markedly smaller than the $p \times p$ matrices used in other methods \citep{Schelldorfer2011Aim3, Wang2012}. Further, since the updates have a closed form, the M-step does not require additional layers of coordinate descent or parameter tuning \citep{Rohart2014}. 

MAP estimation only gives point estimates of regression parameters. However, the empirical Bayes portion of the E-step (see Section \ref{sec.first.e}) requires estimates of the posterior variance to estimate the hyperparameters. The posterior covariance of $\hat{\mbf{\xi}}^{(t_{M1})}_k$ is estimated via 
\begin{equation*}\label{eq.cov.ksi} 
\sigma^{2(t_{M1})}\left\{(\Z_{k}'\Z_{k})^{(t_{E2}-1)}\right\}^{-1}\left(\Z^{(t_{E2}-1)\prime}_{k}\Z^{(t_{E2}-1)}_{k}\right)\left\{(\Z_{k}'\Z_{k})^{(t_{E2}-1)}\right\}^{-1},
\end{equation*}
where $\hat S^{2(t_{M1})}_k$ denotes its $(1,1)$ element. For elements of $\hat{\bfeta}_0^{(t_{M1})}$, the first M-step results in 
\begin{eqnarray}\label{eq.ksi.zero} 
\hat{\bfeta}_0^{(t_{M1})} = \left\{(\Z_{0}'\Z_{0})^{(t_{E2}-1)}\right\}^{-1}\Z^{(t_{E2}-1)'}_{0}\Y \mbox{, where } \\
(\Z_{0}^{\prime} \Z_{0})^{(t_{E2}-1)} = \left[
\begin{array}{cc}
\V'\V & \V' \U^{(t_{E2}-1)}_{0} \\
\U^{(t_{E2}-1)\prime}_{0} \V & \U^{2(t_{E2}-1)}_{0}
\end{array}  \right] \nonumber.
\end{eqnarray}
The estimates for $\bG^{(t_{M1})}$ and $\sigma^{2(t_{M1})}$ are given by
\begin{eqnarray}\label{eq.G} 
\mbf{G}^{(t_{M1})} &=& \frac{1}{N} \mbf{1}' \re^{2(t_{E2}-1)} \text{, and} \\
\sigma^{2(t_{M1})} 
&=& \frac{1}{M} E\left\{ \mbf{\varepsilon}' \mbf{\varepsilon} | (\Y, \btheta_0^{(t_{E2}-1)} )\right\} \nonumber \\
&=& \frac{1}{M} \mbf{1}' \left[ Tr\left\{ \mbf{\mathcal{V}}' \mbf{\mathcal{V}} \left(\sigma^{-2(t_{M2}-1)} \bPsi^{(t_{M2}-1)}\right)^{-1}\right\} +  \mbf{\varepsilon}^{(t_{M1})'} \mbf{\varepsilon}^{(t_{M1})} \right],
\label{eq.sigma} 
\end{eqnarray}
and $\mbf{\varepsilon}^{(t_{M1})} = (\Y - \V' \hat{\bomega}_0^{(t_{M1})} - \U_{0}^{(t_{E2}-1)\prime} \hat{\bphi}_0^{(t_{M1})} )$.

\subsection{First E-step (E1)}\label{sec.first.e}

In this Section, we describe the four components of the first E-step. First, we update the posterior distribution of $\beta_k$ based on the MAP estimates from the first M-step. Second, we use an empirical Bayes estimator to obtain the posterior expectation $p_k$ of $\gamma_k$. Third, we update the expectation and variance of $\W_{\ell}$. Fourth, we calculate the first two moments of $\re$, the remaining element of $\U_{\ell}$, and obtain $(\W_0 \re)^{(t_{E1})}$. 

To help convergence, we use learning rates $q$ to limit the step size of $\beta^{(t_{E1})}_k$ and $S^{2(t_{E1})}_k$ estimates across iterations, giving
\begin{gather}
\beta^{(t_{E1})}_k = (1-q^{(t_{E1})})\beta^{(t_{E1}-1)}_k  + q^{(t_{E1})}\hat{\beta}^{(t_{M1})}_k, \ \mbox{and} \label{eq.beta} \\ 
S^{2(t_{E1})}_k = \left\{(1-q^{(t_{E1})})(S^{2(t_{E1}-1)}_k)^{-1} + q^{(t_{E1})}(\hat{S}_k^{2(t_{M1})})^{-1}\right\}^{-1} \label{eq.s}.
\end{gather}

The $q^{(t_{E1})}$ values determine the contribution of previous estimates to current estimates, akin to damping or momentum \citep{Min01}. \cite{Henrici1964}, \cite{Jiang2022}\nocite{Min01}, \cite{Varadhan2008}, and \cite{Vehetal20} have discussed nuances and uses of a learning rate, especially in Expectation-Propagation \citep[EP, ][]{MinLaf02}. Our approach uses a value of $q^{(t_{E1})} = \frac{1}{t+1}$.

Second, we estimate $p_k = P\left(\gamma_k = 1| \Y, \pi_0 \right)$. Since $E(\hat{\beta}_k|\gamma_k=0)=0$ and $E(\hat{\beta}_k|\gamma_k=1)\ne 0$, we propose an empirical Bayes estimator based on the `two-group' approach from the multiple testing literature \citep{Efron2001, Efr08, Sto07, Sun2007}. To build the empirical Bayes estimator, we assume test statistics $\mathcal{T}^{(t_{E1})}_k=\beta^{(t_{E1})}_k/S^{(t_{E1})}_k \sim (1-\gamma_k)f_{\mathcal{Z}}(\cdot) + \gamma_k f_1(\cdot)$, where $f_{\mathcal{Z}}(\cdot)$ is a standard normal distribution while $f_1$ is unknown and depends on $f_\beta$. Our proposed estimator also relies on the proportion of null hypotheses, $\pi_0$, so that the probability of a specific test statistic being null is conditional on all observed $\mathcal{T}^{(t_{E1})}$. Combining these elements, the estimator is then
\begin{equation}\label{eq.p}
p^{(t_{E1})}_k = 1-\frac{\hat \pi^{(t_{E1})}_0 f_0\left(\mathcal{T}_k^{(t_{E1})}\right)}{\hat{f}^{(t_{E1})}\left(\mathcal{T}_k^{(t_{E1})}\right)}. 
\end{equation}
In (\ref{eq.p}), we use the \cite{Storey2004} approach to estimate $\pi_0^{(t_{E1})} = \hat{\pi}_0^{(t_{E1})} = \sum_k I({P}^{(t_{E1})}_k\geq \lambda)/\{p \times (1-\lambda)\}$, where ${P}_k^{(t_{E1})}$ is the two-sided p-value for $\mathcal{T}^{(t_{E1})}_{k}$ and $\lambda=0.1$ \citep{Blanchard2009}. 
We estimate $f^{(t_{E1})}$ with Gaussian kernel density estimation on ${\mbf{\mathcal{T}}}^{(t_{E1})} = (\mathcal{T}^{(t_{E1})}_1,\ldots,\mathcal{T}^{(t_{E1})}_p)$. While the empirical Bayes estimator of $p_k$ does not assign distributional assumptions on the priors for $\bgamma$ and $\be$, it assumes $f_1$ is nonnegative and non-increasing monotonic from zero. 

Next, using components $\be$ and $\mbf{p}$, we compute the first and second moments of $\W_{\ell}$. To ease computation, we perform calculations at the cluster $i$ level: 
\begin{eqnarray}\label{eq.EW}
E(W_{i 0}|\Theta^{(t_{E1})}_{/0}) = W^{(t_{E1})}_{i 0} = E\{ \X_{i} (\bgamma\be) |\Theta^{(t_{E1})}_{/0}\} = \X_{i} (\be^{(t_{E1})} \mbf{p}^{(t_{E1})} ).
\end{eqnarray}
The variance of $\W_{\ell}$ is needed for the second moment
\begin{gather}\label{eq.VarW}
Var(W_{i0}|\Theta^{(t_{E1})}_{/0}) = \X_{i}^2 \left\{\be^{(t_{E1})2}  \mbf{p}^{(t_{E1})}(1- \mbf{p}^{(t_{E1})}) \right\} \\
E(W_{i 0}^2|\Theta^{(t_{E1})}) = W^{2(t_{E1})}_{i 0} = Var(W_{i0}|\Theta^{(t_{E1})}_{/0}) + \left(W^{(t_{E1})}_{i 0}\right)^2. \nonumber
\end{gather}
From $W^{(t_{E1})}_{i 0}$ and $W^{2(t_{E1})}_{i 0}$, we get $\W_{\ell}^{(t_{E1})}$ and  $\W_{\ell}^{2(t_{E1})}$, which are used in both M-steps ($M1$ and $M2$) as well as in the moments of $\U_{\ell}$ (E-steps $E1$ and $E2$). 

We end the first E-step by completing the moment calculations for the remaining element ($\re$) of $\U_{\ell}$ as well as $E(\W_{\ell} \re |\Theta^{(t_{E1})}_{/0})$. $E(\U_{i 0}|\Theta^{(t_{E1})}_{/0})$ is 
\begin{gather}\label{eq.eD}
 \U^{(t_{E1})}_{i 0} = 
    \begin{bmatrix}
   W_{i 0}^{(t_{E1})}  \\
    E(\V_i' \re_i|\Theta^{(t_{E1})}_{/0}) 
    \end{bmatrix}  
    = \begin{bmatrix}
    \X_{i} (\be^{(t_{E1})} \mbf{p}^{(t_{E1})} ) \\
    \V_i' \re_i^{(t_{E1})}
    \end{bmatrix},
\end{gather}
where $\re^{(t_{E1})}$ is estimated via 
\begin{equation}\label{eq.EL}
\re^{(t_{E1})} = \bPsi^{-1(t_{M1})} \mbf{\mathcal{V}}' (\Y-  \V' \hat{\bomega}_0^{(t_{M1})} - \W_{0}^{(t_{E1})'} \hat{\alpha}_0^{(t_{M1})}  ).
\end{equation}
Further, $\re^{2(t_{E1})}$ can be obtained via $Var(\re|\W_{0}^{(t_{E1})}, \btheta_0^{(t_{M1})}) + E(\re|\W_{0}^{(t_{E1})}, \btheta_0^{(t_{M1})})^2,$ giving
\begin{eqnarray}\label{eq.EL2}
\re^{2(t_{E1})} = ( \sigma^{-2(t_{M1})} \bPsi^{(t_{M1})})^{-1} + \re^{(t_{E1})} \re^{(t_{E1})\prime},
\end{eqnarray} 
where $\bPsi^{(t_{E1})} = \left(\mbf{\mathcal{V}}' \mbf{\mathcal{V}} + \sigma^{2(t_{M1})} \mbf{\mathcal{G}}^{-1(t_{M1})} \right)$. Finally, $\U_{i 0}^{2(t_{E1})}$ is estimated via \\ $Var\left(\U_{i 0}|\Theta^{(t_{E1})}_{/0} \right) + \U_{i 0}^{(t_{E1})\prime} \U_{i 0}^{(t_{E1})} $, with 
\begin{eqnarray}\label{eq.varD}
Var \left( \U_{i 0}|\Theta^{(t_{E1})}_{/0} \right) = 
    \begin{bmatrix}
     \X_{i}^2 \left\{\be^{(t_{E1})2}  \mbf{p}^{(t_{E1})}(1- \mbf{p}^{(t_{E1})}) \right\}  & Cov(W_{i0},\V_i' \re_i|\Theta^{(t_{E1})}_{/0}) \\
      & \V_i'( \sigma^{-2(t_{M1})} \bPsi_i^{(t_{M1})} )^{-1} \V_i 
    \end{bmatrix} \label{eq.covWL}
\end{eqnarray}
where
$Cov(W_{i0}, \V_i' \re_i|\Theta^{(t_{E1})}_{/0}) =  -\V_i \bPsi_i^{-1(t_{M1})} \V_i^{\prime} Var(W_{i0}|\Theta^{(t_{E1})}_{/0}).$

\subsection{Second M- and E-steps (M2, E2)}\label{sec.second.m}

The second M-step updates the estimates of the $0th$ partition of the first M-step with $\U^{(t_{E1})}$ and $\re^{2(t_{E1})}$ via (\ref{eq.ksi.zero})-(\ref{eq.sigma}). This results in $\hat{\bfeta}_0^{(t_{M2})} = (\hat{\bomega}_0^{(t_{M2})}, \hat{\bphi}_0^{(t_{M2})})$,  $\bG^{(t_{M2})}$, and $\sigma^{2(t_{M2})}$.  In the second E-step (E2), for $\W_{0}^{(t_{E2})}$ and $\W_{0}^{2(t_{E2})}$, we simply write $\W_{0}^{(t_{E2})} = \W_{0}^{(t_{E1})}$ as well as $\W_{0}^{2(t_{E2})} = \W_{0}^{2(t_{E1})}$ (these elements do not change). The moments $\re^{(t_{E2})}$ and $\re^{2(t_{E2})}$ are updated conditional on $ \btheta_0^{(t_{M2})} = (\hat{\bfeta}_0^{(t_{M2})}, \bG^{(t_{M2})}, \sigma^{2(t_{M2})})$ using (\ref{eq.EL})--(\ref{eq.EL2}). The updated $\re^{(t_{E2})}$ along with  $\W_{\ell}^{(t_{E2})}$ give $\U^{(t_{E2})}$. The expectation $(W_{i0} \re_i )^{(t_{E2})}$ is also updated in this cycle using $E(W_{i0} \V_i' \re_i |\mbf{\btheta}^{(t_{E2})}_{/0}) =  \V_i \bPsi_i^{-1(t_{M2})}  \V_i' \left( \Y_i W^{(t_{E2})}_{i0}- W^{2(t_{E2})}_{i0} \right)$. Finally, the updated $\re^{2(t_{E2})}$, $(W_{i0} \re_i )^{(t_{E2})}$, along with  $\W_{0}^{2(t_{E2})}$ give an updated $\U^{2(t_{E2})}$.

\subsection{Algorithm, estimates and predictions}\label{sec.algo}

Algorithm \ref{algo.1} shows the steps from Sections \ref{sec.first.m}, \ref{sec.first.e}, and \ref{sec.second.m} in sequence. Because of the independence between predictors $k$ in $\W_k$, computations are relatively inexpensive, making the algorithm efficient. To initiate the algorithm, we use $\be^{(0_{E1})} = \mbf{0}$, $\re^{(0_{E2})} = \mbf{0}$, $\mbf{p}^{(0_{E1})}=\mbf{0}$, $\sigma^{2(0_{M1})}$ as the sample variance of $\Y$, and $\bG^{(0_{M1})} = \mbf{I}_r$, which results in $\U_{i\ell}^{(0_{E2})}=\U_{i\ell}^{(0_{E2})}=\mbf{0}$. Initiating LMM-PROBE is flexible and other values can be used \citep[see][for other options]{McLain2022}. Convergence is defined by changes in $\W_{0}^{(t_{E2})}$ between successive iterations, with small changes indicating convergence. Formally, let $C_{ij}^{(t_{E2})}= (W^{(t_{E2})}_{ij 0} - W_{ij 0}^{(t_{E2}-1)})^2/ Var(W_{ij0}|\Theta^{(t_{E2-1})}_{/0})$ quantify the change in $W_{ij 0}$ between steps $t$ and $t-1$. Our convergence criterion is $CC^{(t_{E2})} = \log(M)\max_{ij}\left(C_{ij}^{(t_{E2})}\right)$, where index $j$ represents observations. The use of $\log(M)$ adjusts for the effect of sample size on the maximum of Chi-squared random variables \citep{Embetal13}. We stop the algorithm after the first E2 cycle when $CC^{(t_{E2})} < \chi^2_{1,0.1}$, where $1$ and $0.1$ represent the degrees of freedom and the quantile of a Chi-squared distribution, respectively. 

\begin{algorithm}[t]
\caption{LMM-PROBE algorithm sequence}\label{algo.1}
\begin{algorithmic}[]
 \State  Initiate $\U^{(0_{E2})}$ and $\U^{2(0_{E2})}$
    \While{$CC^{(t_{E2})} \geq \chi^2_{1,0.1} $ and $\max(\mbf{p}^{(t_{E1})})>0$} 
   \State  \textbf{M-step M1}
        \State \hspace{0.6 cm} (a) Use $\U^{(t_{E2}-1)}$ and $\U^{2(t_{E2}-1)}$ to estimate $\bfeta^{(t)}_{\ell}$ for $\ell=0,1,\ldots,p$.
        \State \hspace{0.6 cm} (b) Calculate $\bG^{(t_{M1})}$ via (\ref{eq.G}).
        \State \hspace{0.6 cm} (c) Calculate $\sigma^{2(t_{M1})}$ and use to estimate $\hat{{S}}_k^{(t_{M1})2}$ via (\ref{eq.sigma}) for all $k$.
\State \textbf{E-step E1}
        \State\hspace{0.6 cm} (a) Calculate $\beta^{(t_{E1})}_k$ and $S^{2(t_{E1})}_k$ using (\ref{eq.beta}--\ref{eq.s}) for all $k$.
    \State\hspace{0.6 cm} (b) Estimate $\hat{f}^{(t_{E1})}$ and $\hat{\pi}^{(t_{E1})}_{0}$ and use them to calculate $\mbf{p}^{(t_{E1})}$ via (\ref{eq.p}).
    \State\hspace{0.6 cm} (c) Calculate $\mbf{U}^{(t_{E1})}$, $\mbf{U}^{2(t_{E1})}$ using (\ref{eq.EW})--(\ref{eq.covWL}).
\State  \textbf{M-step M2}
        \State \hspace{0.6 cm} (a) Use $\U_0^{(t_{E1})}$ and $\U_0^{2(t_{E1})}$ to estimate $\bfeta^{(t_{M2})}_{0}$.
        \State \hspace{0.6 cm} (b) Calculate $\bG^{(t_{M2})}$ via (\ref{eq.G}).
        \State \hspace{0.6 cm} (c) Calculate $\sigma^{2(t_{M2})}$ via (\ref{eq.sigma}).
\State \textbf{E-step E2}
    \State\hspace{0.6 cm} (a) Set $\mbf{W}^{(t_{E2})}$ and $\mbf{W}^{2(t_{E2})}$ equal to $\mbf{W}^{(t_{E1})}$ and $\mbf{W}^{2(t_{E1})}$, respectively. 
    \State\hspace{0.6 cm} (b) Calculate $\mbf{U}^{(t_{E1})}$, $\mbf{U}^{2(t_{E1})}$ using (\ref{eq.EW})--(\ref{eq.covWL}).
    \EndWhile  \label{roy's loop}
\end{algorithmic}
\end{algorithm}

After convergence, LMM-PROBE results in MAP estimates $\tilde \be$, $\tilde{\mbf{S}}^2$, $\tilde{\mbf{p}}$, $\tilde{\bphi}_0$, $\tilde{\bomega}_0$, $\tilde \sigma^2$, $\tilde \bG$, $\tilde \U_{0}$, $\tilde \U_{0}^2$, as well as the MAP of $\bgamma_k \beta_k$, $\bar \be =  \tilde{\alpha_0} \tilde \be$, if we assume $\bgamma = \mbf{1}$, or  $\bar \be =  \tilde{\alpha}_0\tilde{\mbf{p}} \tilde \be$ if we combine the expected $\bgamma \be$ with the MAP of $\alpha_0$. The full posterior predictive distributions of $W_{l 0}$, $\re_{l}$ and $\Y_{l}$ (for test data $\X_l$ and $\V_l$) are not available with our estimation procedure. As a result, alternative predictions are obtained via $ \tilde{Y}_{l} = \V_{l}'\tilde{\bomega}_0 +  \tilde{\bphi}_0 \tilde{\U}_{l 0}$. The elements of $\tilde{\U}_{l 0}$ (i.e., $\tilde{W}_{l 0}$, $\tilde{\re}_{l}$) are obtained by plugging in new data $\X_{l}$, $\V_{l}$ and other MAP estimates into equations (\ref{eq.eD})--(\ref{eq.varD}).

\subsection{Computational Complexity}\label{sec.comp.anal}

As demonstrated in Section \ref{sec.intro}, LMM-LASSO \citep{Schelldorfer2011Aim3}, LASSO+ \citep[implemented in the \texttt{MMS} package,][]{Rohart2014}, and PGEE \citep{Wang2012} resulted in long computation times for large $p$. Below, we investigate the computation times for more settings and find that estimation with LMM-PROBE is markedly quicker. This is not unexpected considering the computational complexity of the algorithms. For example, each iteration of the PGEE approach requires the inverse of a $p\times p$ matrix and LMM-LASSO requires calculating a $p\times p$ matrix of second derivatives of the objective function. These methods have computational complexity that is lower-bounded by $\Omega(p^3)$ and $\Omega(p^2)$, respectively. As a result, each is computationally expensive and less scalable for large $p$.

LASSO+, which scales better than LMM-LASSO and PGEE, has a similar computational complexity to LMM-PROBE for the \textit{random effect} portion of the algorithm. However, the updates of the fixed effects for LASSO+ require fitting a LASSO model -- which has computational complexity $O\{K_{L}M(p+k)\}$ where $k$ is the number of non-zero coefficients and $K_{L}$ is the number of iterations of the LASSO \citep{Hastie2015}. With no additional non-sparse predictors, the LMM-PROBE M1-step requires $p+1$ linear regression models, each with complexity $O\{(2r+2)^2M+(2r+2)^3\}$, while updates of the E-step require $O(pM)$ complexity. A full update of the fixed effects requires $O[p\{M+(2r+2)^2M+(2r+2)^3\}]$ complexity. As a result, the fixed effect computational complexity of LMM-PROBE and LASSO+ grows linearly with $p$ and $M$, but each LASSO+ iteration requires $K_{L}$ LASSO iterations and is repeated for multiple $\lambda$ penalty values. Further, $K_{L}$, $k$, and the number of iterations required by LASSO+ tend to increase with $p$ hurting the scalability of the procedure.

\subsection{PX-ECM vs EM}\label{sec.conv}

\textcolor{black}{Our use of the PX multi-cycle ECM approach for LMM-PROBE is motivated by practical considerations. First, a standard EM for the Bayesian LMM in (\ref{eq.lmm}) is not always identifiable without imposing additional prior restrictions and requires the inversion of a $p \times p$ matrix in each iteration. Therefore, an ECM approach is beneficial. 
This is formalized in Proposition \ref{prop1}, which proves that in the context of LMM-PROBE, the parameters are not always estimable for a standard EM algorithm, which involves computationally expensive calculations. Second, adding parameter expansion (i.e., PX) can speed convergence \citep{Liuetal98}, and suits our interest in estimating the effect of a predictor $k$ while accounting for (versus correcting for) the impact of all other predictors and random effects. See Remarks 1 and 2 in Section \ref{sec.pxecm.em} of the Supplementary Materials for related results and discussion. As a result, we formulated LMM-PROBE as a PX multi-cycle ECM. Proposition \ref{prop2} asserts that the M-steps of our PX-ECM framework can always be solved.}

\begin{proposition} \label{prop1}
\textcolor{black}{Under a standard EM algorithm, the parameters in the M-step for LMM-PROBE are not always estimable. That is, the maximizer of $Q_{\text{EM}}(\boldsymbol{\eta} \mid \Theta^{(t-1)})$, where $\boldsymbol{\eta} = (\be \ \bomega \ \btau)'$,
is not always unique.} 
\end{proposition}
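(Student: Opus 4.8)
The plan is to treat the random effects $\re$ as the missing data in a textbook EM for model (\ref{eq.lmm}), express $Q_{\text{EM}}(\boldsymbol\eta\mid\Theta^{(t-1)})$ as a single expected least-squares objective in $\boldsymbol\eta=(\be\ \bomega\ \btau)'$, and show that its Hessian is rank-deficient whenever $p$ exceeds the total sample size $M$. Since the proposition asserts only that uniqueness \emph{can} fail, it suffices to exhibit this one regime, so I would fix any configuration with $p>M$ (the high-dimensional setting of interest throughout) and argue there.

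First I would assemble the complete-data log-posterior from (\ref{eq.lmm}) under the expanded mean, conditioning on $\re$. Collecting the three design blocks into $\Z=[\,\X\ \ \mbf{\V}\ \ \mbf{\mathcal{V}}'\re\,]$, the Gaussian kernel reads $-\tfrac{1}{2\sigma^2}\lVert\Y-\Z\boldsymbol\eta\rVert_2^2$. Because the priors $f_\bomega\propto1$ and $\mbf{f}_\bphi\propto1$ are flat, $f_{\sigma^2}\propto(\sigma^2)^{-1}$ carries no curvature in $\boldsymbol\eta$, and the standard-EM comparison imposes no prior restriction on $\be$ (as noted in Section \ref{sec.conv}), this kernel is the only $\boldsymbol\eta$-dependent term. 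Taking the conditional expectation over $\re\mid(\Y,\Theta^{(t-1)})$ then yields $Q_{\text{EM}}(\boldsymbol\eta\mid\Theta^{(t-1)})=-\tfrac{1}{2\sigma^2}\{\boldsymbol\eta'\bA\boldsymbol\eta-2\,\bw'\boldsymbol\eta\}+c$, with $\bA=E(\Z'\Z\mid\Y,\Theta^{(t-1)})$, $\bw=E(\Z'\Y\mid\Y,\Theta^{(t-1)})$, and $c$ free of $\boldsymbol\eta$; the second moments of $\re$ enter only the $\btau$-blocks of $\bA$. The M-step maximizer therefore solves the normal equation $\bA\boldsymbol\eta=\bw$ (a system that entails the $p\times p$ block $\X'\X$, the expensive inversion flagged in Section \ref{sec.conv}) and is unique if and only if $\bA$ is nonsingular.

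The crux is to show $\bA$ is singular and that this yields a continuum of maximizers. I would use the block structure of $\Z$: its $\be$-block is the deterministic matrix $\X$, which does not involve $\re$. For any $v=(v_{\be}\ \mbf{0}\ \mbf{0})'$ with $v_{\be}\in\ker(\X)$ one has $\Z v=\X v_{\be}=\mbf{0}$ for every realization of $\re$, so $\Z'\Z\,v=\mbf{0}$ identically and hence $\bA v=\mbf{0}$. When $p>M$ the map $\X$ sends $\mathbb{R}^p$ into an at-most-$M$-dimensional space, so $\ker(\X)$ is nontrivial, such a nonzero $v$ exists, and $\bA$ is singular. Because $E(\lVert\Y-\Z\boldsymbol\eta\rVert_2^2\mid\Y,\Theta^{(t-1)})\ge0$ for all $\boldsymbol\eta$, the concave $Q_{\text{EM}}$ is bounded above, so $\bw\in\mathrm{range}(\bA)$ and its maximizer set is the nonempty affine space $\boldsymbol\eta^\ast+\ker(\bA)$. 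This set contains the line $\boldsymbol\eta^\ast+\mathbb{R}\,v$ and is therefore not a singleton, proving non-uniqueness for every $p>M$.

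I expect the main obstacle to be organizational rather than conceptual: assembling $\bA$ so that the random-effects second moments land correctly in the $\btau$-blocks while confirming that they leave the deterministic $\X$-block, and hence the null direction $v$, untouched. Once $Q_{\text{EM}}$ is recognized as one expected least-squares problem over all $p$ fixed effects, the non-uniqueness is an immediate consequence of the rank deficiency of $\X$; the only real care is verifying that neither the parameter-expansion term $\mbf{\mathcal{V}}'\re\,\btau$ nor the flat priors inject curvature capable of restoring full rank to $\bA$.
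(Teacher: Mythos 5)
Your argument is internally consistent, but it analyzes the wrong $Q$ function, and the gap sits exactly where you anticipated trouble: there \emph{is} a source of curvature you did not account for, and it restores full rank. In LMM-PROBE the model (\ref{eq.lmm}) has mean $\X_i(\bgamma\be) + \cdots$, so a standard EM must treat the selection indicators $\bgamma$ as missing data alongside $\re$; this is what conditioning on $\Theta^{(t-1)}$ means in the proposition, since $\Theta$ collects the pairs $\btheta_k = (\beta_k, p_k)$ with $p_k = P(\gamma_k = 1 \mid \Y)$. The paper's proof accordingly takes the expectation $E_{\bgamma,\re}$, and the variance of $\gamma_k\beta_k$ under this expectation contributes a diagonal term $\lambda_k = p_k(1-p_k)\sum_i X_{ik}^2$ to your matrix $\bA$, so the M-step maximizer is the ridge-type solution $\hat{\mbf{\eta}} = \left(\tilde\U'\tilde\U + \mbf{\lambda}\I_{p+2r}\right)^{-1}\tilde\U'\Y$. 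Whenever $0 < p_k < 1$ (and $\X_k \neq \mbf{0}$) this term is strictly positive, so your null direction $v = (v_{\be}\ \mbf{0}\ \mbf{0})'$ with $v_{\be} \in \ker(\X)$ no longer satisfies $\bA v = \mbf{0}$: rank deficiency of $\X$ alone does not make $\bA$ singular, and the M-step is typically well defined even when $p \gg M$.

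Consequently, $p > M$ is not the failure regime. The paper shows that estimability fails precisely when the ridge term vanishes on too many coordinates, namely when $\sum_k I(p_k = 1) \geq M + 1 - r$ (predictors with $p_k = 0$ simply drop out of the optimization), which is a \emph{conditional} failure depending on the E-step output rather than on the dimensions alone — that is the actual content of ``not always estimable.'' Your proof, by taking the expectation over $\re$ only, effectively fixes $\bgamma = \mbf{1}$ and proves non-uniqueness for an ordinary high-dimensional linear mixed model. That statement is true but essentially trivial, it does not describe the EM for the spike-and-slab model the proposition concerns, and it would wrongly suggest the standard EM is unconditionally broken in high dimensions, when in fact the marginalization over $\bgamma$ usually regularizes the M-step; the subtlety of when that regularization disappears is the substance of the paper's proof.
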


\begin{proposition} \label{prop2}
\textcolor{black}{Let
$Q_{\text{CM}}^{M1}(\boldsymbol{\eta} \mid \Theta^{(t-1)})$ and $Q_{\text{CM}}^{M2}(\boldsymbol{\eta} \mid {\Theta}^{(t-1)})$ denote the two M-step quantities in the PX-ECM algorithm for LMM-PROBE.
Assuming no perfect collinearity between $\X_k$ and $\V$ for any $k$, the maximizers of $Q_{\text{CM}}^{M1}(\boldsymbol{\eta} \mid {\Theta}^{(t-1)})$ and $Q_{\text{CM}}^{M2}(\boldsymbol{\eta} \mid {\Theta}^{(t-1)})$ always exist and are unique.} 
\end{proposition}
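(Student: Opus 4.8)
The plan is to reduce both claims to positive-definiteness of the expected cross-product matrices that define the closed-form M-step updates. Under the flat priors $\bomega_k\sim f_\bomega\propto1$, $\bphi_k\sim\mbf{f}_\bphi\propto1$ and the Gaussian complete-data model (\ref{eq.final}), $Q_{\text{CM}}^{M1}(\boldsymbol{\eta}\mid\Theta^{(t-1)})$ decouples across partitions $\ell=0,\ldots,p$; for each $k$ it equals a constant minus $\tfrac{1}{2\sigma^2}\big[\mbf{\xi}_k'(\Z_k'\Z_k)^{(t_{E2}-1)}\mbf{\xi}_k-2\mbf{\xi}_k'\Z_k^{(t_{E2}-1)\prime}\Y\big]$, with $\mbf{\xi}_k=(\beta_k,\bomega_k,\bphi_k)'$ and the cross-product taken in E-step expectation over the latent columns $\U_k=[\W_k\ \mbf{\mathcal{V}}'\re]$. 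Since a concave quadratic $-\tfrac12\mbf{\xi}'\bA\mbf{\xi}+\mbf{b}'\mbf{\xi}$ has the unique maximizer $\bA^{-1}\mbf{b}$ precisely when $\bA\succ0$, the proposition collapses to showing $(\Z_k'\Z_k)^{(t_{E2}-1)}\succ0$ for every $k$, together with the analogous statement for the $0$th-partition matrix used in M2.

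To establish positive-definiteness I would use the identity $\mbf{v}'(\Z_k'\Z_k)^{(t_{E2}-1)}\mbf{v}=E\big[\lVert\Z_k\mbf{v}\rVert_2^2\big]\ge0$ for $\mbf{v}=(v_\beta,\mbf{v}_\omega,v_\alpha,v_\tau)$, which vanishes only if $\Z_k\mbf{v}=v_\beta\X_k+\V\mbf{v}_\omega+v_\alpha\W_k+v_\tau\mbf{\mathcal{V}}'\re=\mbf{0}$ almost surely under the E-step law. Writing $\bar\Z_k=[\X_k\ \V\ E(\U_k)]$ and $\mbf{v}_\phi=(v_\alpha,v_\tau)'$, this gives the decomposition $\mbf{v}'(\Z_k'\Z_k)^{(t_{E2}-1)}\mbf{v}=\lVert\bar\Z_k\mbf{v}\rVert_2^2+\mbf{v}_\phi'\mathrm{Cov}(\U_k)\mbf{v}_\phi$. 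The latent covariance $\mathrm{Cov}(\U_k)$ is positive definite because the posterior variance $(\sigma^{-2}\bPsi)^{-1}$ of the random effects is positive definite — a consequence of $\bPsi=\mbf{\mathcal{V}}'\mbf{\mathcal{V}}+\sigma^2\mbf{\mathcal{G}}^{-1}\succ0$, which follows from $\mbf{G}^{-1}\succ0$ — and the mixture variance $\X^2\{\be^2\mbf{p}(1-\mbf{p})\}$ of $\W_k$ from (\ref{eq.VarW}) is nonzero; this forces $\mbf{v}_\phi=\mbf{0}$. The equality condition then reduces to $v_\beta\X_k+\V\mbf{v}_\omega=\mbf{0}$, and the no-perfect-collinearity hypothesis makes $[\X_k\ \V]$ of full column rank, forcing $v_\beta=0$ and $\mbf{v}_\omega=\mbf{0}$. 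Hence $\mbf{v}=\mbf{0}$ and $(\Z_k'\Z_k)^{(t_{E2}-1)}\succ0$.

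For $Q_{\text{CM}}^{M2}(\boldsymbol{\eta}\mid\Theta^{(t-1)})$ the regression update acts on the $0$th partition through (\ref{eq.ksi.zero}), whose matrix is the E-step expected Gram matrix of $[\V\ \U_0]$; the same two facts (with full-rank $\V$ replacing the collinearity condition) give positive-definiteness, so $\hat{\bfeta}_0^{(t_{M2})}$ is unique. The accompanying updates $\bG^{(t_{M2})}$ and $\sigma^{2(t_{M2})}$ are the explicit, single-valued maximizers (\ref{eq.G})–(\ref{eq.sigma}) of the Wishart- and inverse-gamma-type complete-data posteriors, with $\sigma^2>0$ and $\bG\succ0$ whenever the second moment $\re^{2(t_{E2}-1)}$ is of full rank; these therefore exist and are unique as well.

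The main obstacle is the strict positive-definiteness of the latent block $\mathrm{Cov}(\U_k)$ in the second paragraph: the decomposition argument needs $\mathrm{Cov}(\U_k)\succ0$, which can fail when the model variance $\X^2\{\be^2\mbf{p}(1-\mbf{p})\}$ of $\W_k$ vanishes (e.g. every $p_k\in\{0,1\}$, or $\be=\mbf{0}$). In that event I would fall back on the mean term, noting $E(\W_k)$ still enters $\bar\Z_k$, so uniqueness persists unless the latent column carries no E-step information at all. Genuine non-uniqueness arises only at degenerate inputs such as the initialization $\U^{(0_{E2})}=\mbf{0}$, which I would exclude by restricting attention to iterations past the first nondegenerate E-step — the regime in which the statement's ``always'' is to be read, and where nondegeneracy of the random-effect column is in any case guaranteed by positive-definiteness of $\bPsi$ tracing back to the Wishart prior on $\mbf{G}^{-1}$.
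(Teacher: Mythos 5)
Your argument is correct in substance and, at its core, takes the same route as the paper: both reduce the proposition to invertibility of the low-dimensional expected Gram matrices that define the closed-form CM updates. The difference is in how that invertibility is established. The paper's proof is essentially an assertion --- it writes the M1 update as $\hat{\mbf{\xi}}_{\ell}=\{(\Z_{\ell}'\Z_{\ell})^{(t-1)}\}^{-1}\Z_{\ell}^{(t-1)\prime}\Y$, states that this is identifiable provided $\X_k$ and $\V$ are not perfectly collinear, and notes that M2 (the $0$th partition) is handled identically; the latent blocks of $(\Z_{\ell}'\Z_{\ell})^{(t-1)}$ are never examined. You, by contrast, prove positive definiteness through the decomposition $\mbf{v}'(\Z_k'\Z_k)\mbf{v}=\lVert\bar{\Z}_k\mbf{v}\rVert_2^2+\mbf{v}_\phi'\mathrm{Cov}(\U_k)\mbf{v}_\phi$, which separates what the collinearity hypothesis can deliver (non-degeneracy in the $(\beta_k,\bomega_k)$ directions) from what it cannot (non-degeneracy in the expanded-parameter directions $\bphi_k$). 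That extra care surfaces a real issue the paper is silent about: no-collinearity alone does not give $\mathrm{Cov}(\U_k)\succ 0$. Indeed, at the paper's own initialization $\U^{(0_{E2})}=\U^{2(0_{E2})}=\mbf{0}$ --- and more generally whenever $\be=\mbf{0}$ or every $p_k\in\{0,1\}$ --- the $Q$ function is constant in $\bphi_k$, so maximizers exist but are not unique, and the proposition's ``always'' fails as literally stated. Your fix (restricting attention to iterations following a non-degenerate E-step, where $\bPsi\succ 0$ makes the random-effect column non-degenerate and the empirical-Bayes estimates keep $p_k\in(0,1)$) is exactly the qualification the paper implicitly assumes without saying so. Two small clean-ups would tighten your write-up: strict positive definiteness of the summed latent covariance needs more than positive diagonal blocks, so the cross term $\mathrm{Cov}(W_{i0},\V_i'\re_i)$ should be handled explicitly (e.g., by a Schur-complement or conditional-variance argument); and for M2 you should state that $\V$ itself has full column rank, an assumption the paper also leaves implicit.
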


\textcolor{black}{The proofs of Propositions \ref{prop1} and \ref{prop2} can be found in Section \ref{sec.pxecm.em} of the Supplementary Material. Taken together, these propositions motivate the use of the PX-ECM algorithm. In Section \ref{sec.sm.conv} of the Supplementary Material, we provide further convergence assessments of PX-ECM.}

\section{Simulation Studies}\label{sec.sim}

We performed numerous simulation studies to evaluate the performance of LMM-PROBE with regard to the estimation of fixed effects, variance components, and predictions of future values. We defined the outcome as $\Y = \X (\bgamma \mbf{\beta})  + \V' \mbf{\omega}^\ast  + \mbf{\mathcal{V}}' \re + \mbf{\epsilon}$, where $\mbf{\epsilon} \sim N(0, \sigma^2 \I)$, and $\re \sim N(0, \bG)$. To generate data with dependence within $\mbf{\gamma}$ and $\X$ we used a Gaussian random field (GRF) where the covariance is a squared exponential function \citep{Schlather2015}. $\bgamma$ represents the elements of the GRF greater than a threshold such that $\pi = |\bgamma|/p$, where $\pi$ is a simulation setting. 
For each cluster $i$, we generated $n_i$ observations, of which the first half were used in the training set while the remaining half were used to estimate prediction error. 

Simulation settings varied across different values of $p$, $N$, $M$, $\pi$, $\bG$, $\sigma^2$, $\mbf{\beta}$ ($\bomega$), and $r$. Specifically, we used $p \in (15^2, 25^2, 75^2)$, $\pi \in (0.05, 0.1)$, $\mbf{\beta}, \bomega \in (0.50, 0.75)$, and $r \in (1, 2)$ random effects where $\V_{i} = \mbf{1}'$ for $r = 1$ and $\V_{i} = [\mbf{1}' \ (1, \ldots, n_i)']$ for $r=2$. The remaining values $(N,n_i,\sigma^2)$ were based on $p$. For $p = 15^2$, $N = 50$, $n_i = 6$, and $\sigma^2 \in (\sigma^2_1, \sigma^2_2) = (10, 15)$, for $p = 25^2$, $N = 100$, $n_i = 6$, and $\sigma^2 \in (10, 15)$, and for $p = 75^2$, $N = 250$, $n_i = 8$, and $\sigma^2 \in (100, 150)$. Additionally, two values of the random effect covariance matrix $\bG$ were considered for each setting. These values depended on $p$ and $r$,
\begin{equation}\label{cases}
\bG \in  (G_1,G_2) =\begin{dcases}
    (5, 10)  & \text{ if } p=15^2, r=1 \text{ or } p=25^2, r=1 \\
    \left(\renewcommand\arraystretch{0.75}
    \begin{bmatrix}
    4 & 0 \\ 
    0 & 2.5
    \end{bmatrix} , 
    \begin{bmatrix}
    6 & 1 \\
    1 & 3.5
    \end{bmatrix}\right) & \text{ if } p=15^2, r=2 \text{ or } p=25^2, r=2 \\
    (50, 100)  & \text{ if } p=75^2, r=1 \\
     \left( \renewcommand\arraystretch{0.75}
    \begin{bmatrix}
    40 & 0 \\ 
    0 & 25
    \end{bmatrix} , 
    \begin{bmatrix}
    60 & 10 \\
    10 & 35
    \end{bmatrix}\right) & \text{ if } p=75^2, r=2. \nonumber
\end{dcases} 
\end{equation}
Note that $||G_1||<||G_2||$ for any norm $||\cdot||$.

We compared LMM-PROBE to PROBE \citep{McLain2022}, two types of linear mixed modeling with LASSO penalty \citep[LMM-LASSO, LASSO+, ][respectively]{Schelldorfer2011Aim3, Rohart2014}, LASSO \citep{Tib96}, and penalized general estimating equations \citep[PGEE, ][]{Wang2012}. LMM-PROBE, PROBE, and LASSO simulations included 500 iterations for all simulation settings. LASSO+, LMM-LASSO, and PGEE simulations included 250 iterations due to computation time requirements. Further, LASSO+ only covered settings where $p \in (15^2, 25^2)$ while LMM-LASSO and PGEE covered settings where $p = 15^2$, as these methods were not feasible with larger $p$. All LMM-PROBE simulations used $\chi^2_{1,0.1}$ to evaluate convergence and $q^{(t)} = \frac{1}{t+1}$ as the learning rate. For LMM-LASSO and LASSO+, we used five-fold CV to select the tuning parameter that minimized the Bayesian Information Criterion. For LASSO and PGEE, we used the default CV in the $\verb|glmnet|$ \citep{glmnet} and $\verb|PGEE|$ \citep{pgee} packages. For parameters that did not require tuning, we used package defaults. Due to the difference in the results for PGEE compared to other methods, figures including PGEE results are provided in Supplementary Materials only. 

To examine the combined fixed and random effect estimates, we calculated Mean Squared Prediction Error (MSPE) of test data. Specifically, we calculate $\tilde{\Y}_l$ based on $\X_l$ and $\V_l$ for $N$ \textit{test} subjects not used in the fitting of the model. We then averaged the squared prediction errors, $(\Y_l - \tilde{\Y}_l)^2$. Figure \ref{fig.mse.simul} shows MSPEs for LMM-PROBE compared to other methods, for various $(p, \sigma^2,\bG)$ settings with $r = 2$ and $\pi = 0.1$. LMM-PROBE had the lowest MSPE across all simulation settings. Further, PROBE and LASSO had lower MSPEs than LMM-LASSO and LASSO+ across all settings. Comparing the results for $\bG = G_1$ to $G_2$, it appears that the MSPE for LMM-PROBE is more robust to increasing the variance of random effects, whereas the MSPE for comparison methods increased more markedly. An increase in residual variance ($\sigma^2$ = $\sigma^2_1$ to $\sigma^2_2$) resulted in a slight increase in MSPE. MSPE results remained largely the same when examining settings where $r = 1$ and $\pi = 0.1$, shown in Supplementary Materials Figure \ref{fig.mspe.1re}. In all settings except one, LMM-PROBE displayed the lowest MSPE, followed by PROBE and LASSO, with LMM-LASSO and LASSO+ displaying the highest MSPEs (Figure \ref{fig.mspe.1re}). This setting where LMM-PROBE displayed a higher MSPE occurred in the scenario with the lowest Intracass Correlation Coefficient (ICC) among all tested (ICC: $50/(50+150)$, for factors $\bG =50 $,  $\sigma^2 = 150$, $\mbf{\beta} = 0.5$). 

\begin{figure}[ht]
\centering
\includegraphics[width=7in]{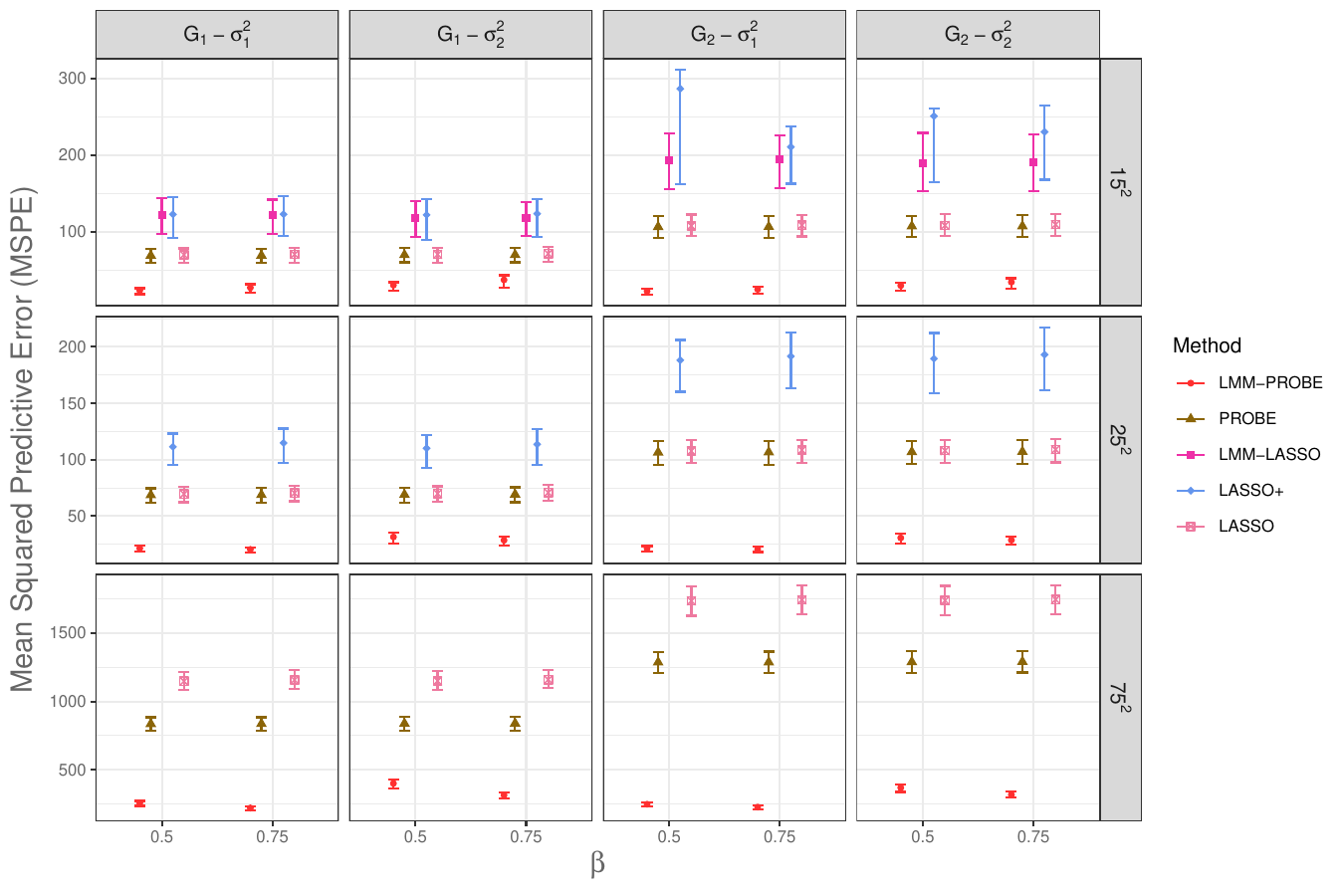} \\
\caption{Mean Squared Predictive Errors (MSPE) for LMM-PROBE and four comparison methods across various simulation settings, including $p$, $\sigma^2$, $\bG$, and $\mbf{\beta}$ values, when $r = 2$ and $\pi = 0.1$. The MSPEs are based on both fixed and random effects. Vertical lines display the interquartile range of the MSPEs. Comparison methods LMM-LASSO and LASSO+ are methods for linear mixed models, while LASSO and PROBE are methods for linear models.  \label{fig.mse.simul}}
\end{figure} 

In the above comparisons, LMM-PROBE has an inherent advantage of using random effect predictions over non-LMM approaches. As a result, we also examined the MSEs focusing on the fixed effects only ($\X_{i} \bgamma \mbf{\beta} - \X_{i} \tilde{\alpha_0} \tilde{\mbf{p}}\tilde{\be}$), displayed in Figure \ref{fig.mse.simul.fix} for $r = 2$ and $\pi = 0.1$. The results followed the same trends as in Figure \ref{fig.mse.simul}, with LMM-PROBE showing the lowest MSEs across simulation settings. The Supplementary Materials Section \ref{sec.sm.sim} includes Median Absolute Deviations (MADs) in Figure \ref{fig.mad.1re}, where trends were highly similar to the MSE trends. 
\begin{figure}[ht]
\centering
\includegraphics[width=7in]{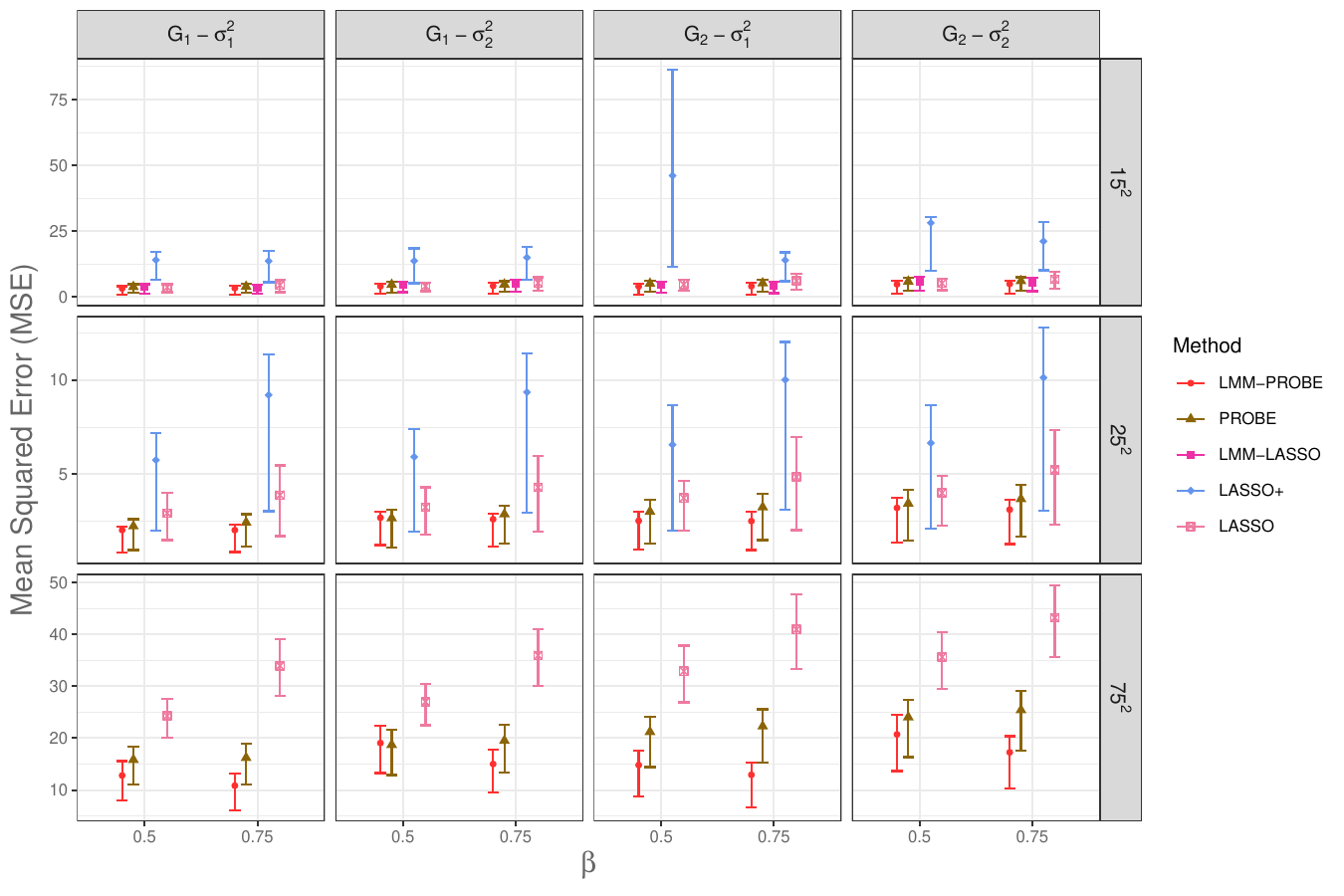} \\
\caption{Mean Squared Errors (MSE) for LMM-PROBE and four comparison methods across various simulation settings, including $p$, $\sigma^2$, $\bG$, and $\mbf{\beta}$ values, when $r = 2$ and $\pi = 0.1$. The MSEs are based on fixed effects only. Vertical lines display the interquartile range of the MSEs. Comparison methods LMM-LASSO and LASSO+ are methods for linear mixed models, while LASSO and PROBE are methods for linear models.  \label{fig.mse.simul.fix}}
\end{figure} 
Further, we examined the Mean Squared Error (MSE) of the total marginal variance $\mbf{\mathcal{V}}' \mbf{\mathcal{G}} \mbf{\mathcal{V}} + \sigma^2 \I$ for the methods that estimate $\bG$ and $\sigma^2$ (LMM-PROBE, LMM-LASSO, LASSO+). Figure \ref{fig.mse.totvar} shows that when $p = 15^2$, LMM-PROBE and LMM-LASSO had total variance estimates with similar error, which was lower than that of LASSO+, especially in simulation scenarios where $r = 2$. For $p = 25^2$, the MSE was similar between LMM-PROBE and LASSO+, with a slightly higher MSE for LMM-PROBE for some $r = 2$ settings. 
\begin{figure}[ht]
\centering
\includegraphics[width=7in]{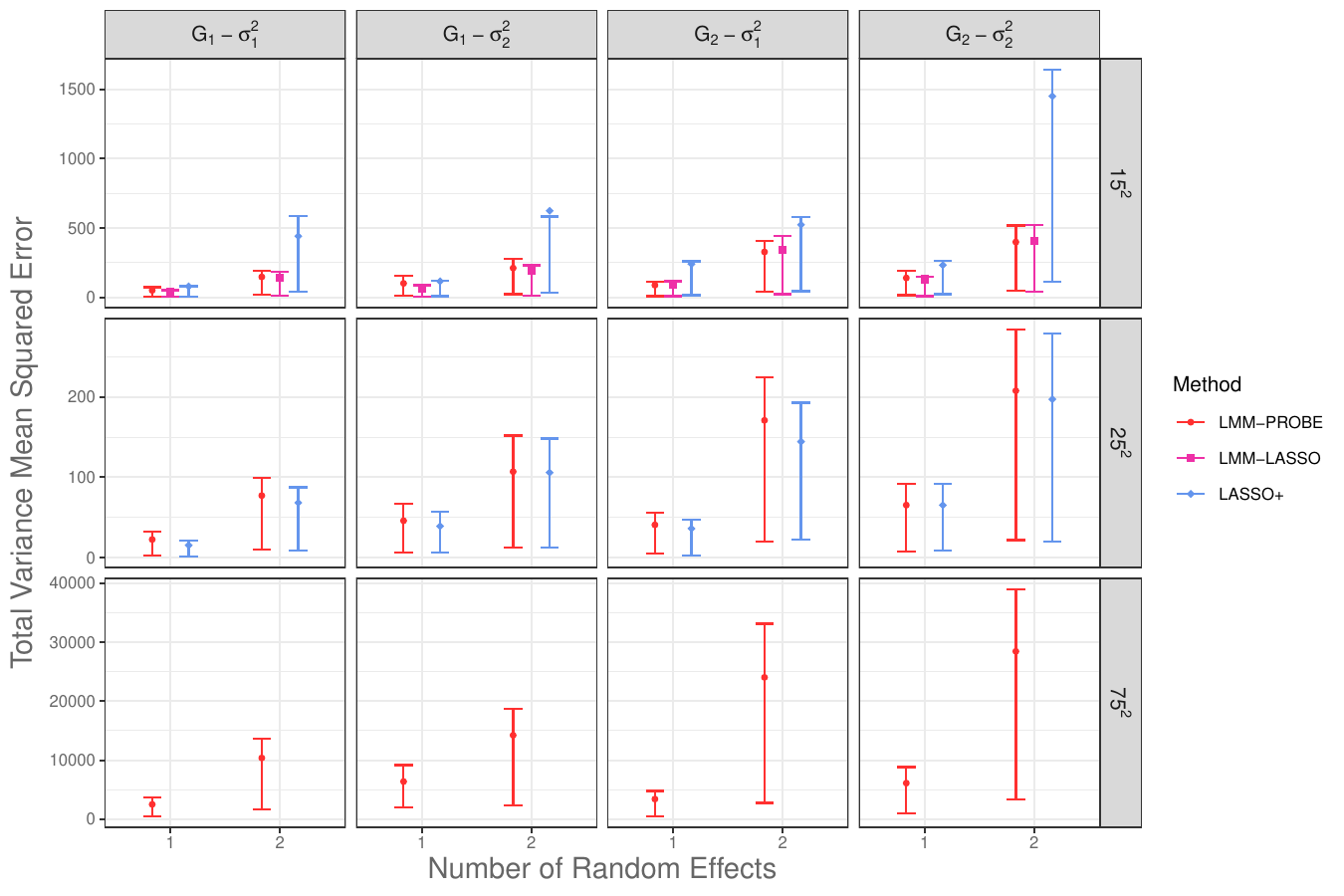} \\
\caption{Mean Squared Error (MSE) of the total model variance ($\mbf{\mathcal{V}}' \mbf{\mathcal{G}} \mbf{\mathcal{V}} + \sigma^2 \I$) for LMM-PROBE, LMM-LASSO, and LASSO+ across various simulation settings, including $p$, $\sigma^2$, $\bG$, and $r$ values, when $\mbf{\beta} = 0.75$ and $\pi = 0.1$. Vertical lines display the interquartile range of the MSEs. \label{fig.mse.totvar}}
\end{figure} 

\textcolor{black}{We also examined the variable selection performance of LMM-PROBE and other methods. Results including sensitivity, specificity, and the Matthews Correlation Coefficient \citep[MCC, ][]{Matthews1975} are displayed in Figure \ref{fig.varselect}. Note that variable selection results were similar across the different levels of $\sigma^2$ and $\bG$; therefore, Figure \ref{fig.varselect} shows results when $\sigma^2 = \sigma^2_1$ and $\bG = \bG_1$. For LMM-PROBE and PROBE, a predictor was selected if $\tilde{p}_k > 0.5$. For all other methods, a predictor was selected if its estimated $\tilde{\beta}_k \ne 0$. PGEE resulted in a sensitivity of 1 and a specificity of 0 for each setting. For brevity, this method is omitted from the figures. Figure \ref{fig.varselect} shows that when $p = 15^2$ and the proportion of signals was lower ($\pi = 0.05$), LMM-PROBE had a similar sensitivity as PROBE, and outperformed other methods. For $\pi = 0.1$, the sensitivity of LMM-PROBE did not match that of PROBE but exceeded all other methods, including methods for LMMs. As the number of random effects went from $r = 1$ to $r = 2$, LMM-PROBE had higher sensitivity than PROBE. LMM-PROBE had higher specificity in all simulation settings, especially in settings where the proportion of signals was higher ($\pi = 0.1$). This led to LMM-PROBE having a higher MCC in all settings but two. Additional variable selection results are available in Supplementary Materials Figures \ref{fig.varselect625} ($p = 25^2$) and \ref{fig.varselect5625} ($p = 75^2$) in Section \ref{sec.sm.sim}. }

Finally, Figure \ref{fig.time} shows the average computation time in minutes per simulation iteration. Overall, the running time for LMM-PROBE was markedly faster than the other approaches that also modeled random effects. Specifically, LMM-LASSO (when $p = 15^2$) and LASSO+ (when $p = 25^2$) required extensive computation time, averaging nearly one hour per iteration. For LASSO+, computation time was notably impacted by the number of random effects $r$. PGEE required over five minutes per iteration for $p = 15^2$, but did not scale well and could not be used for larger $p$ settings. As expected, PROBE and LASSO were the most time-effective. LMM-PROBE required seven minutes on average when $p = 75^2$, much less than some comparison methods when $p$ was lower $(15^2, 25^2)$. All simulations were performed on an Intel Xeon 8358 Platinum processor with 2.6GHz CPU and 128 GB memory.

\begin{figure}[ht]
\centering
\includegraphics[width=7in]{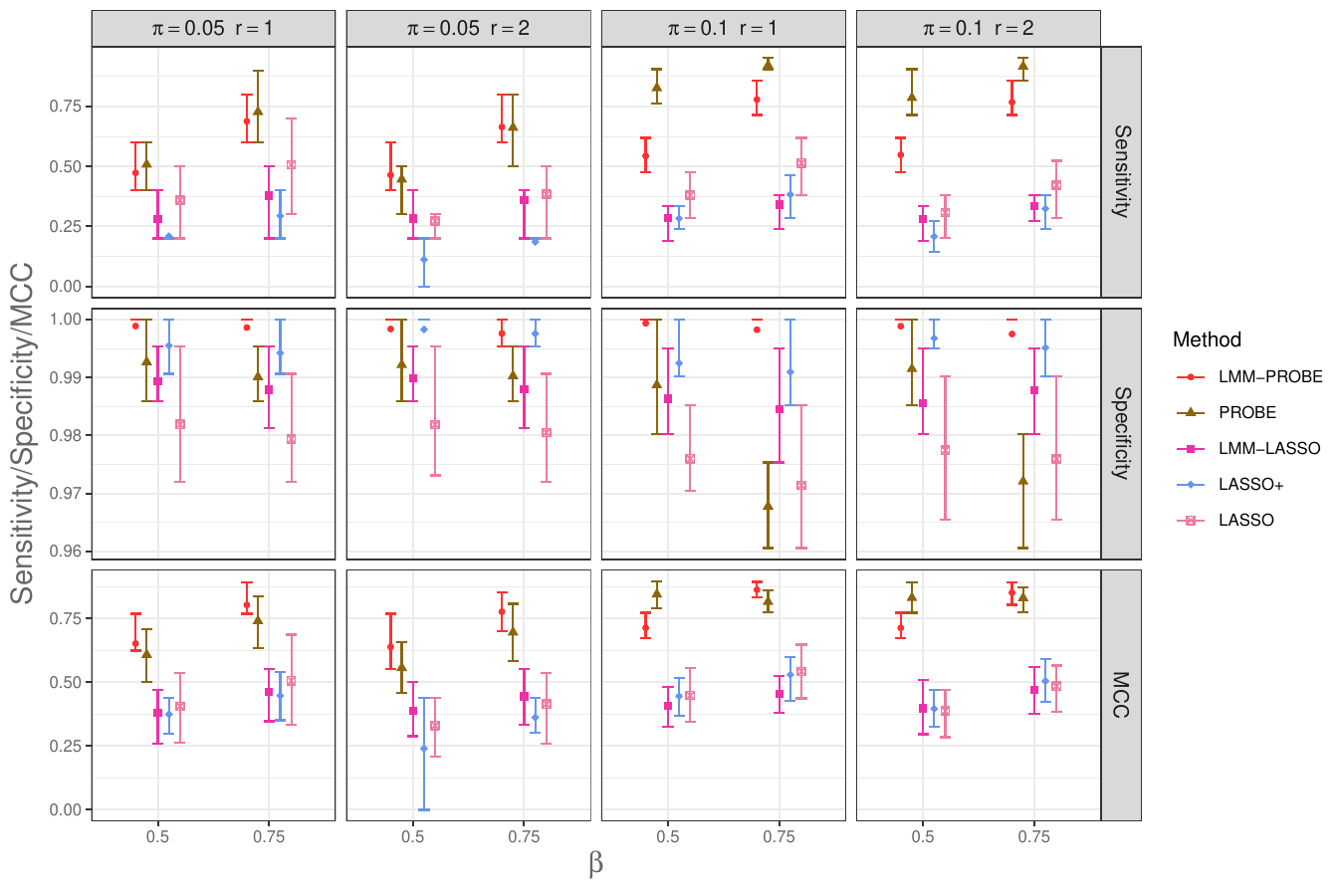} \\
\caption{Sensitivity, Specificity, and the Matthews Correlation Coefficient (MCC) for LMM-PROBE and five comparison methods across various simulation settings, including $\pi$, $r$, and $\be$ values, when $\sigma^2 = \sigma^2_1$, $\bG = \bG_1$, and $p = 15^2$. Vertical lines display the interquartile range of the sensitivity, specificity, and MCC. Comparison methods LMM-LASSO and LASSO+ are methods for linear mixed models, while LASSO and PROBE are methods for linear models. \label{fig.varselect}}
\end{figure} 

\begin{figure}[ht]
\centering
\includegraphics[width=7in]{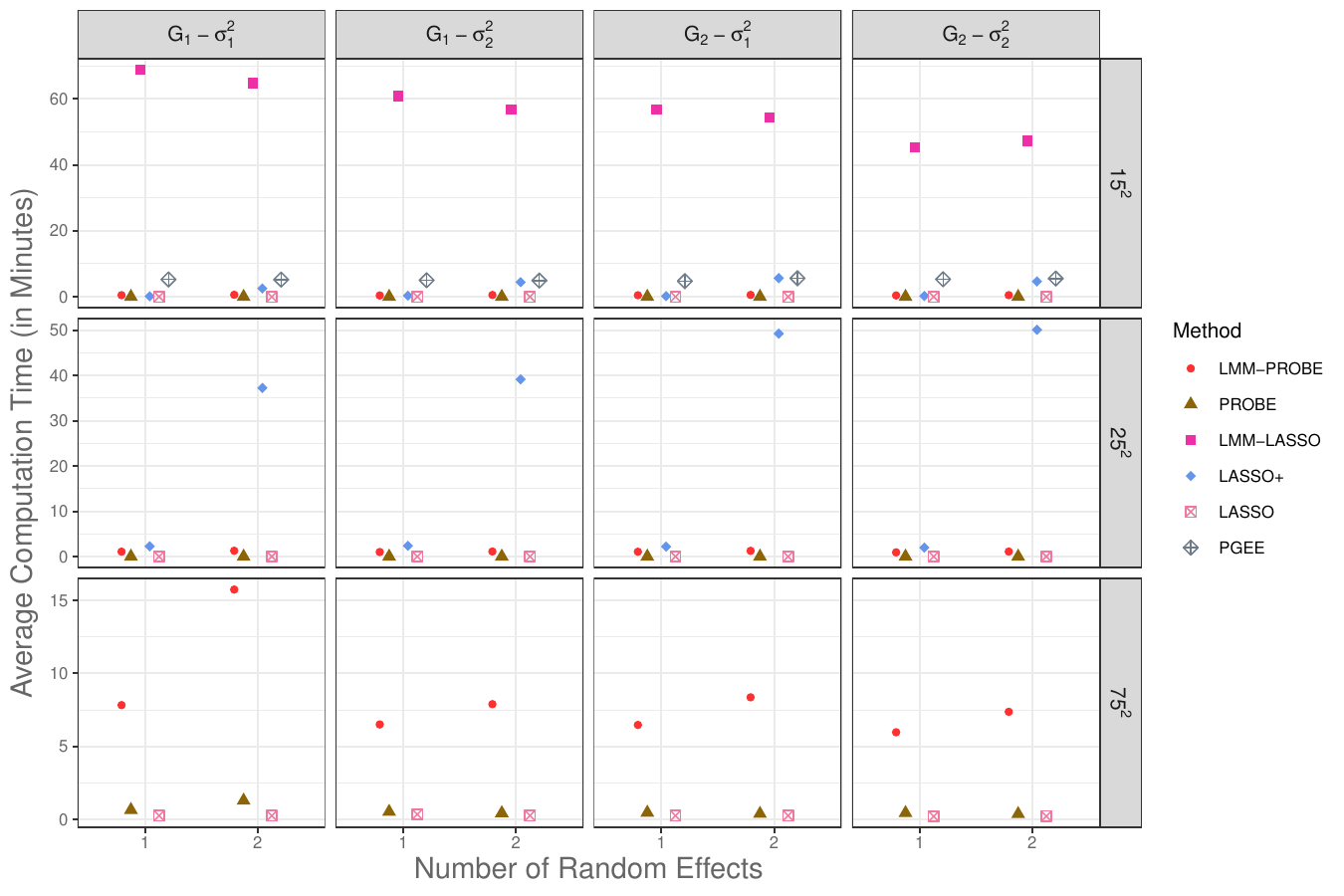} \\
\caption{Average computation time in minutes per simulation iteration for LMM-PROBE and five comparison methods across various simulation settings, including $p$, $\sigma^2$, $\bG$, and $r$ values, averaged over $\mbf{\beta}$ and $\pi$. \label{fig.time}}
\end{figure}

\section{Real Data Application}\label{sec.data}

We showcase the performance and characteristics of the LMM-PROBE method on a study of systemic lupus erythematosus (SLE) with a cohort of 158 pediatric patients receiving treatment at rheumatology clinics at Texas Scottish Rite Hospital for Children and Children’s Medical Center Dallas \citep{Banchereau2016}. SLE is a systemic autoimmune disease with recurring flares that can cause damage to organs over time. In recent years, studies have focused on ways to identify and diagnose SLE. The study conducted by \cite{Banchereau2016} was key as it introduced high-dimensional longitudinal measurements of SLE biomarkers. A critical finding in \cite{Banchereau2016} is the overexpression of the IFI6 biomarker (`Interferon alpha-inducible protein 6') in SLE patients. Until transcription of IFI6 becomes available as a standard diagnostic tool of SLE in clinical practice, practitioners rely on factors related to overexpression of IFI6 to guide their diagnosis of SLE.
The \cite{Banchereau2016} data is available from the Gene Expression Omnibus database hosted by the National Center for Biotechnology Information, using accession number GSE65391. 

Our analysis examines genetic and clinical predictors of IFI6 expression levels. We use 15386 predictors representing gene expression data of blood sample components (e.g., cells related to disease sites or the lymphatic system). The dataset also includes 38 clinical predictors, representing complete blood count data, demographic information, and symptomatology. We retain patients with complete genetic and clinical observations and those patients with observations beyond the baseline visit ($N=125$). Our final dataset included each patient's first and last two visits, except for patients with only two visits ($n = 353$). As in Simulation Section \ref{sec.sim}, our analysis included the PROBE and LASSO methods for comparison. However, we did not consider the LMM-LASSO, LASSO+, or PGEE methods due to extensive computational time for $p = 15386 + 38$ predictors. Note that PROBE and LASSO are not methods that account for the presence of random effects. We used the defaults for each method for parameters such as thresholds for convergence and the number of iterations. 

\begin{figure}[t]
\centering
\includegraphics[width=7in]{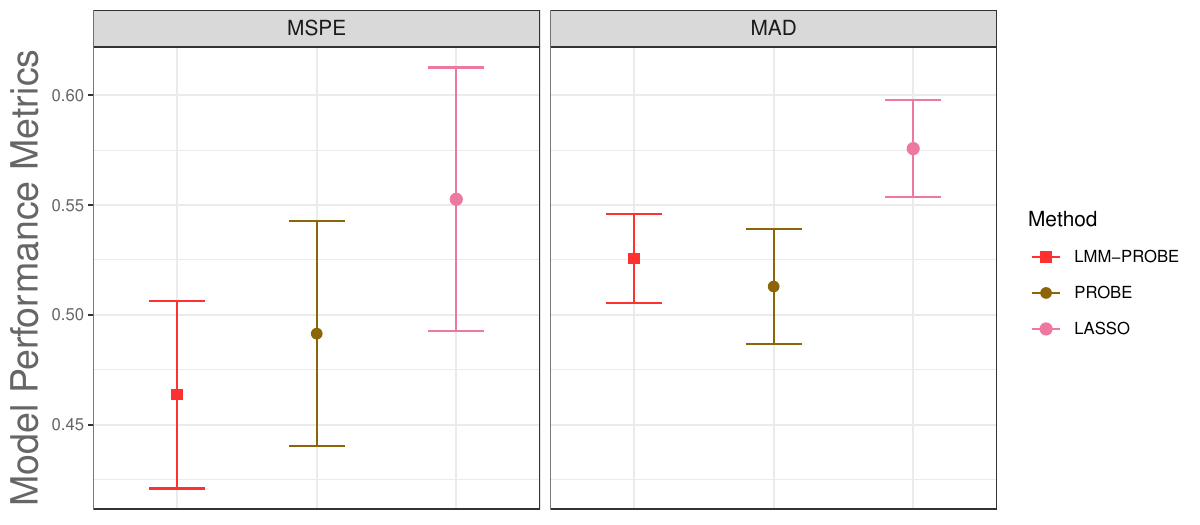} \\
\caption{Mean Squared Predictive Errors (MSPE) and Median Absolute Deviations (MAD) for LMM-PROBE and two comparison methods. Vertical lines represent $\pm$ the standard error of MSPE or MAD, divided by $\sqrt{5}$, based on the number of Cross-Validation folds. \label{fig.mse.example}}
\end{figure}

To evaluate the performance of LMM-PROBE, we used MSPEs and MADs resulting from five-fold CV. In the CV, we balanced patients across the folds, meaning a patient's observations were all in the same fold. At a given iteration of the CV, 80\% of the clusters were in training folds, and 20\% were in the validation-test fold. In models with a random intercept only, we used the validation-test fold to predict the random effects and calculate MSPEs. When fitting models with a random slope for time, we split the validation-test fold into validation (with two of the \textit{time} values for each cluster in the fold, i.e., 1--2) and testing (with the remaining \textit{time} value, i.e., 3) subfolds. We used the validation subfold to obtain the predicted random effects and the testing subfold to calculate MSPEs. For LASSO, we performed an additional five-fold CV using the training folds for parameter tuning.

The initial steps of our analysis indicated that a model with a random intercept (for the patient) and slope (for time) resulted in overfitting (e.g., excessively small random slope variance). Subsequent modeling included only a random intercept term for patients. Figure \ref{fig.mse.example} shows MSPEs as well as MADs. For MSPEs, LMM-PROBE performed best while LASSO ranked last, and the range of MSPE values across the CV folds was wider for PROBE and LASSO compared to LMM-PROBE. Trends in MADs results were similar to those of MSPEs. However, PROBE had the lowest MAD. Supplementary Materials Figure \ref{fig.resvar.example} shows estimates for the within- and between-cluster variances, $\tilde{\sigma}^2$ and $\bG$. For PROBE and LASSO, $\tilde{\sigma}^2$ is the model's residual variance estimate since these methods do not delineate within-unit and between-unit variation. LMM-PROBE had the lowest estimates of $\tilde{\sigma}^2$ across CV folds (on average $0.1$). The average random effect variance $\tilde{\bG}$ estimate was approximately $0.02$ giving an average ICC of 0.15. Overall, LMM-PROBE captured both the residual and random effect variances and provided a better MSPE than PROBE and LASSO, which do not estimate between-cluster variance. 

We examined the average number of predictors selected by each method. For LASSO, a predictor was selected if $\beta_k \ne 0$ while for PROBE and LMM-PROBE, a predictor was selected if $\Tilde{p}_k > 0.5$. LMM-PROBE, PROBE, and LASSO selected 5, 9, and 167 predictors on average, with LMM-PROBE selecting the fewest predictors across CV folds (Supplementary Materials Figure \ref{fig.selected.example}). LMM-PROBE outperformed LASSO and PROBE in this application where it is important to identify strong predictors of IFI6 protein. The most important predictors of IFI6 found by LMM-PROBE were the FBXL19, VAMP2, OR2B2, HIST2H4A, and NIN genes. These genes have been associated with protein or nucleotide interactions and are involved in cell activity changes \citep{Banchereau2016}. A second data analysis example is provided in Supplementary Materials Section \ref{sec.sm.da}.

\section{Discussion}\label{sec.disc}

In this study, we presented a computationally effective and novel method with inferential and predictive capabilities called LMM-PROBE. This new method performs high-dimensional linear mixed modeling even when $p$ reaches the `ultra' high dimensions. An innovative aspect of LMM-PROBE is its partitioning and parameter expansion in the Bayesian context. The mean of the posterior distribution of the regression coefficients is akin to the traditional Bayesian spike-and-slab framework \citep{MitBea88, George1997, Ishwaran2005}, and benefits from a closed-form Gaussian distribution for coefficients of predictors with $\gamma_k \ne 0 $ and a mass point at $0$ for coefficients of predictors with $\gamma_k = 0$. LMM-PROBE leverages these properties and formulates a framework where a) estimation for important parameters requires minimal prior assumptions, b) parameter expansion and partitioning results in computationally effective estimation through an ECM algorithm, which in turn c) allows simultaneous variable selection, coefficient estimation, and prediction in mixed-effects settings.

\textcolor{black}{The literature on (ultra) high-dimensional linear mixed effects regression is still growing. Many of the new proposals are shrinkage methods \citep[e.g.,][]{Bondell2010, Fan2012Aim3, Groll2014, Ibrahim2011, Opoku2021, Sholokhov2024} or Bayesian approaches \citep[e.g.,][]{Kinney2007, Degani2022, Zhou2013}. Our work focused mostly on proposals in the shrinkage area. Well established approaches include the work of \cite{Bondell2010}, \cite{Ibrahim2011}, \cite{Peng2012}, \cite{Fan2012Aim3}, \cite{Chen2003}, as well as \cite{Delattre2020}.} These proposals use shrinkage methods on the fixed and random components via well-known penalty types (LASSO, SCAD, etc.) on the fixed effects and decompositions of $\bG$ to perform random effect selection. However, they do not provide software implementation in \texttt{R}. Methods such as LMM-LASSO \citep{Schelldorfer2011Aim3}, LASSO+ \citep{Rohart2014}, and PGEE \citep{Wang2012} offer software packages, but these implementations suffer from scalability issues due to large matrix operations and nested iterative processes. Figure \ref{fig.time} illustrated the large discrepancies in computation time between LMM-PROBE and LASSO+, LMM-LASSO, and PGEE. As discussed in Section \ref{sec.comp.anal}, this is not unexpected, given the differences in computational complexity of the estimation algorithms.

Our simulations found that compared to LASSO+, LMM-LASSO, and PGEE -- all methods that can be used with non-independent data -- LMM-PROBE demonstrated stronger predictive abilities (lower MSPEs and MADs) for predictions based on either fixed effects only or on fixed and random effects. As expected, LMM-PROBE performed better than variable selection methods for fixed effects only, such as LASSO and PROBE. In conclusion, we proposed a novel approach for (ultra) high-dimensional linear mixed-effect modeling with a software implementation in the \verb|R| package \verb|lmmprobe|. LMM-PROBE is flexible in that it uses empirical Bayes estimators to avoid the specification of hyperpriors for parameters of interest and is computationally efficient through the use of the PX-ECM algorithm that scales linearly in $p$ and $n$. Future research includes extending to generalized linear mixed models, random effects selection, \textcolor{black}{and enabling group fixed effect selection through additional latent $\W_k$ terms.}  

\bigskip
\bigskip

{\bf Disclosure Statement}

The authors report there are no competing interests to declare.

{\bf Word Count}

Word Count: 5293

\pagebreak

\begin{center}
{\large\bf SUPPLEMENTARY MATERIAL}
\end{center}

\begin{description}

\item[Supplementary Material:] \textcolor{black}{The Supplementary Material contains the proofs of Propositions 1 and 2, convergence assessments of LMM-PROBE, and additional results from simulations and
real data analyses.} (.pdf)

\item[R-package:] An R-package \texttt{lmmprobe}, with data from the example, also available on Github \citep{ZgoMcL23}.

\end{description}

\bigskip
\begin{abstract}
\textcolor{black}{In Section \ref{sec.pxecm.em}, we give the proofs of Propositions 1 and 2 from the main text. Section \ref{sec.sm.conv} assesses the convergence of LMM-PROBE.} Section \ref{sec.sm.sim} presents additional simulation results, and Section \ref{sec.sm.da} presents additional results from analyses of real datasets.
\end{abstract}

\newpage
\spacingset{1.75} 

\setcounter{section}{0}
\setcounter{figure}{0}   
\setcounter{table}{0}  
\setcounter{proposition}{0}    
\renewcommand{\thesection}{\Alph{section}}
\renewcommand\thefigure{\thesection.\arabic{figure}}    
\renewcommand\thetable{\thesection.\arabic{table}}

\section{Proofs of Propositions \ref{prop1} and \ref{prop2}}\label{sec.pxecm.em}

\textcolor{black}{We reiterate Propositions \ref{sm.prop1}--\ref{sm.prop2} from the main text and provide proofs for them below.} 
 
\begin{proposition} \label{sm.prop1}
\textcolor{black}{Under a standard EM algorithm, the parameters in the M-step for LMM-PROBE are not always estimable. That is, the maximizer of $Q_{\text{EM}}(\boldsymbol{\eta} \mid \Theta^{(t-1)})$, where $\boldsymbol{\eta} = (\be \ \bomega \ \btau)'$,
is not always unique.
} 
\end{proposition}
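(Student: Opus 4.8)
The plan is to reduce the standard-EM M-step to an unregularized expected least-squares problem and then exhibit a regime in which its normal equations are rank-deficient, so that the set of maximizers is a nontrivial affine subspace. First I would fix the standard (non-partitioned) EM by treating only the random effects $\re$ as missing data, so the complete data are $(\Y,\re)$. Combining the Gaussian complete-data model with the flat/improper priors $f_\bomega\propto 1$, $\mbf{f}_\bphi\propto 1$ and the unspecified $f_\beta$ (which, as noted after the posterior for $\be_\gamma$, contributes no curvature to the mode), the expected complete-data log-posterior is, up to additive constants independent of $\boldsymbol{\eta}$,
\[
Q_{\text{EM}}(\boldsymbol{\eta}\mid\Theta^{(t-1)}) = -\frac{1}{2\sigma^2}\,E\!\left[\bigl\|\Y-\X(\bgamma\be)-\mbf{\V}'\bomega-\mbf{\mathcal{V}}'\re\,\btau\bigr\|_2^2 \,\Big|\, \Y,\Theta^{(t-1)}\right],
\]
the expectation being taken over $\re\mid\Y,\Theta^{(t-1)}$ from the E-step. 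The key structural fact I would emphasize here is that, because the priors are flat, $Q_{\text{EM}}$ depends on $\be$ only through the linear predictor $\X(\bgamma\be)$ and carries no ridge-type penalty.

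Next I would write the stationarity conditions. Letting $\hat{\re}=E[\re\mid\Y,\Theta^{(t-1)}]$ and stacking the design columns as $A=[\,\X\;\;\mbf{\V}'\;\;\mbf{\mathcal{V}}'\hat{\re}\,]$, the Hessian of $Q_{\text{EM}}$ in $\boldsymbol{\eta}$ is negative semidefinite with Gram structure governed by $A'A$ (the $\btau$--$\btau$ block carries the extra second-moment term $E[\re'\mbf{\mathcal{V}}\mbf{\mathcal{V}}'\re]$, but this only adds curvature in the $\btau$ direction and cannot repair a deficiency coming from $\be$). Consequently the maximizer is unique precisely when $A$ has full column rank, and non-uniqueness reduces to showing that full column rank can fail.

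The crux is then to display such a failure, and the cleanest one lives in the very regime the paper targets. The map $\be\mapsto\X(\bgamma\be)=\X_\gamma\be_\gamma$ acts through the $M\times|\gamma|$ selected design $\X_\gamma$, where $M=\sum_i n_i$. When $|\gamma|>M$ — an unavoidable possibility once $p>M$ in (ultra) high dimensions — the matrix $\X_\gamma$ has a nontrivial null space, so there is $\Delta\be\neq\mbf{0}$ with $\X(\bgamma\,\Delta\be)=\mbf{0}$. Since $Q_{\text{EM}}$ depends on $\be$ only through $\X(\bgamma\be)$ and has no prior curvature, $Q_{\text{EM}}(\boldsymbol{\eta}+(\Delta\be,\mbf{0},\mbf{0})'\mid\Theta^{(t-1)})=Q_{\text{EM}}(\boldsymbol{\eta}\mid\Theta^{(t-1)})$ for every maximizer $\boldsymbol{\eta}$; equivalently $A'A$ is singular. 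Hence the maximizer set is an affine subspace of positive dimension and is not unique, which is exactly the claim. This also clarifies the contrast with Proposition \ref{prop2}: partitioning replaces $\X$ by a single column $\X_k$ in each cycle, so the offending $p>M$ rank deficiency never arises. I expect the main obstacle to be purely in the bookkeeping — verifying that neither the expansion parameter $\btau$ (whose curvature lives in the column space of $\mbf{\mathcal{V}}'\hat{\re}$ already accounted for) nor the flat priors reintroduce strict concavity in the $\be$ directions — rather than in the core rank argument.
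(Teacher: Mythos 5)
There is a genuine gap, and it lies in your very first move: you set up the ``standard EM'' by treating only $\re$ as missing data and conditioning on a fixed $\bgamma$. In LMM-PROBE the indicators $\bgamma$ are themselves latent --- the E-step's whole purpose is to compute $p_k = P(\gamma_k=1\mid \Y,\cdot)$ --- so the standard-EM $Q$ function must marginalize over \emph{both} $\bgamma$ and $\re$. This is what the paper does, and it changes the structure of the problem in exactly the place your argument leans on. Marginalizing over $\bgamma$ produces, in addition to the mean term (which replaces $\X$ by the column-rescaled matrix with $k$th column $p_k\X_k$), a variance term
\begin{equation*}
\mathrm{Var}\bigl\{\X'(\bgamma\be)\bigr\} \;\longrightarrow\; \sum_{k}\lambda_k\beta_k^2, \qquad \lambda_k = p_k(1-p_k)\textstyle\sum_i X_{ik}^2 ,
\end{equation*}
i.e.\ a data-dependent \emph{ridge} penalty on each coordinate of $\be$. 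Your ``key structural fact'' --- that with flat priors $Q_{\text{EM}}$ depends on $\be$ only through the linear predictor and ``carries no ridge-type penalty'' --- is therefore false for the correct $Q_{\text{EM}}$: the ridge does not come from the priors, it comes from the posterior uncertainty in the latent indicators.

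This is not a cosmetic issue, because the ridge invalidates your rank-deficiency criterion. With the correct $Q_{\text{EM}}$, the Hessian is $-(\tilde\U'\tilde\U + \mathrm{diag}(\mbf{\lambda}))$, and a null direction must be supported on coordinates where $\lambda_k=0$, i.e.\ where $p_k$ is exactly $0$ or $1$ (plus the unpenalized $\bomega$ block). Whenever $0<p_k<1$ the $k$th coordinate is strictly regularized, so the maximizer is unique no matter how large $p$ is relative to $M$; your claim that non-uniqueness occurs whenever $|\gamma|>M$ is wrong in this setting. The correct characterization, which is the content of the paper's proof, is that the M-step fails precisely when $\sum_k I(p_k=1) \ge M+1-r$: only then do enough ridge terms vanish for the Gram matrix to become singular. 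Your counterexample survives only as the degenerate boundary case of this condition (all $p_k\in\{0,1\}$ with more than $M-r$ ones), so the proposition's conclusion can be rescued from it, but the proof as written analyzes the wrong objective and asserts a general structural property (no ridge) whose failure is precisely the subtlety the paper's argument is built around.
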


\begin{proposition} \label{sm.prop2}
\textcolor{black}{Let
$Q_{\text{CM}}^{M1}(\boldsymbol{\eta} \mid \Theta^{(t-1)})$ and $Q_{\text{CM}}^{M2}(\boldsymbol{\eta} \mid {\Theta}^{(t-1)})$ denote the two M-step quantities in the PX-ECM algorithm for LMM-PROBE.
Assuming no perfect collinearity between $\X_k$ and $\V$ for any $k$, the maximizers of $Q_{\text{CM}}^{M1}(\boldsymbol{\eta} \mid {\Theta}^{(t-1)})$ and $Q_{\text{CM}}^{M2}(\boldsymbol{\eta} \mid {\Theta}^{(t-1)})$ always exist and are unique.} 
\end{proposition}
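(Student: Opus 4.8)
The plan is to reduce both conditional-maximization problems to strictly concave quadratic programs and to show that strict concavity follows from the positive definiteness of the relevant expected cross-product matrices. Because the complete-data model (\ref{eq.final}) is Gaussian with flat priors on $\bomega_k$ and $\bphi_k$, each $Q$-function is, up to additive constants and with the variance components held fixed at their conditioning values, a concave quadratic in the partition parameters $\mbf{\xi}_k=(\beta_k,\bomega_k,\bphi_k)'$. Taking the E-step expectation over the missing data $\U_k$ preserves this quadratic structure, so the stationarity condition is the normal equation $E(\Z_k'\Z_k)\,\hat{\mbf{\xi}}_k = E(\Z_k'\Y)$ with Hessian $-\sigma^{-2}E(\Z_k'\Z_k)$. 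Since the $M1$ maximization decouples across $k$ (this is exactly what the partitioning in (\ref{eq.final}) buys), uniqueness of the joint maximizer over $\boldsymbol{\eta}=(\be\ \bomega\ \btau)'$ is equivalent to uniqueness of each $\hat{\mbf{\xi}}_k$, and a strictly concave quadratic attains its unique global maximum at its critical point. Existence and uniqueness for $M1$ and $M2$ thus collapse to the single claim $E(\Z_k'\Z_k)\succ 0$ (and the analogous $E(\Z_0'\Z_0)\succ 0$).

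The core of the argument is the positive definiteness of $E(\Z_k'\Z_k)$. Writing $\Z_k=[\,\bA\ \ \U_k\,]$ with $\bA=[\,\X_k\ \ \V\,]$ collecting the fully observed columns and $\U_k=[\,\W_k\ \ \mbf{\mathcal{V}}'\re\,]$ the latent columns, I would use $E(\U_k'\U_k)=E(\U_k)'E(\U_k)+\bS$, where $\bS$ is the summed conditional covariance of the missing data. For any conformable $v=(v_\bA,v_U)$ this gives the clean identity
\begin{equation*}
v'E(\Z_k'\Z_k)\,v=\lVert \bA v_\bA + E(\U_k)\,v_U\rVert_2^2 + v_U'\bS v_U,
\end{equation*}
a sum of two nonnegative pieces. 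One checks that $\bS$ is positive semidefinite because $v_U'\bS v_U=\sum_m \var\!\big(\sum_a v_{U,a}U_{k,ma}\big)\ge 0$, and it is strictly positive in the random-effects direction since the conditional variance of $\re$ is $(\sigma^{-2}\bPsi)^{-1}\succ 0$ with $\bPsi=\mbf{\mathcal{V}}'\mbf{\mathcal{V}}+\sigma^2\mbf{\mathcal{G}}^{-1}$ manifestly positive definite. Vanishing of the quadratic form then forces $v_U=0$ through $\bS$ and, in turn, $\bA v_\bA=0$; the no-perfect-collinearity hypothesis makes $\bA=[\,\X_k\ \ \V\,]$ full column rank, so $v_\bA=0$. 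Hence $E(\Z_k'\Z_k)\succ 0$. The $M2$ case is identical with $\bA=\V$, using only that $\V$ has full column rank.

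The step I expect to be the main obstacle is making the regularizing role of $\bS$ fully rigorous, specifically ruling out degenerate collinearity among the latent columns. The random-effects column is safe, but the variance of $\W_k$ in (\ref{eq.VarW}) can degenerate (for instance at initialization when $\be=\mbf{0}$), so one cannot simply assert $\bS\succ 0$. The resolution I would adopt is to treat the $\W_k$ column as effectively observed whenever its conditional variance is zero and fold $E(\W_k)$ into $\bA$, reading the no-collinearity hypothesis as full column rank of the resulting augmented fully observed design; otherwise its strictly positive variance feeds directly into $\bS$ and the argument above closes unchanged. Finally, to sharpen the contrast with Proposition \ref{sm.prop1}, I would note that the standard-EM $Q$-function is maximized over the full $p$-vector $\be$, whose normal equation has Gram matrix $\X'\X$; since $p>M$ in the target regime this matrix is singular and the maximizer is non-unique. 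It is precisely the partitioning that shrinks each design to $2(r+1)$ columns and restores the full-rank condition exploited here.
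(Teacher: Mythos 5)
Your proposal is correct in substance and follows the same skeleton as the paper's proof: reduce each CM step to a small per-partition least-squares problem, so that existence and uniqueness of the maximizer are equivalent to invertibility of the expected Gram matrix. The difference is where the argument stops. The paper writes down $\hat{\mbf{\xi}}_{\ell}=\{(\Z_{\ell}'\Z_{\ell})^{(t-1)}\}^{-1}\Z_{\ell}^{(t-1)\prime}\Y$ and simply \emph{asserts} that it is identifiable under the no-collinearity hypothesis, never examining whether the latent columns $\U_k$ could create rank deficiency. Your identity, for $v$ partitioned conformably as $(v_{\bA}', v_U')'$,
\begin{equation*}
v'E(\Z_k'\Z_k)v=\lVert \bA v_{\bA}+E(\U_k)v_U\rVert_2^2+v_U'\bS v_U, \qquad \bA=[\X_k \ \ \V],
\end{equation*}
with $\bS$ the conditional covariance injected by the E-step, supplies exactly the lemma the paper takes on faith: the random-effects direction is regularized by $(\sigma^{-2}\bPsi)^{-1}\succ 0$ (together with the fact that every row of $\V_i$ is nonzero, since $\V_{i,1}=\mbf{1}'$), and the observed directions are handled by full column rank of $\bA$. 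So you have proved, rather than asserted, the key invertibility step; this is a more complete argument for the same statement, and it is the right way to make the paper's one-line identifiability claim rigorous.

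The obstacle you flag -- $Var(\W_k)=0$ -- is genuine, and it bites harder than your patch can absorb: at the paper's own initialization ($\be^{(0)}=\mbf{0}$, $\mbf{p}^{(0)}=\mbf{0}$, $\re^{(0)}=\mbf{0}$) the algorithm sets $\U^{(0)}$ \emph{and} $\U^{2(0)}$ to zero, so $E(\U_k)=\mbf{0}$ and $\bS=\mbf{0}$, and the first-iteration Gram matrix is singular however one reads the collinearity hypothesis; folding $E(\W_k)$ into the observed design cannot rescue a zero column, and the expanded coefficients $\alpha_k,\tau_k$ are simply unidentified there. So the proposition's ``always'' tacitly means ``at every iteration whose E-step moments are nondegenerate''; neither the paper's proof nor your patch removes this caveat -- yours at least makes it visible. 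One small correction to your closing contrast with Proposition 1: the standard-EM Gram matrix there is not $\X'\X$ but $\tilde{\U}'\tilde{\U}+\mbf{\lambda}\I_{p+2r}$, which \emph{is} regularized by the Bernoulli variances $\lambda_k=p_k(1-p_k)\sum_i X_{ik}^2$; non-uniqueness arises only in directions with $p_k=1$, which is why the paper's threshold is $\sum_k I(p_k=1)\ge M+1-r$ rather than $p>M$.
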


\begin{proof}[Proof of Proposition \ref{prop1}] \label{sm.prop1proof}
\textcolor{black}{Consider a standard M-step for the regression parameters in (\ref{eq.lmm}) without any additional non-sparse predictors. We add the expanded parameter $\btau$ for $r$-dimensional random effects. For the small $p$ non-sparse situation, \cite{Liuetal98} considered this model (see Section 4.1 therein) and were able to demonstrate superior convergence to the non-expanded parameter version. In this situation, the M-step maximizes}
\begin{equation*}
Q_{\text{EM}}(\boldsymbol{\eta} \mid {\Theta}^{(t-1)}) = - E_{\bgamma \be} \left[ \left\{ \Y - \U(\re)'\mbf{\eta}(\mbf{\gamma})\right\}' 
\left\{ \Y - \U(\re)'\mbf{\eta}(\mbf{\gamma})\right\} \mid {\Theta}^{(t-1)}
\right]
\end{equation*}
\textcolor{black}{where $\U(\re)  = (\X \ \V \ \mbf{\mathcal{V}}'\re)'$ and $\mbf{\eta}(\mbf{\gamma}) = (\bgamma\be \ \bomega \ \btau)$ with $\mbf{\eta}\equiv \mbf{\eta}(\mbf{1})$. For brevity, we drop the $(t-1)$ superscript on all expectations and, without loss of generality, assume $p_k > 0$ (predictor $\X_k$ can be taken out of the optimization when $p_k = 0$). Note that }
\begin{equation*}
E_{\bgamma \re} \left\{ 
 \U(\re)'\mbf{\eta}(\mbf{\gamma})
\right\} = \U(\tilde{\re})'\mbf{\eta}(\mbf{p})  \equiv 
\begin{bmatrix} (\X\odot \mbf{P})' \\ \V' \\ \mbf{\mathcal{V}}' \tilde{\re}
\end{bmatrix}' \mbf{\eta} = 
\tilde\U' \mbf{\eta}, 
\end{equation*} 
\textcolor{black}{where $\odot$ denotes the Hadamard product and $\mbf{P}$ is an $n \times p$ matrix in which each row is the vector $\mbf{p}$, and }
\begin{equation*}
E_{\bgamma \re} \bigg[ \left\{ 
 \U(\re)'\mbf{\eta}(\mbf{\gamma})
\right\}'\left\{ 
 \U(\re)'\mbf{\eta}(\mbf{\gamma})
\right\}\bigg] = (\tilde\U' \mbf{\eta})'\tilde\U' \mbf{\eta} + 
\text{Var} \left\{ 
 \U(\re)'\mbf{\eta}(\mbf{\gamma})
\right\}. 
\end{equation*} 
\textcolor{black}{The variance term reduces to}
\begin{equation*}
\mbf{1}'\begin{bmatrix} (\X^2)\odot \mbf{P}\odot(\mbf{J} - \mbf{P}) \\ \mbf{0}_r \\ \mbf{\mathcal{V}}'(\sigma^{-2} \bPsi)^{-1} \mbf{\mathcal{V}}
\end{bmatrix} \mbf{\eta}^2 = \mbf{\lambda}' \mbf{\eta}^2
\end{equation*} 
\textcolor{black}{where $\X^2 = \X \odot \X$, $\mbf{J}$ is a $(p \times p)$ matrix of ones, $\mbf{0}_r$ is an $r$-dimensional vector of zeros, and $\mbf{\lambda} = \left\{\lambda_1,\ldots,\lambda_p, \mbf{0}_r', \mbf{\mathcal{V}}'(\sigma^{-2} \bPsi)^{-1} \mbf{\mathcal{V}}\right\}'$ where $\lambda_k = p_k(1-p_k)\sum_i X_{ik}^2$ for $k\leq p$.  This yields}
\begin{equation*}
Q_{\text{EM}}(\boldsymbol{\eta} \mid {\Theta}^{(t-1)}) = - \left\{ \left( \Y - \tilde\U' \mbf{\eta} \right)'
 \left( \Y - \tilde\U' \mbf{\eta} \right)
\right\} - \left[ \mbf{\lambda}' \mbf{\eta}^2 \right],
\end{equation*} 
\textcolor{black}{resulting in maximizer}
\begin{equation*}
\hat{\mbf{\eta}} =   \left[ \tilde\U'
 \tilde\U + \mbf{\lambda} \mbf{I}_{p + 2r} \right]^{-1}  \tilde\U' \Y.
\end{equation*} 
\textcolor{black}{
As a result, $\hat{\mbf{\eta}}$ is only estimable when $\sum_k I(p_k = 1) < M + 1 - r$; adding additional non-sparse predictors would lower this limit. Therefore, $Q_{\text{EM}}(\boldsymbol{\eta} \mid {\Theta}^{(t-1)})$ cannot always be maximized, as the M-step is undefined when $\sum_k I(p_k = 1) \ge M + 1 - r$. } 
\end{proof}

\bigskip

\noindent \textcolor{black}{\textbf{Remark 1:} A standard M-step for calculating $\hat{\mbf{\eta}}$ requires the inverse of $\tilde\U' \tilde\U + \mbf{\lambda} \mbf{I}_{p+2r}$ a $(p +2r) \times (p+2r)$ matrix, which is lower-bounded by $\Omega(p^3)$ computational complexity. Note that this matrix is a function of the moments of $\mbf{\gamma}$ and $\re$, which change at every iteration. Consequently, the inversion is required at every iteration, which compounds the computational complexity of the standard M-step. As demonstrated in the simulation studies in the main text, methods that require this type of inversion (e.g., PGEE) scale poorly with $p$ and require computation times that are not feasible for our data analysis.}

\bigskip

\noindent \textcolor{black}{\textbf{Remark 2:} The benefit of the estimate of, say, $\beta_k$ from the standard M-step is that it would adjust for the remaining predictors in the model. This points to the motivation for including $\balpha$, the expanded parameters for $\be$, in the proposed LMM-PROBE method. These expanded parameters adjust for the remaining signal, denoted by $\W_k$ in the main text, while estimating $\beta_k$. Without $\alpha_k$, $\beta_k$ would be estimated using $(\Y - \W_k)$ which is \textit{corrected for} $\W_k$ instead of \textit{adjusted for} $\W_k$. The differences in the estimate of $\beta_k$ with correction versus adjustment can be large when $\X_k$ and $\W_k$ are related.}

\begin{proof}[Proof of Proposition \ref{prop2}] \label{sm.prop2proof}
\textcolor{black}{In the PX-ECM algorithm for LMM-PROBE, the $\ell$th PX-CM partition maximizes the expected posterior of $\bfeta_{\ell}$ (where $\boldsymbol{\eta}$ is a subset of $\bfeta$, $\boldsymbol{\eta} \in \bfeta$), which is a function of $\X_\ell$, $\V$, and the latent terms $\U_\ell$,} 
\begin{equation*}
\hat{\bfeta}_{\ell}^{(M1)} = \mbox{argmax}_{\bfeta_{\ell}} E_{\U_{\ell}}\left\{l(\bfeta_{\ell}|\Y,\U_{\ell}, \mbf{\Gamma}_{\ell})|\Theta_{/\ell}^{(t-1)} \right\} \ \mbox{for} \ \ell=0,1,\ldots,p.
\end{equation*}
\textcolor{black}{With the parameter expansion, this reduces the first M-step (M1) to a straightforward calculation for each partition $\ell$, akin to a $2(r+1)$-dimensional linear regression with maximizer}
\begin{equation*}
\hat{\mbf{\xi}}^{(M1)}_{\ell}=\left\{(\Z_{\ell}'\Z_{\ell})^{(t-1)}\right\}^{-1}\Z^{(t-1)'}_{\ell}\Y,
\end{equation*}
\textcolor{black}{where $\Z_{\ell}$ is defined in Section \ref{sec.pep} of the main text. This maximizer is fully identifiable under the PX-ECM framework as long as there is not perfect collinearity between $\X_k$ and $\V$ for all $k$. As a result, $Q_{\text{CM}}^{M1}(\boldsymbol{\eta} \mid {\Theta}^{(t-1)})$ can be maximized. The second M-step M2 updates $\hat{\mbf{\xi}}^{(M2)}_{0}$, a subset of $\boldsymbol{\eta}$, conditional on the first M-step M1 and E-steps. This yields the same proof that the maximizer is identifiable and exists. Therefore, the maximizers of $Q_{\text{CM}}^{M1}(\boldsymbol{\eta} \mid {\Theta}^{(t-1)})$ and $Q_{\text{CM}}^{M2}(\boldsymbol{\eta} \mid {\Theta}^{(t-1)})$ both exist and are unique.}
\end{proof}

\section{Convergence Assessment}\label{sec.sm.conv}

\textcolor{black}{This Section provides additional results and discussion with regards to convergence. Our implementation of LMM-PROBE is an `all-at-once' optimization, where the parameter value updates at iteration $t$ are performed accounting for updates from the previous iteration $t-1$ \citep[analogous to the Jacobi least-squares optimization approach,][]{Mas95}. In previous work, we have found that the `all-at-once' optimization was less sensitive to the updating order and more efficiently maximized the likelihood than the `one-at-a-time' approach \citep{McLain2022}. In the `one-at-a-time' approach, parameter values are sequentially updated, with each update accounting for all previous updates within current iteration $t$ \citep[analogously to Gauss-Seidel least-squares optimization,][]{Maetal15}. The `one-at-a-time' approach fulfills the monotonicity and space-filling properties of EM and ECM algorithms \citep{Meng1993}. Our `all-at-once' implementation of LMM-PROBE is less prone to getting stuck in local maxima (in practice) compared to a `one-at-a-time' approach but does not provide convergence guarantees.}

\textcolor{black}{Various studies have demonstrated that the PX-EM algorithm and some parameterizations of ECM have better global rates of convergence than a standard EM algorithm \citep{Liuetal98, Men94, Sexton2000}. However, for LMM-PROBE, the parameters in the complete data model of the standard EM algorithm are nonidentifiable when $p \gg n$ (Proposition 1), and thus the M-step is often undefined. Therefore, we cannot compare the theoretical rate of convergence of a PX multi-cycle ECM as in LMM-PROBE to that of a standard EM. Instead, we assess empirical evidence of convergence. For eight simulation settings chosen to illustrate all simulation factors, Figure \ref{fig.conv.iter} shows that the log-likelihood sharply increased in early iterations and flattened quickly in a small number of iterations, indicating a rapid rate of convergence. The empirical convergence assessment was similar for other simulation settings (omitted from figures).}

\begin{figure}[ht]
\centering
\includegraphics[width=6.5in]{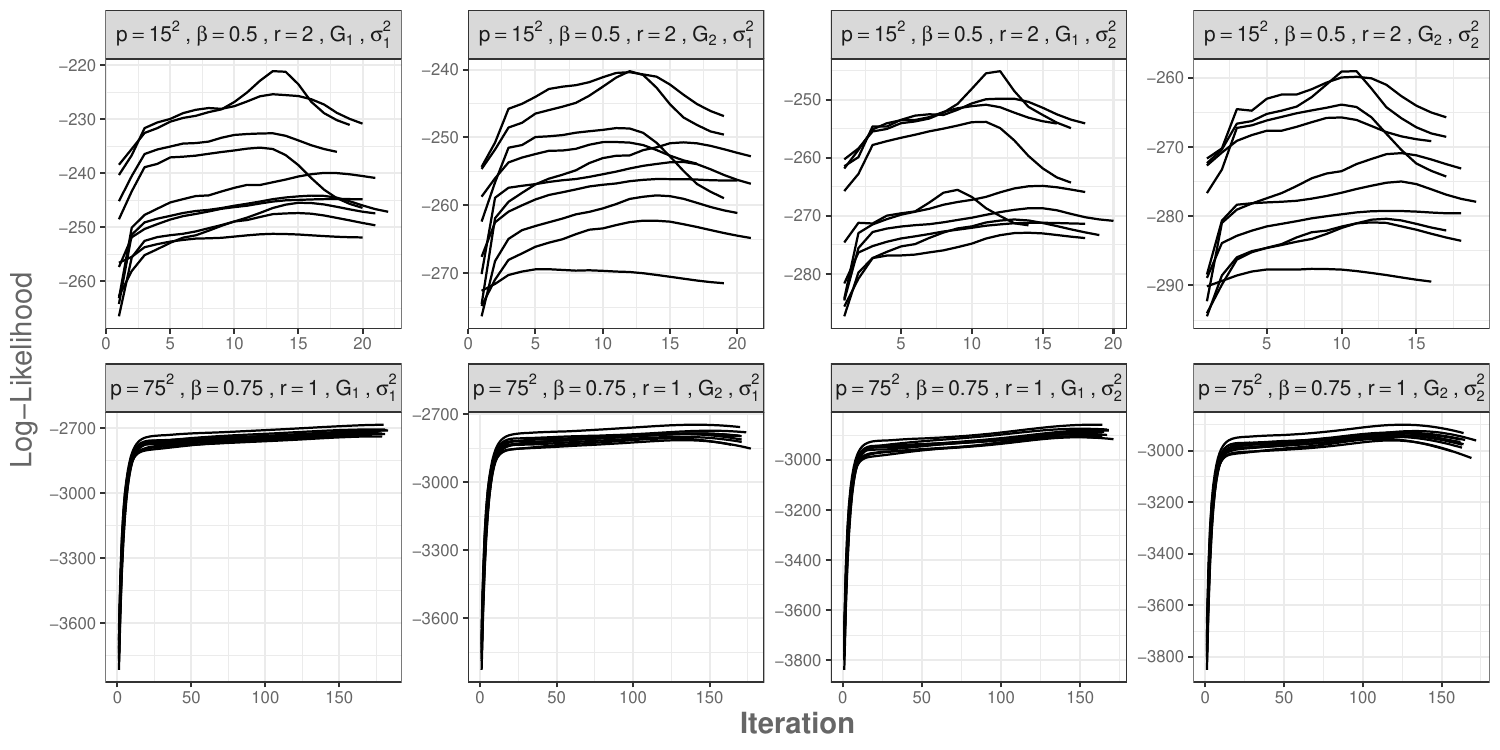} \\
\caption{Log-Likelihood for LMM-PROBE over iterations, across various simulation settings, including $r$, $\sigma^2$, $\bG$, and $\mbf{\beta}$ values, when $p = (15^2, 75^2)$ and $\pi = 0.1$. Within each simulation scenario, lines represent simulation repetitions (10 repetitions). \label{fig.conv.iter}}
\end{figure}

\textcolor{black}{To empirically assess the maximization of the conditional likelihood, we examined $- \frac{n}{2}\log(\tilde{\sigma}^2) - \frac{1}{2\tilde{\sigma}^2}\sum_i \left[\Y_{i} - \left\{ \X_{i} (\tilde{\alpha}_0\tilde{\mbf{p}} \tilde \be)  + \V_{i} \tilde{\bomega}_0  +  \V_{i} \tilde \re_i \right\} \right]^2$ for all methods. Figure \ref{fig.loglik} shows that LMM-PROBE maximized the conditional likelihood better than LASSO and PROBE in the settings displayed. Additionally, LMM-PROBE maximized the conditional likelihood better than LASSO+ in all settings except two, and better than LMM-LASSO when $\be = 0.5$ and variances were higher. When $\pi = 0.05$, LMM-PROBE maximized the conditional likelihood most effectively in all settings (figure omitted). Generally, methods for LMMs maximized the conditional likelihood better than LASSO and PROBE, which are not designed for LMMs.}

\begin{figure}[ht]
\centering
\includegraphics[width=7in]{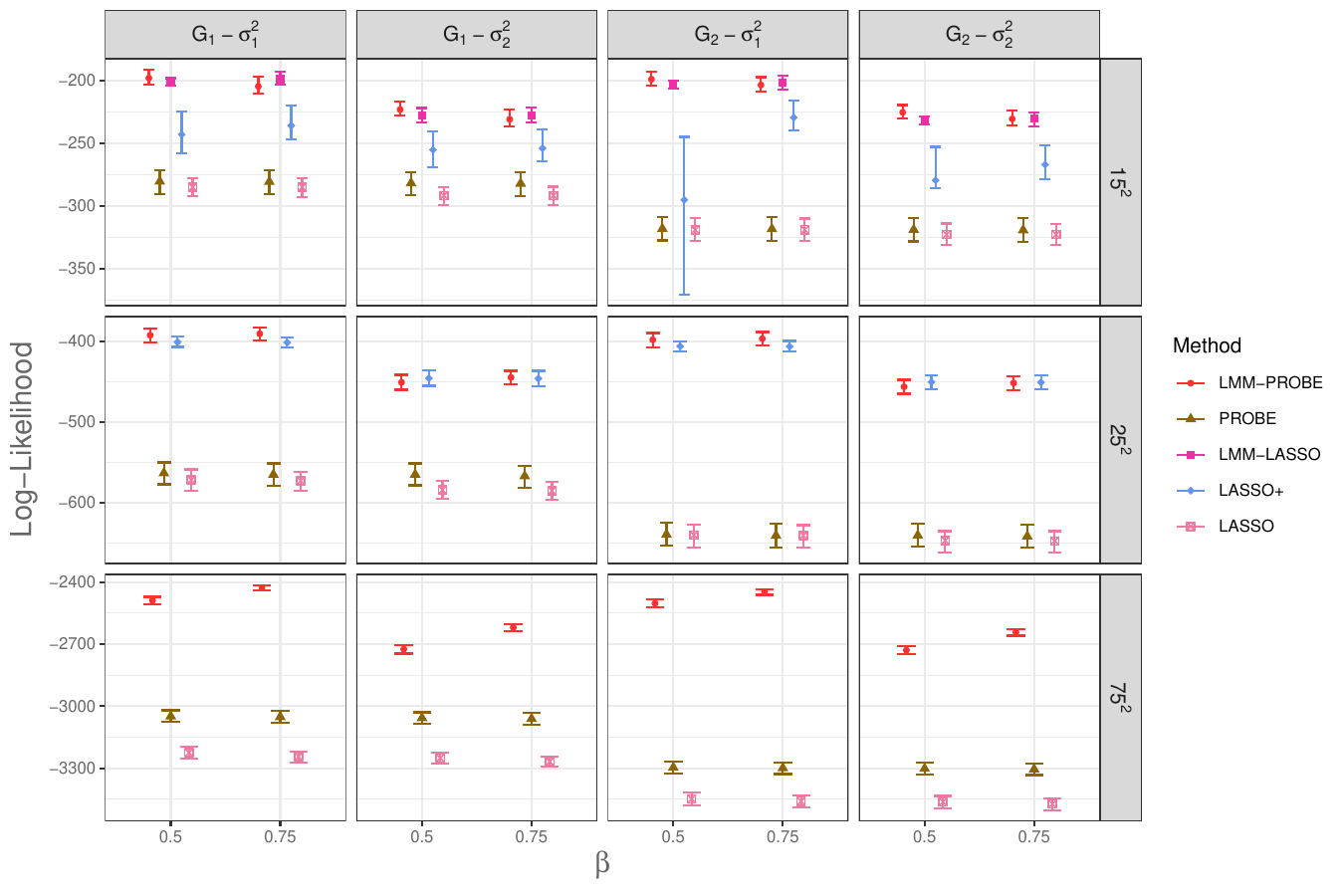} \\
\caption{Log-Likelihood for LMM-PROBE and four comparison methods (excluding PGEE) across various simulation settings, including $p$, $\sigma^2$, $\bG$, and $\mbf{\beta}$ values, when $r = 2$ and $\pi = 0.1$. Vertical lines display the interquartile range of the Log-Likehood values across simulation iterations. Comparison methods LMM-LASSO and LASSO+ are methods for linear mixed models or repeated measures, while LASSO and PROBE are methods for linear models. 
\label{fig.loglik}}
\end{figure}

\section{Additional Simulation Results}\label{sec.sm.sim}

This Section provides additional simulation results. Figure \ref{fig.mspe.pgee} corresponds to Figure \ref{fig.mse.simul} in Section \ref{sec.sim} of the main text, with the additional method of penalized generalized estimating equations (PGEE) for high-dimensional longitudinal data. The Mean Squared Predictive Error (MSPE) was the highest for this additional method. Figure \ref{fig.mspe.1re} also mirrors Figure \ref{fig.mse.simul} from Section \ref{sec.sim} in the main text, but shows MSPEs for the simulation settings with one random effect ($r = 1$). Figure \ref{fig.mad.1re} shows Median Absolute Deviations (MADs) for simulation settings when $r=1$ and $\pi = 0.1$. In all settings except one, MADs were lowest for LMM-PROBE. \textcolor{black}{Figures \ref{fig.varselect625} and \ref{fig.varselect5625} show sensitivity, specificity, and the Matthews Correlation Coefficient \citep[MCC, ][]{Matthews1975} when $p = 25^2$ and $p = 75^2$, respectively. Figure \ref{fig.varselect625} shows that when the effect size of signals was higher ($\be = 0.75$), LMM-PROBE had a similar sensitivity as PROBE, and outperformed other methods. LMM-PROBE had the highest specificity for all settings. In Figure \ref{fig.varselect5625}, LMM-PROBE had higher sensitivity when $\be = 0.75$ as well as high sensitivity in all settings except one.}

\begin{figure}[ht]
\centering
\includegraphics[width=7in]{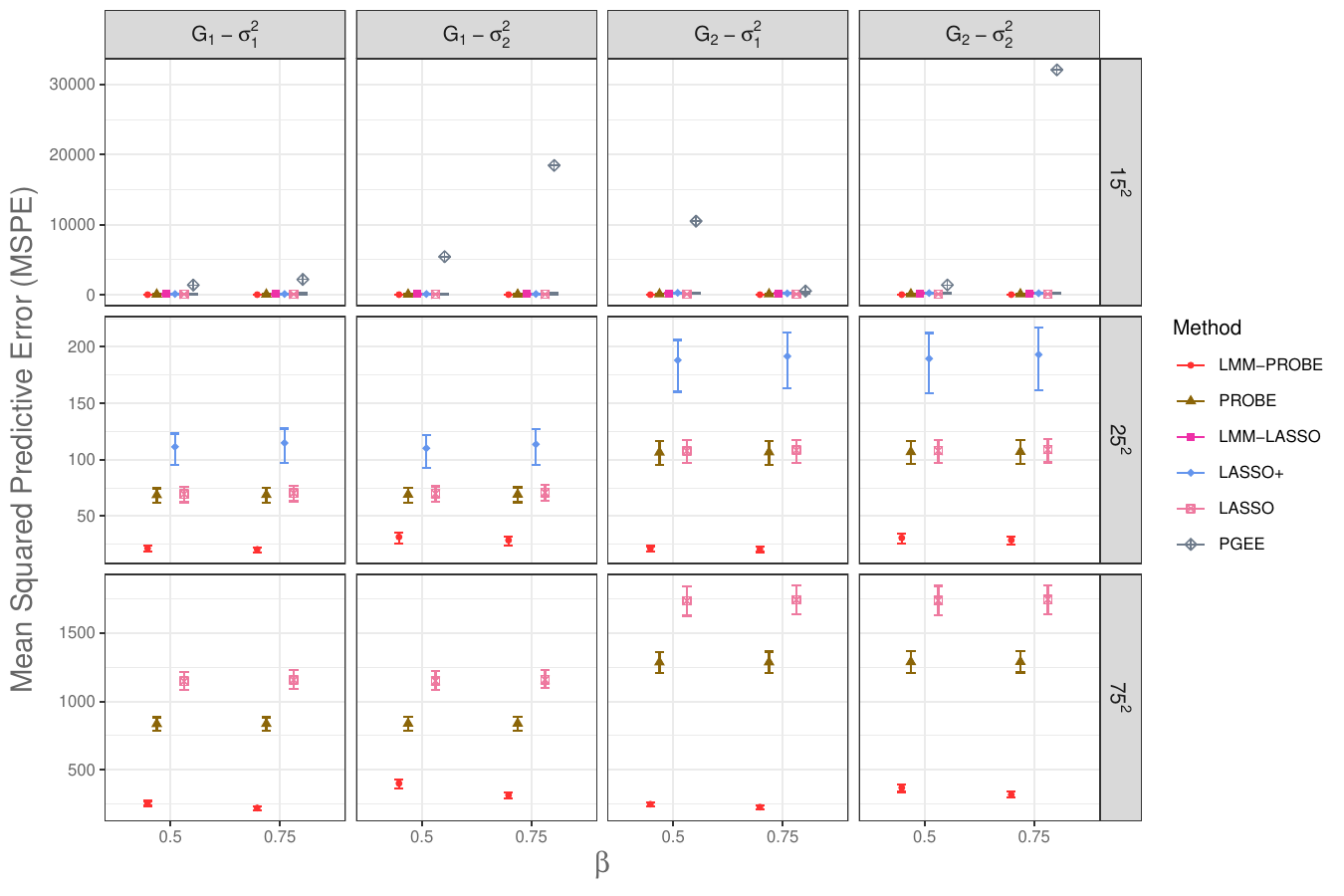} \\
\caption{Mean Squared Predictive Errors (MSPE) for LMM-PROBE and five comparison methods (including PGEE) across various simulation settings, including $p$, $\sigma^2$, $\bG$, and $\mbf{\beta}$ values, when $r = 2$ and $\pi = 0.1$. The MSPEs are based on both fixed and random effects. Vertical lines display the interquartile range of the MSPEs. Comparison methods LMM-LASSO, LASSO+, and PGEE are methods for linear mixed models or repeated measures, while LASSO and PROBE are methods for linear models. \label{fig.mspe.pgee}}
\end{figure}

\begin{figure}[ht]
\centering
\includegraphics[width=7in]{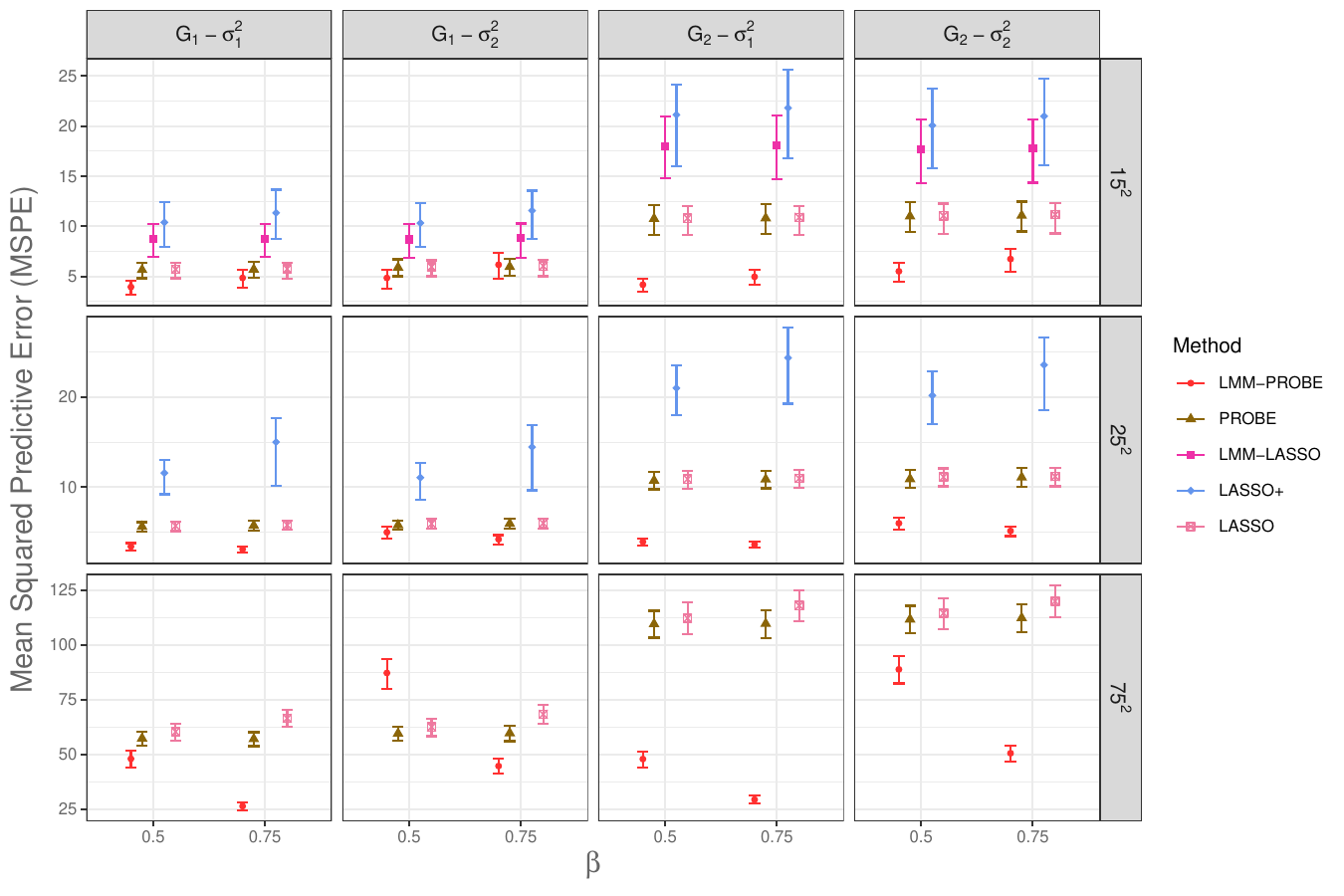} \\
\caption{Mean Squared Predictive Errors (MSPE) for LMM-PROBE and four comparison methods across various simulation settings, including $p$, $\sigma^2$, $\bG$, and $\mbf{\beta}$ values, when $r = 1$ and $\pi = 0.1$. The MSPEs are based on both fixed and random effects. Vertical lines display the interquartile range of the MSPEs. Comparison methods LMM-LASSO and LASSO+ are methods for linear mixed models, while LASSO and PROBE are methods for linear models. \label{fig.mspe.1re}}
\end{figure}

\begin{figure}[ht]
\centering
\includegraphics[width=7in]{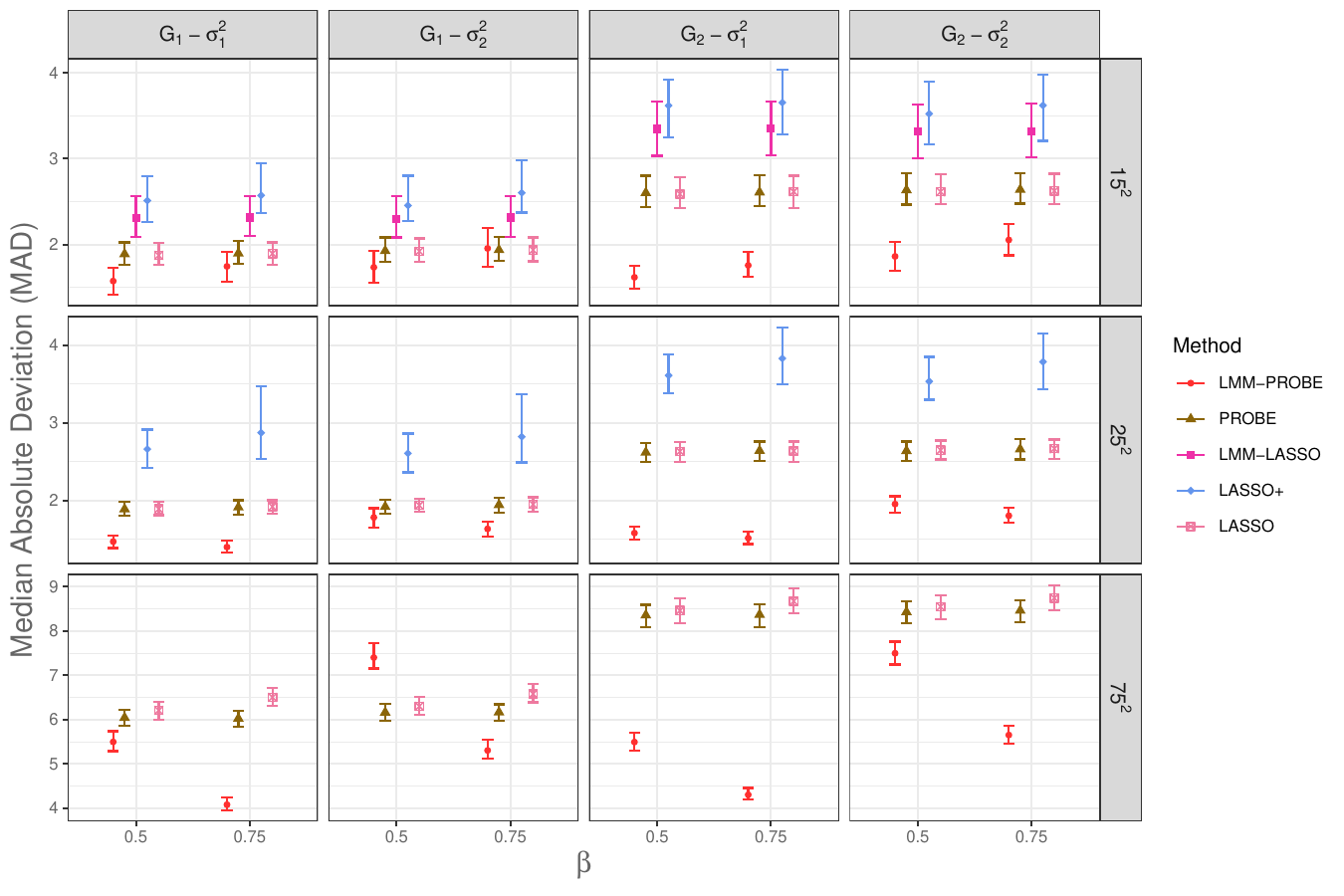} \\
\caption{Median Absolute Deviations (MAD) for LMM-PROBE and four comparison methods across various simulation settings, including $p$, $\sigma^2$, $\bG$, and $\mbf{\beta}$ values, when $r = 1$ and $\pi = 0.1$. The MADs are based on both fixed and random effects. Vertical lines display the interquartile range of the MADs. Comparison methods LMM-LASSO and LASSO+ are methods for linear mixed models, while LASSO and PROBE are methods for linear models. \label{fig.mad.1re}}
\end{figure}

\begin{figure}[ht]
\centering
\includegraphics[width=7in]{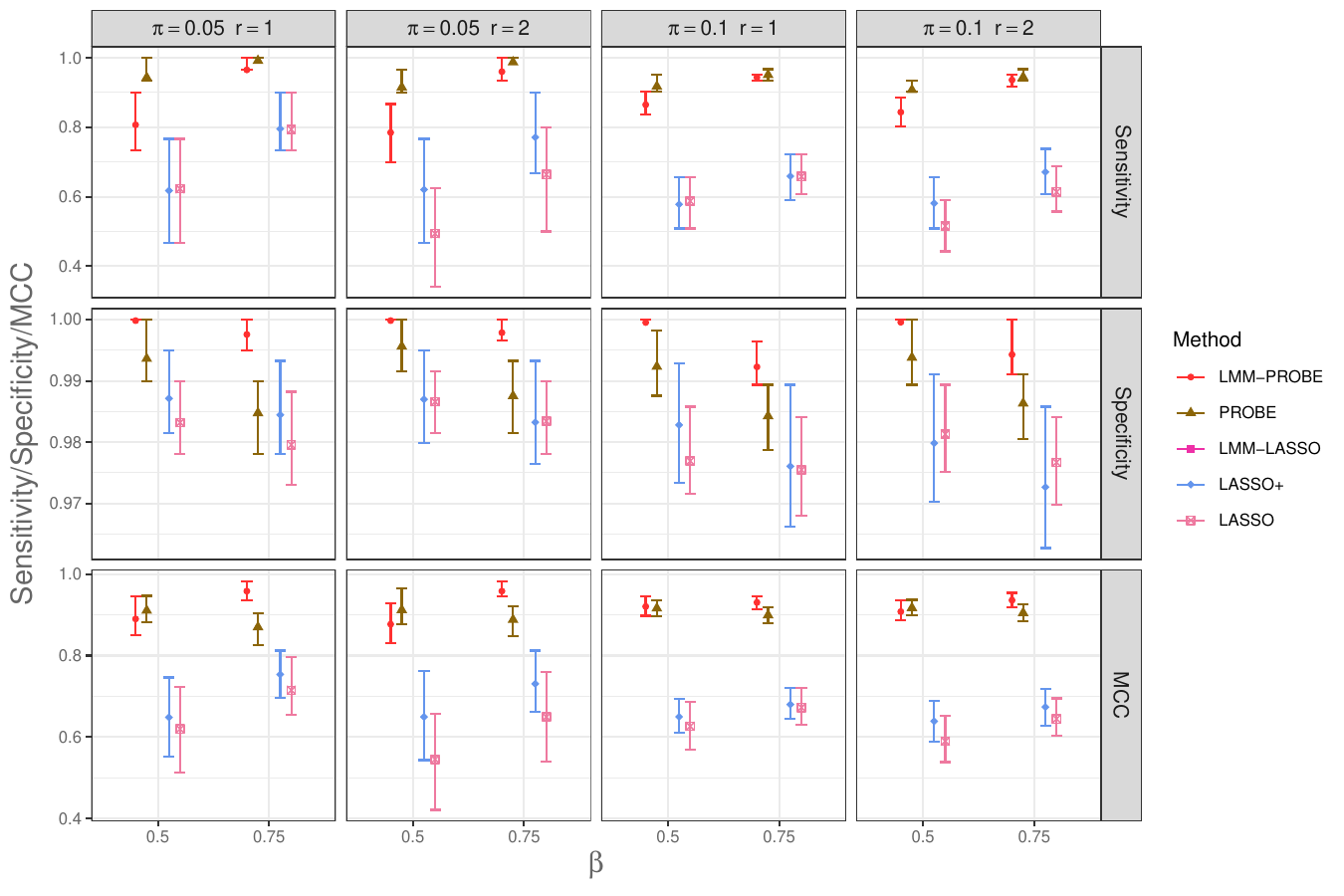} \\
\caption{Sensitivity, Specificity, and the Matthews Correlation Coefficient (MCC) for LMM-PROBE and four comparison methods across various simulation settings, including $\pi$, $r$, and $\be$ values, when $\sigma^2 = \sigma^2_1$, $\bG = \bG_1$, and $p = 25^2$. Vertical lines display the interquartile range of the sensitivity, specificity, and MCC. Comparison methods LMM-LASSO and LASSO+ are methods for linear mixed models, while LASSO and PROBE are methods for linear models. \label{fig.varselect625}}
\end{figure} 

\begin{figure}[ht]
\centering
\includegraphics[width=7in]{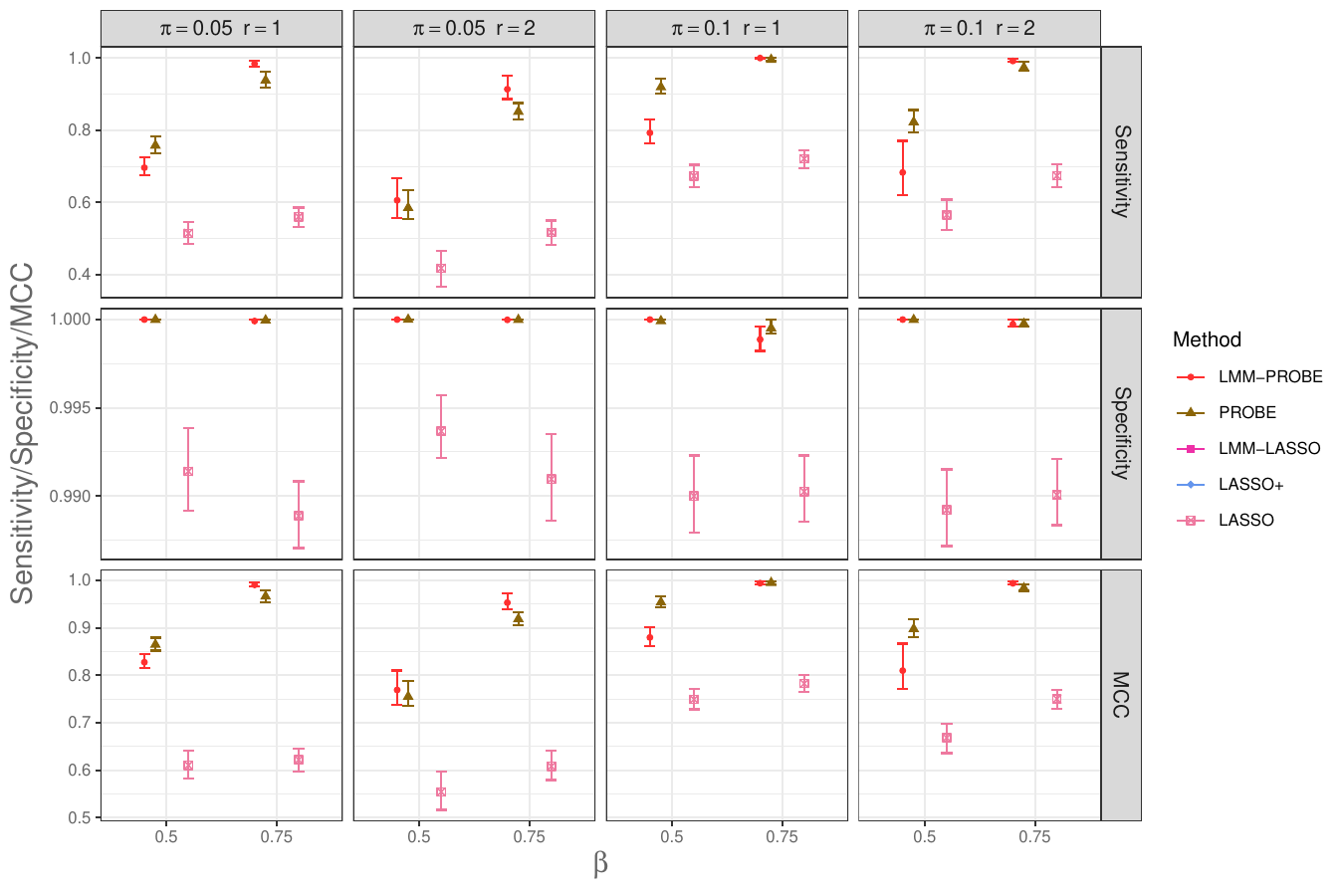} \\
\caption{Sensitivity, Specificity, and the Matthews Correlation Coefficient (MCC) for LMM-PROBE and two comparison methods across various simulation settings, including $\pi$, $r$, and $\be$ values, when $\sigma^2 = \sigma^2_1$, $\bG = \bG_1$, and $p = 75^2$. Vertical lines display the interquartile range of the sensitivity, specificity, and MCC. Comparison methods LASSO and PROBE are methods for linear models. \label{fig.varselect5625}}
\end{figure} 

\section{Additional Data Analysis Results}\label{sec.sm.da}

In this Section, we examine additional results from the data analysis on pediatric systemic lupus erythematosus (SLE) and provide a second data analysis on the popular riboflavin dataset \citep{Buhlmann2014}. For the SLE analysis, Figure \ref{fig.resvar.example} shows that LMM-PROBE had the lowest estimates of $\tilde{\sigma}^2$ across Cross-Validation (CV) folds, compared to PROBE and LASSO. LASSO showed the largest variation in its $\tilde{\sigma}^2$ estimates across CV folds. The average random effect variance $\tilde{\bG}$ estimate was approximately $0.02$ giving an average ICC of 0.15. Figure \ref{fig.selected.example} shows the average number of selected predictors for each method, across CV folds. Computation time for the three methods varied slightly. The methods for traditional high-dimensional linear regression (LASSO, PROBE) required around nine and 48 seconds per fold, respectively, while LMM-PROBE required 69 seconds on average.

\begin{figure}[ht]
\centering
\includegraphics[width=6in]{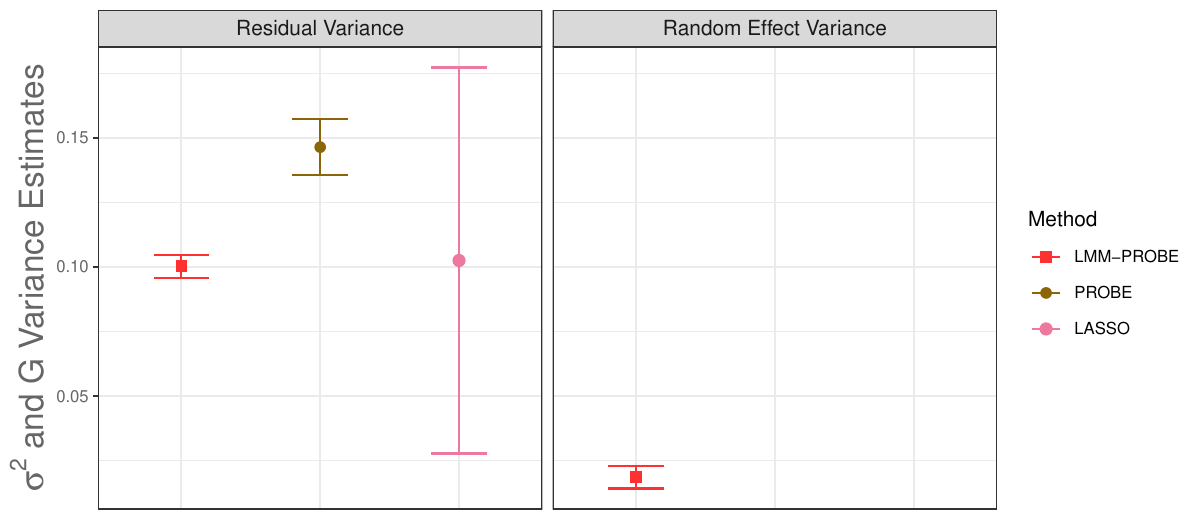} \\
\caption{Average residual ($\tilde{\sigma}^2$) and random effect ($\bG$) variance estimates for LMM-PROBE and two comparison methods. Vertical lines represent $\pm$ the standard error of $\tilde{\sigma}^2$ or $\bG$, divided by $\sqrt{5}$, based on the number of Cross-Validation folds. 
\label{fig.resvar.example}}
\end{figure}

\begin{figure}[ht]
\centering
\includegraphics[width=6in]{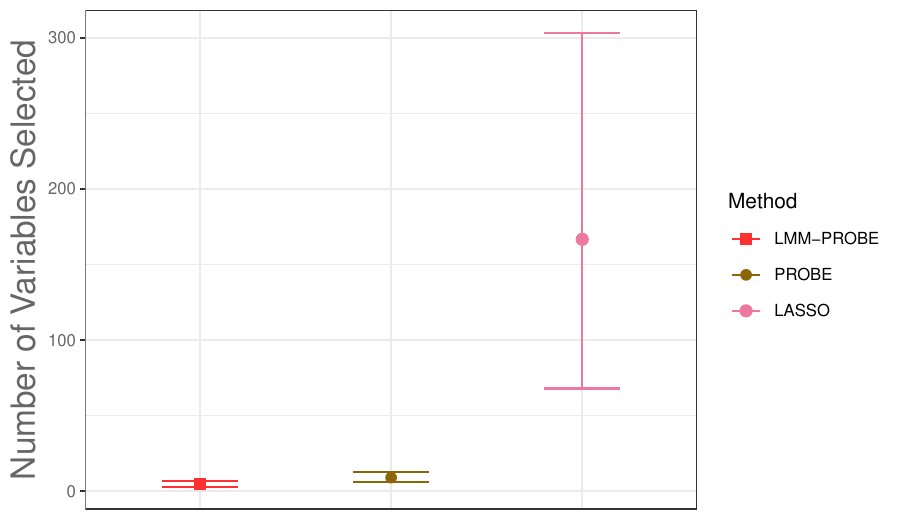} \\
\caption{Average number of predictors selected for LMM-PROBE and two comparison methods. Vertical lines display the range across the Cross-Validation folds.
\label{fig.selected.example}}
\end{figure}

\textcolor{black}{We examined residuals for each CV fold. Figure \ref{fig.condres.obs} shows conditional residuals \citep{Nobre2007} plotted against observations for each CV fold. These residuals show that the homogeneous variance assumption is fulfilled. Quantile-Quantile (Q-Q) plots in Figure \ref{fig.condres.qq} show that the normality assumption is not violated. Figure \ref{fig.re.sub} shows the predicted random intercepts plotted against subjects, where no subject had an outlying predicted random intercept. Finally, the Q-Q plots in Figure \ref{fig.re.qq} show that the random effect normality assumption is not violated either.}

\begin{figure}[ht]
\centering
\includegraphics[width=6in]{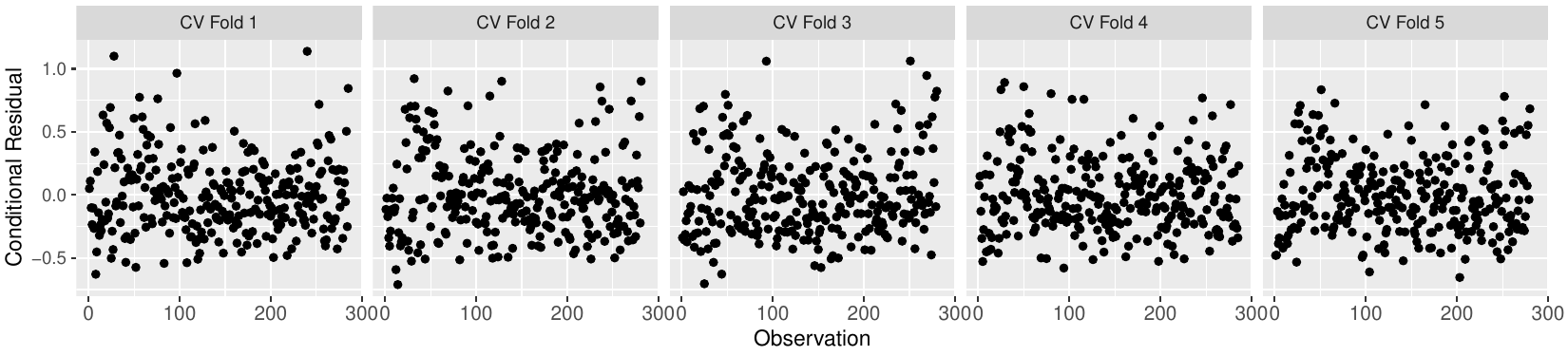} \\
\caption{Conditional residuals plotted against observations for each Cross-Validation fold.
\label{fig.condres.obs}}
\end{figure}

\begin{figure}[ht]
\centering
\includegraphics[width=6in]{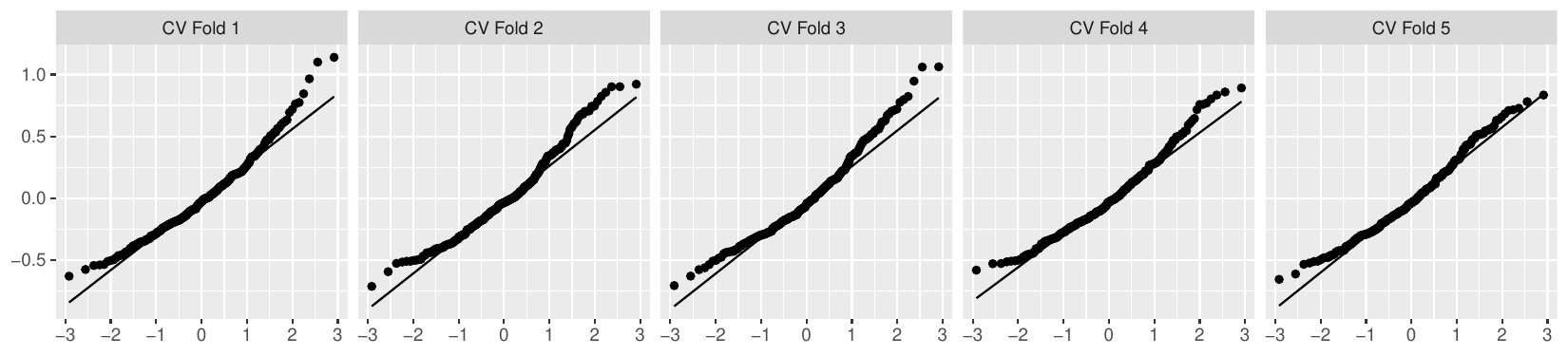} \\
\caption{Quantile-Quantile plots of conditional residuals for each Cross-Validation fold.
\label{fig.condres.qq}}
\end{figure}

\begin{figure}[ht]
\centering
\includegraphics[width=6in]{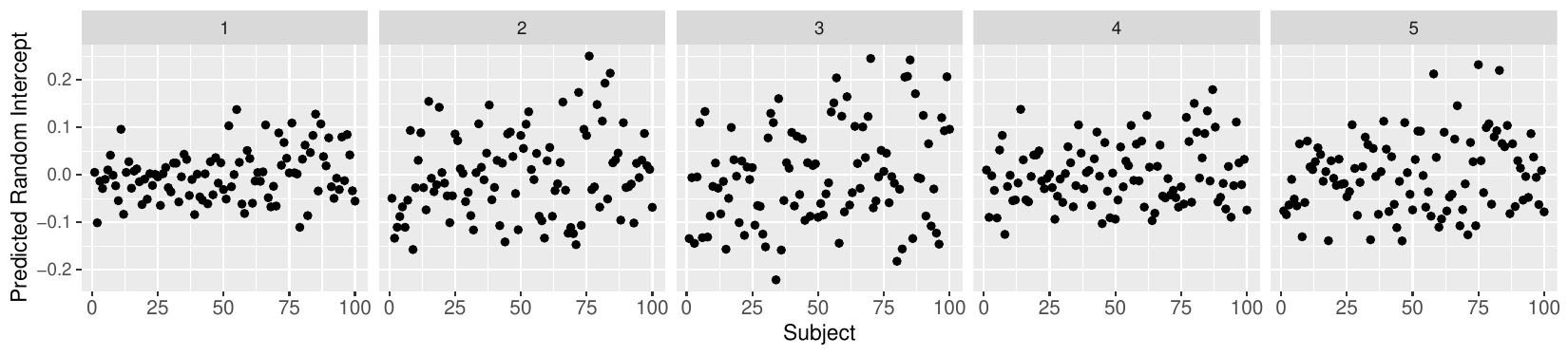} \\
\caption{Predicted random intercepts plotted against subjects for each Cross-Validation fold.
\label{fig.re.sub}}
\end{figure}

\begin{figure}[ht]
\centering
\includegraphics[width=6in]{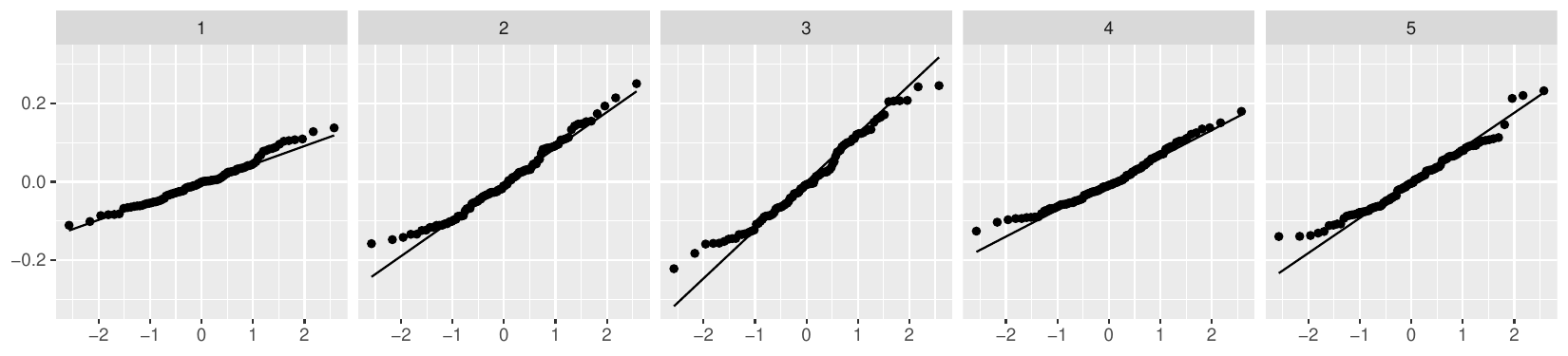} \\
\caption{Quantile-Quantile plots of predicted random intercepts for each Cross-Validation fold.
\label{fig.re.qq}}
\end{figure}

We now showcase the performance and characteristics of the LMM-PROBE method on the riboflavin (vitamin B$_2$) dataset \citep{Buhlmann2014}, a popular high-throughput genomic dataset about riboflavin production using a genetically modified bacterium. Riboflavin is grown in recombinant \textit{Bacillus subtilis}, with the goal of maintaining the production rate of the riboflavin over time. Riboflavin grows across multiple generations of the bacterium for a different duration, leading to a longitudinal design with repeated observations. The dataset contains a log-transformed riboflavin production rate, which we use as the outcome, as well as $4088$ predictor variables, each consisting of gene expression levels on the log scale. There are $28$ clusters in the dataset, measured over time, with the number of observations per cluster ranging between $2$ and $6$. In total, there are $111$ observations. The original aim of the research that generated this dataset was to identify the genes that impacted riboflavin production. To analyze this data, we used a linear mixed-effects model. We considered all $4088$ gene expression predictors as sparse fixed effects and considered an intercept as well as a time predictor as both non-sparse fixed and random effects. 

To evaluate the performance of LMM-PROBE, we used MSPEs resulting from five-fold CV. In the CV, we balanced clusters across the folds, meaning that a cluster's observations were all in the same fold. At a given iteration of the CV, 80\% of the clusters were in training folds, and 20\% were in the validation-test fold. The validation-test fold was further split into validation (with earlier \textit{time} values for each cluster in the fold, i.e., 1--2) and testing (with subsequent \textit{time} values, i.e., 3--6) subfolds. The testing subfold had two observations from each cluster with four or more measurements and one observation otherwise. We used the validation subfold to obtain the predicted random effects and the testing subfold to calculate MSPEs. To generate random effect predictions for the validation subfold, we used the MAP estimate of LMM-PROBE. To tune the penalty parameter in LASSO, we performed an additional five-fold CV using the training set. 

\begin{figure}[t]
\centering
\includegraphics[width=6in]{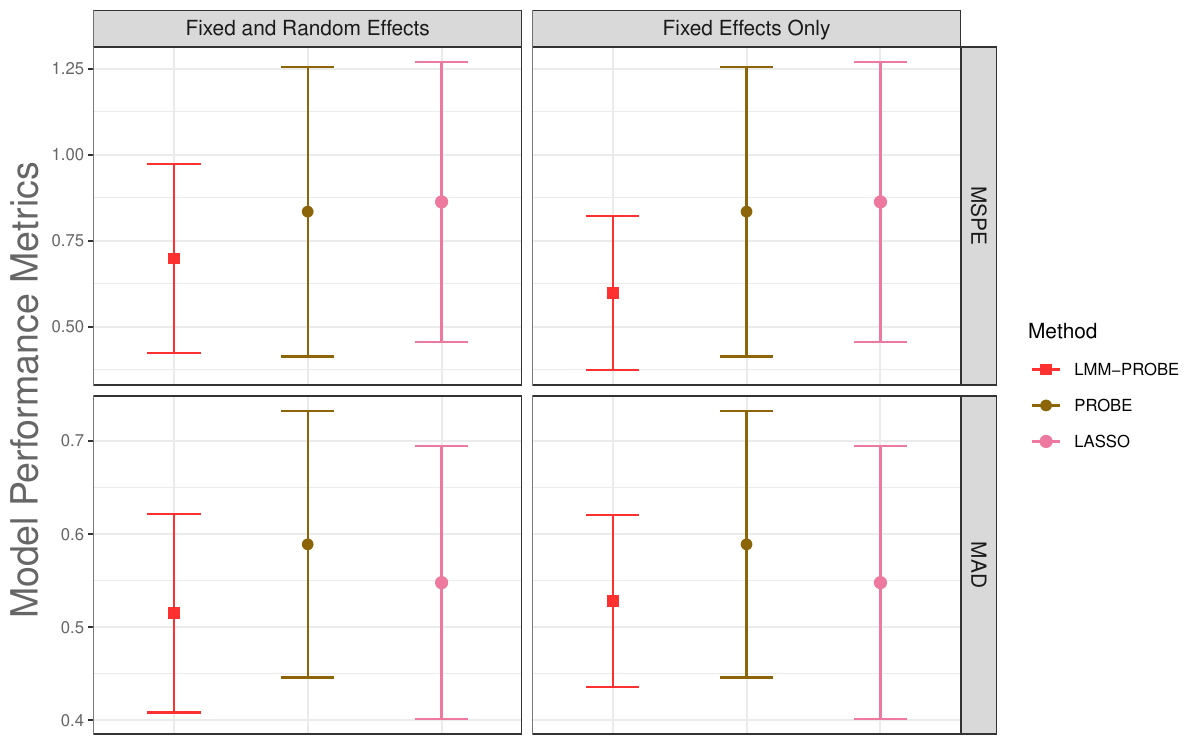} \\
\caption{Mean Squared Predictive Errors (MSPE) and Median Absolute Deviations (MAD) for LMM-PROBE and two comparison methods, based on both random and fixed effects and fixed effects only. 
Vertical lines represent $\pm$ the standard error of MSPE or MAD, divided by $\sqrt{5}$, based on the number of Cross-Validation folds. \label{fig.mse.example2}}
\end{figure}

We calculated two types of MSPEs. The first type of MSPE is the difference between the true outcome value and prediction obtained based on both the fixed and random effect estimates. Specifically, using estimates $\tilde{\bG}$ and $\tilde{\sigma}^2$ from the training data, we obtained predicted random effects $\tilde{\re}_i$ using the validation data and finally obtained complete predictions for future time points using the testing data. The second type of MSPE is the difference between the true outcome value and predictions obtained only based on the fixed effect estimates. Figure \ref{fig.mse.example2} shows the two types of MSPEs as well as MADs. For MSPEs based on both fixed and random effect estimates, LMM-PROBE had the lowest MSPEs compared to PROBE and LASSO. 
When examining MSPEs based on fixed effect estimates only, LMM-PROBE performed best as well. Generally, the range of MSPE values across the CV folds was wider for LASSO and PROBE, and narrower for LMM-PROBE. 
Trends in MADs results were overwhelmingly similar to those of MSPEs. 

\begin{figure}[t]
\centering
\includegraphics[width=6in]{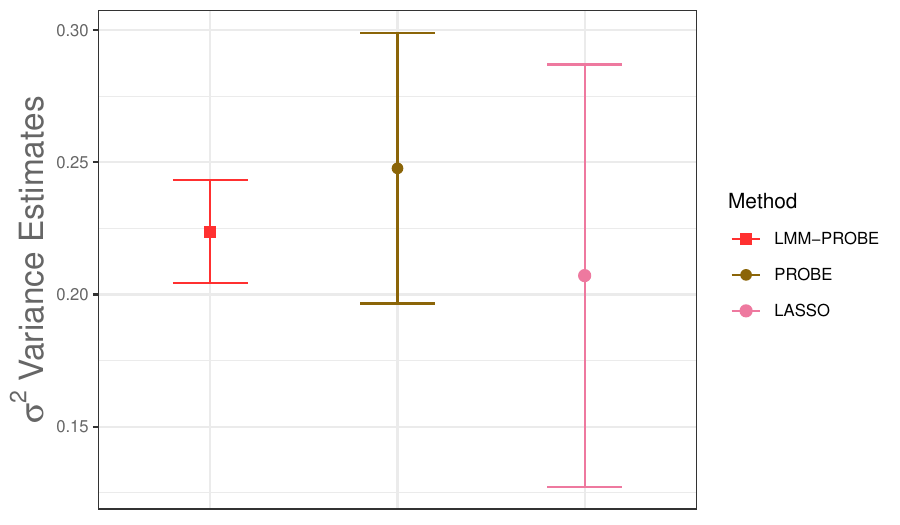} \\
\caption{
Average residual ($\tilde{\sigma}^2$) variance estimates for LMM-PROBE and two comparison methods. Vertical lines represent $\pm$ the standard error of $\tilde{\sigma}^2$ divided by $\sqrt{5}$, based on the number of Cross-Validation folds. \label{fig.sig.example2}}
\end{figure}

Figure \ref{fig.sig.example2} shows estimates for within-sample variation, $\tilde{\sigma}^2$. For PROBE and LASSO, $\tilde{\sigma}^2$ is the model residual error estimates since these methods do not delineate between within-unit and between-unit variation. LMM-PROBE had a mid-range estimate for $\tilde{\sigma}^2$ with the lowest variability across CV fold. LASSO had the lowest average $\tilde{\sigma}^2$ estimate but varied markedly across CV folds. Figure \ref{fig.selected.example2} shows the average number of predictors selected by LMM-PROBE (1), PROBE (1), and LASSO (33), with LMM-PROBE selecting the fewest predictors across CV folds. 
Prior to choosing the final model described above with the LMM-PROBE, LASSO, and PROBE approaches, we examined additional methods (LASSO+, LASSO+EN, and BRMS) but did not continue with them due to computation time. Figure \ref{fig.time.example2} shows the average computation time in seconds for one CV fold of an intermediate model in our data analysis. The methods for traditional high-dimensional linear regression (LASSO, PROBE) required around one second per fold. The remaining methods required between 200 and 6,000 seconds, on average, per fold. Overall, the total computation time for the intermediate model was 1,026 minutes for BRMS, 202 minutes for LASSO+EN, 51 minutes for LASSO+, 2 minutes for LMM-PROBE, 31 seconds for PROBE, and 18 seconds for LASSO.  

\begin{figure}[ht]
\centering
\includegraphics[width=6in]{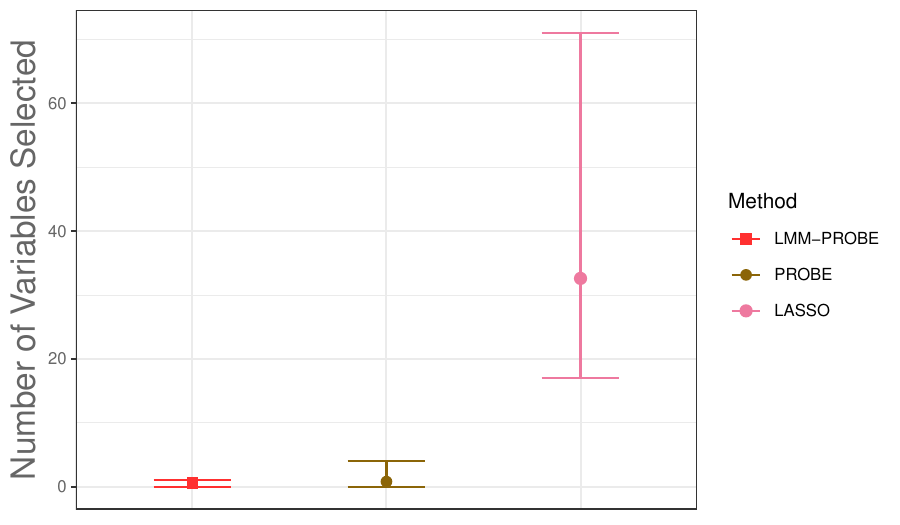} \\
\caption{Average number of predictors selected for LMM-PROBE and two comparison methods. Vertical lines display the range across the Cross-Validation folds. \label{fig.selected.example2}}
\end{figure}

\begin{figure}[ht]
\centering
\includegraphics[width=6in]{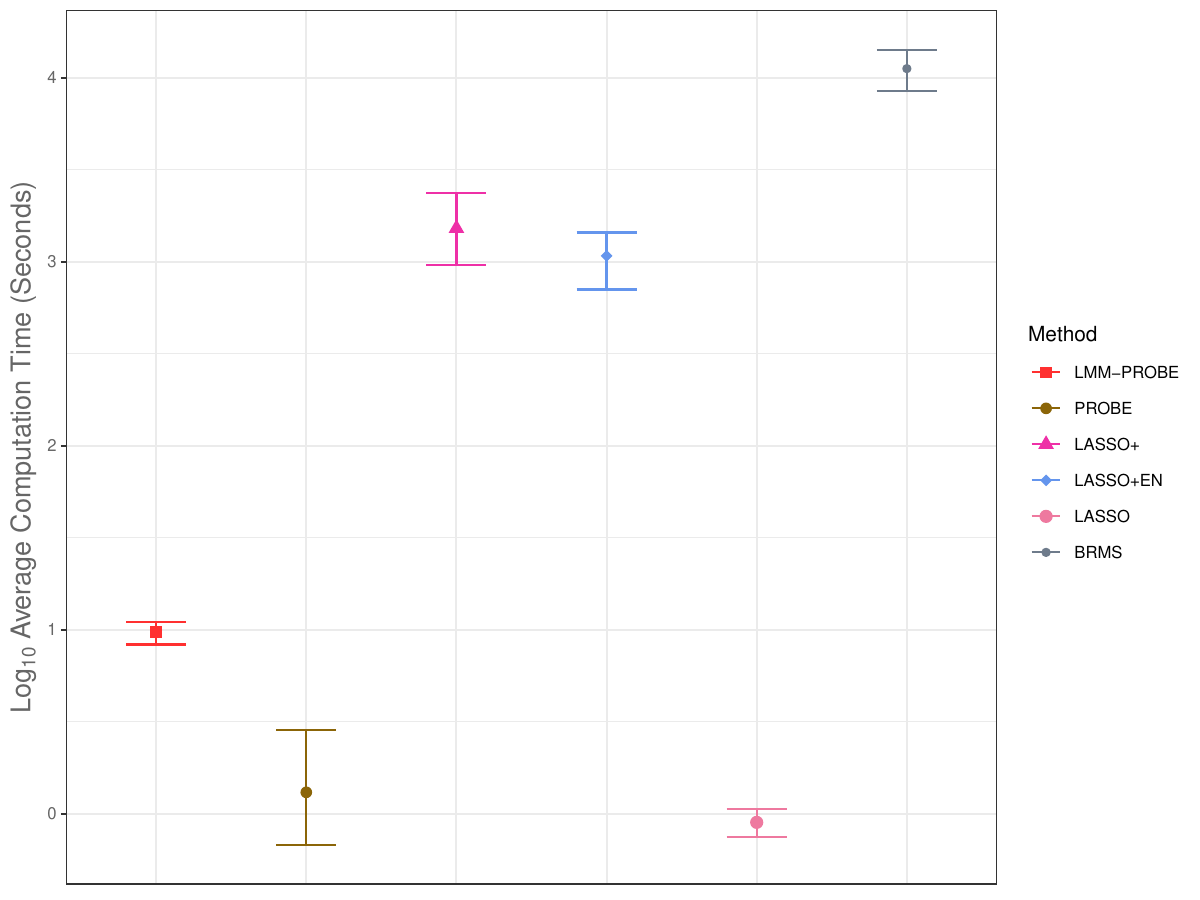} \\
\caption{Average computation time per Cross-Validation fold (in seconds) for LMM-PROBE and five comparison methods, on the log$_{10}$ scale.  
\label{fig.time.example2}}
\end{figure}

\clearpage

\bibliographystyle{chicago}

\bibliography{Dissertation_References}

\begin{thebibliography}{}

\bibitem[\protect\citeauthoryear{Ariyo, Quintero, Munoz, Verbeke, and
  Lesaffre}{Ariyo et~al.}{2020}]{Ariyo2020}
Ariyo, O., A.~Quintero, J.~Munoz, G.~Verbeke, and E.~Lesaffre (2020).
\newblock Bayesian model selection in linear mixed models for longitudinal
  data.
\newblock {\em Journal of Applied Statistics\/}~{\em 5\/}(47), 890--913.

\bibitem[\protect\citeauthoryear{Banchereau, Hong, Cantarel, Baldwin, Baisch,
  Edens, Cepika, Acs, Turner, Anguiano, Vinod, Khan, Obermoser, Blankenship,
  Wakeland, Nassi, Gotte, Punaro, Liu, Banchereau, Rossello-Urgell, Wright, and
  Pascual}{Banchereau et~al.}{2016}]{Banchereau2016}
Banchereau, R., S.~Hong, B.~Cantarel, N.~Baldwin, J.~Baisch, M.~Edens, A.-M.
  Cepika, P.~Acs, J.~Turner, E.~Anguiano, P.~Vinod, S.~Khan, G.~Obermoser,
  D.~Blankenship, E.~Wakeland, L.~Nassi, A.~Gotte, M.~Punaro, Y.-J. Liu,
  J.~Banchereau, J.~Rossello-Urgell, T.~Wright, and V.~Pascual (2016).
\newblock Personalized immunomonitoring uncovers molecular networks that
  stratify lupus patients.
\newblock {\em Cell\/}~{\em 165}, 551--565.

\bibitem[\protect\citeauthoryear{Bhatnagar, Yang, Lu, Schurr, Loredo-Osti,
  Forest, Oualkacha, and Greenwood}{Bhatnagar et~al.}{2020}]{Bhatnagar2020}
Bhatnagar, S.~R., Y.~Yang, T.~Lu, E.~Schurr, J.~Loredo-Osti, M.~Forest,
  K.~Oualkacha, and C.~M.~T. Greenwood (2020, 05).
\newblock Simultaneous snp selection and adjustment for population structure in
  high dimensional prediction models.
\newblock {\em PLOS Genetics\/}~{\em 16\/}(5), 1--25.

\bibitem[\protect\citeauthoryear{Blanchard and Roquain}{Blanchard and
  Roquain}{2009}]{Blanchard2009}
Blanchard, G. and E.~Roquain (2009).
\newblock Adaptive fdr control under independence and dependence.
\newblock {\em Journal of Machine Learning Research\/}~{\em 10}, 2837 --2831.

\bibitem[\protect\citeauthoryear{Bondell, Krishna, and Ghosh}{Bondell
  et~al.}{2010}]{Bondell2010}
Bondell, H.~D., A.~Krishna, and S.~K. Ghosh (2010).
\newblock {Joint Variable Selection for Fixed and Random Effects in Linear
  Mixed-Effects Models}.
\newblock {\em Biometrics\/}~{\em 66\/}(4), 1069--1077.

\bibitem[\protect\citeauthoryear{Buhlmann, Kalisch, and Meier}{Buhlmann
  et~al.}{2014}]{Buhlmann2014}
Buhlmann, P., M.~Kalisch, and L.~Meier (2014).
\newblock High-dimensional statistics with a view towards applications in
  biology.
\newblock {\em Annual Review of Statistics and its Applications\/}~{\em 1},
  255--278.

\bibitem[\protect\citeauthoryear{Chen, Dunson, and Chen}{Chen
  et~al.}{2003}]{Chen2003}
Chen, Z., D.~B. Dunson, and Z.~Chen (2003).
\newblock {Random Effects Selection in Linear Mixed Models}.
\newblock {\em Biometrics\/}~{\em 59\/}(4), 762--769.

\bibitem[\protect\citeauthoryear{Cole}{Cole}{2020}]{Cole2020}
Cole, J.~H. (2020).
\newblock Multimodality neuroimaging brain-age in uk biobank: relationship to
  biomedical, lifestyle, and cognitive factors.
\newblock {\em Neurobiology of aging\/}~{\em 92}, 34--42.

\bibitem[\protect\citeauthoryear{Craiu and Duchesne}{Craiu and
  Duchesne}{2018}]{Craiu2018}
Craiu, R.~V. and T.~Duchesne (2018).
\newblock {A scalable and efficient covariate selection criterion for mixed
  effects regression models with unknown random effects structure}.
\newblock {\em Computational Statistics and Data Analysis\/}~{\em 117},
  154--161.

\bibitem[\protect\citeauthoryear{Degani, Maestrini, Toczydlowska, and
  Wand}{Degani et~al.}{2022}]{Degani2022}
Degani, E., L.~Maestrini, D.~Toczydlowska, and M.~P. Wand (2022).
\newblock Sparse linear mixed model selection via streamlined variational
  bayes.
\newblock {\em Electronic Journal of Statistics\/}~{\em 9}, 5182--5225.

\bibitem[\protect\citeauthoryear{Delattre and Poursat}{Delattre and
  Poursat}{2020}]{Delattre2020}
Delattre, M. and M.~A. Poursat (2020).
\newblock {An iterative algorithm for joint covariate and random effect
  selection in mixed effects models}.
\newblock {\em International Journal of Biostatistics\/}~{\em 16\/}(2), 75--83.

\bibitem[\protect\citeauthoryear{Efron}{Efron}{2008}]{Efr08}
Efron, B. (2008).
\newblock Microarrays, empirical bayes and the two-group model.
\newblock {\em Statistical Science\/}~{\em 23\/}(1), 1--22.

\bibitem[\protect\citeauthoryear{Efron, Tibshirani, Storey, and Tusher}{Efron
  et~al.}{2001}]{Efron2001}
Efron, B., R.~Tibshirani, J.~D. Storey, and V.~Tusher (2001).
\newblock Empirical {B}ayes analysis of a microarray experiment.
\newblock {\em Journal of the American Statistical Association\/}~{\em
  96\/}(456), 1151--1160.

\bibitem[\protect\citeauthoryear{Embrechts, Kl{\"u}ppelberg, and
  Mikosch}{Embrechts et~al.}{2013}]{Embetal13}
Embrechts, P., C.~Kl{\"u}ppelberg, and T.~Mikosch (2013).
\newblock {\em Modelling extremal events: for insurance and finance},
  Volume~33.
\newblock Springer Science \& Business Media.

\bibitem[\protect\citeauthoryear{Fan and Lv}{Fan and Lv}{2008}]{Fan2008}
Fan, J. and J.~Lv (2008).
\newblock Sure independence screening for ultrahigh dimensional feature space.
\newblock {\em Journal of the Royal Statistical Society. Series B (Statistical
  Methodology)\/}~{\em 70\/}(5), 849--911.

\bibitem[\protect\citeauthoryear{Fan and Li}{Fan and Li}{2012}]{Fan2012Aim3}
Fan, Y. and R.~Li (2012).
\newblock {Variable selection in linear mixed effects models}.
\newblock {\em The Annals of Statistics\/}~{\em 40\/}(4), 2043--2068.

\bibitem[\protect\citeauthoryear{Fearn}{Fearn}{1975}]{Fearn1975}
Fearn, T. (1975).
\newblock A bayesian approach to growth curves.
\newblock {\em Biometrika\/}~{\em 62\/}(1), 89--100.

\bibitem[\protect\citeauthoryear{Friedman, Hastie, and Tibshirani}{Friedman
  et~al.}{2010}]{glmnet}
Friedman, J., T.~Hastie, and R.~Tibshirani (2010).
\newblock Regularization paths for generalized linear models via coordinate
  descent.
\newblock {\em Journal of Statistical Software\/}~{\em 33\/}(1), 1--22.

\bibitem[\protect\citeauthoryear{George and McCulloch}{George and
  McCulloch}{1997}]{George1997}
George, E.~I. and R.~E. McCulloch (1997).
\newblock Approaches for bayesian variable selection.
\newblock {\em Statistica Sinica\/}~{\em 7\/}(2), 339--373.

\bibitem[\protect\citeauthoryear{Groll and Tutz}{Groll and
  Tutz}{2014}]{Groll2014}
Groll, A. and G.~Tutz (2014).
\newblock {Variable selection for generalized linear mixed models by
  L1-penalized estimation}.
\newblock {\em Statistics and Computing\/}~{\em 24\/}(2), 137--154.

\bibitem[\protect\citeauthoryear{Harville and Zimmerman}{Harville and
  Zimmerman}{1996}]{Harville1996}
Harville, D.~A. and A.~G. Zimmerman (1996).
\newblock The posterior distribution of the fixed and random effects in a
  mixed-effects linear model.
\newblock {\em Journal of Statistical Computation and Simulation\/}~{\em 54},
  211--229.

\bibitem[\protect\citeauthoryear{Hastie, Tibshirani, and Wainwright}{Hastie
  et~al.}{2015}]{Hastie2015}
Hastie, T., R.~Tibshirani, and M.~Wainwright (2015).
\newblock {\em Statistical Learning with Sparsity: The Lasso and
  Generalizations}.
\newblock Chapman \& Hall/CRC.

\bibitem[\protect\citeauthoryear{Henrici}{Henrici}{1964}]{Henrici1964}
Henrici, P. (1964).
\newblock {\em Elements of numerical analysis.}
\newblock New York: John Wiley \& Sons.

\bibitem[\protect\citeauthoryear{Ibrahim, Zhu, Garcia, and Guo}{Ibrahim
  et~al.}{2011}]{Ibrahim2011}
Ibrahim, J.~G., H.~Zhu, R.~I. Garcia, and R.~Guo (2011).
\newblock {Fixed and Random Effects Selection in Mixed Effects Models}.
\newblock {\em Biometrics\/}~{\em 67\/}(2), 495--503.

\bibitem[\protect\citeauthoryear{Inan, Zhou, and Wang}{Inan
  et~al.}{2017}]{pgee}
Inan, G., J.~Zhou, and L.~Wang (2017).
\newblock {\em PGEE: Penalized Generalized Estimating Equations in
  High-Dimension}.
\newblock CRAN.
\newblock R package version 1.5.

\bibitem[\protect\citeauthoryear{Ishwaran and Rao}{Ishwaran and
  Rao}{2005}]{Ishwaran2005}
Ishwaran, H. and J.~S. Rao (2005).
\newblock Spike and slab variable selection: frequentist and bayesian
  strategies.
\newblock {\em The Annals of Statistics\/}~{\em 33\/}(2), 730--773.

\bibitem[\protect\citeauthoryear{Jiang, Rao, Gu, and Nguyen}{Jiang
  et~al.}{2008}]{Jiang2008}
Jiang, J., S.~Rao, Z.~Gu, and T.~Nguyen (2008).
\newblock Fence methods for mixed model selection.
\newblock {\em The Annals of Statistics\/}~{\em 36\/}(4), 1669--1692.

\bibitem[\protect\citeauthoryear{Jiang, Bogdan, Josse, Majewski, Miasojedow,
  Rockova, and Group}{Jiang et~al.}{2022}]{Jiang2022}
Jiang, W., M.~Bogdan, J.~Josse, S.~Majewski, B.~Miasojedow, V.~Rockova, and
  T.~Group (2022).
\newblock Adaptive bayesian slope: Model selection with incomplete data.
\newblock {\em Journal of Computational and Graphical Statistics\/}~{\em 31},
  113--137.

\bibitem[\protect\citeauthoryear{Kinney and Dunson}{Kinney and
  Dunson}{2007}]{Kinney2007}
Kinney, S.~K. and D.~B. Dunson (2007).
\newblock {Fixed and random effects selection in linear and logistic models}.
\newblock {\em Biometrics\/}~{\em 63\/}(3), 690--698.

\bibitem[\protect\citeauthoryear{Komarek and Lesaffre}{Komarek and
  Lesaffre}{2008}]{Komarek2008}
Komarek, A. and E.~Lesaffre (2008).
\newblock {Generalized linear mixed model with a penalized Gaussian mixture as
  a random effects distribution}.
\newblock {\em Computational Statistics and Data Analysis\/}~{\em 52\/}(7),
  3441--3458.

\bibitem[\protect\citeauthoryear{Lange, Carlin, and Gelfand}{Lange
  et~al.}{1992}]{Lange1992}
Lange, N., B.~P. Carlin, and A.~E. Gelfand (1992).
\newblock Hierarchical bayes models for the progression of hiv infection using
  longitudinal cd4 t-cell numbers.
\newblock {\em Journal of the American Statistical Association\/}~{\em
  87\/}(419), 615--626.

\bibitem[\protect\citeauthoryear{Li, Cai, and Li}{Li et~al.}{2021}]{Li2021}
Li, S., T.~T. Cai, and H.~Li (2021).
\newblock {Inference for High-Dimensional Linear Mixed-Effects Models: A
  Quasi-Likelihood Approach}.
\newblock {\em Journal of the American Statistical Association\/}~{\em 0\/}(0),
  1--33.

\bibitem[\protect\citeauthoryear{Lindley and Smith}{Lindley and
  Smith}{1972}]{Lindley1972}
Lindley, D. and A.~Smith (1972).
\newblock Bayes estimates for the linear model.
\newblock {\em Journal of the Royal Statistical Society: Series B
  (Methodological)\/}~{\em 34\/}(1), 1--18.

\bibitem[\protect\citeauthoryear{Liu, Rubin, and Wu}{Liu
  et~al.}{1998}]{Liuetal98}
Liu, C., D.~B. Rubin, and Y.~N. Wu (1998, 12).
\newblock {Parameter expansion to accelerate EM: The PX-EM algorithm}.
\newblock {\em Biometrika\/}~{\em 85\/}(4), 755--770.

\bibitem[\protect\citeauthoryear{Ma, Needell, and Ramdas}{Ma
  et~al.}{2015}]{Maetal15}
Ma, A., D.~Needell, and A.~Ramdas (2015).
\newblock Convergence properties of the randomized extended gauss--seidel and
  kaczmarz methods.
\newblock {\em SIAM Journal on Matrix Analysis and Applications\/}~{\em
  36\/}(4), 1590--1604.

\bibitem[\protect\citeauthoryear{Mascarenhas}{Mascarenhas}{1995}]{Mas95}
Mascarenhas, W.~F. (1995).
\newblock On the convergence of the jacobi method for arbitrary orderings.
\newblock {\em SIAM journal on matrix analysis and applications\/}~{\em
  16\/}(4), 1197--1209.

\bibitem[\protect\citeauthoryear{Matthews}{Matthews}{1975}]{Matthews1975}
Matthews, B. (1975).
\newblock Comparison of the predicted and observed secondary structure of t4
  phage lysozyme.
\newblock {\em Biochimica et Biophysica Acta (BBA) - Protein Structure\/}~{\em
  405\/}(2), 442--451.

\bibitem[\protect\citeauthoryear{McLain, Zgodic, and Bondell}{McLain
  et~al.}{2022}]{McLain2022}
McLain, A.~C., A.~Zgodic, and H.~Bondell (2022).
\newblock Sparse high-dimensional linear regression with a partitioned
  empirical bayes ecm algorithm.
\newblock \url{https://arxiv.org/abs/2209.08139}.

\bibitem[\protect\citeauthoryear{Meng}{Meng}{1994}]{Men94}
Meng, X.-L. (1994).
\newblock On the rate of convergence of the {ECM} algorithm.
\newblock {\em Ann. Statist.\/}~{\em 22\/}(1), 326--339.

\bibitem[\protect\citeauthoryear{Meng and Rubin}{Meng and
  Rubin}{1992}]{MenRub92}
Meng, X.-L. and D.~B. Rubin (1992).
\newblock Recent extensions to the {EM} algorithm.
\newblock In {\em Bayesian statistics}, pp.\  307--320. Oxford Univ. Press, New
  York.

\bibitem[\protect\citeauthoryear{Meng and Rubin}{Meng and
  Rubin}{1993}]{Meng1993}
Meng, X.~L. and D.~B. Rubin (1993).
\newblock {Maximum likelihood estimation via the ECM algorithm: A general
  framework}.
\newblock {\em Biometrika\/}~{\em 80\/}(2), 267--278.

\bibitem[\protect\citeauthoryear{Minka and Lafferty}{Minka and
  Lafferty}{2002}]{MinLaf02}
Minka, T. and J.~Lafferty (2002).
\newblock Expectation-propagation for the generative aspect model.
\newblock In {\em Proceedings of the Eighteenth conference on Uncertainty in
  artificial intelligence}, pp.\  352--359.

\bibitem[\protect\citeauthoryear{Minka}{Minka}{2001}]{Min01}
Minka, T.~P. (2001).
\newblock {\em A family of algorithms for approximate Bayesian inference}.
\newblock Ph.\ D. thesis, Massachusetts Institute of Technology.

\bibitem[\protect\citeauthoryear{Mitchell and Beauchamp}{Mitchell and
  Beauchamp}{1988}]{MitBea88}
Mitchell, T.~J. and J.~J. Beauchamp (1988).
\newblock Bayesian variable selection in linear regression.
\newblock {\em Journal of the American Statistical Association\/}~{\em
  83\/}(404), 1023--1032.

\bibitem[\protect\citeauthoryear{Nobre and Singer}{Nobre and
  Singer}{2007}]{Nobre2007}
Nobre, J. and J.~Singer (2007).
\newblock Residual analysis for linear mixed models.
\newblock {\em Biometrical Journal\/}~{\em 49\/}(6), 863--875.

\bibitem[\protect\citeauthoryear{Opoku, Ahmed, and Nathoo}{Opoku
  et~al.}{2021}]{Opoku2021}
Opoku, E.~A., S.~E. Ahmed, and F.~S. Nathoo (2021).
\newblock Sparse estimation strategies in linear mixed effect models for
  high-dimensional data application.
\newblock {\em Entropy\/}~{\em 23}, 1348.

\bibitem[\protect\citeauthoryear{Peng and Lu}{Peng and Lu}{2012}]{Peng2012}
Peng, H. and Y.~Lu (2012).
\newblock {Model selection in linear mixed effect models}.
\newblock {\em Journal of Multivariate Analysis\/}~{\em 109}, 109--129.

\bibitem[\protect\citeauthoryear{Rakitsch, Lippert, Stegle, and
  Borgwardt}{Rakitsch et~al.}{2012}]{Rakitsch2012}
Rakitsch, B., C.~Lippert, O.~Stegle, and K.~Borgwardt (2012, 11).
\newblock {A Lasso multi-marker mixed model for association mapping with
  population structure correction}.
\newblock {\em Bioinformatics\/}~{\em 29\/}(2), 206--214.

\bibitem[\protect\citeauthoryear{Reisetter and Breheny}{Reisetter and
  Breheny}{2021}]{Reisetter2021}
Reisetter, A.~C. and P.~Breheny (2021).
\newblock Penalized linear mixed models for structured genetic data.
\newblock {\em Genetic Epidemiology\/}~{\em 45}, 427--444.

\bibitem[\protect\citeauthoryear{Rohart, {San Cristobal}, and Laurent}{Rohart
  et~al.}{2014}]{Rohart2014}
Rohart, F., M.~{San Cristobal}, and B.~Laurent (2014).
\newblock {Selection of fixed effects in high dimensional linear mixed models
  using a multicycle ECM algorithm}.
\newblock {\em Computational Statistics and Data Analysis\/}~{\em 80},
  209--222.

\bibitem[\protect\citeauthoryear{Saragosa-Harris, Chaku, MacSweeney,
  Williamson, Scheuplein, Feola, Cardenas-Iniguez, Demir-Lira, McNeilly,
  Huffman, et~al.}{Saragosa-Harris et~al.}{2022}]{Sara2022}
Saragosa-Harris, N.~M., N.~Chaku, N.~MacSweeney, V.~G. Williamson,
  M.~Scheuplein, B.~Feola, C.~Cardenas-Iniguez, E.~Demir-Lira, E.~A. McNeilly,
  L.~G. Huffman, et~al. (2022).
\newblock A practical guide for researchers and reviewers using the abcd study
  and other large longitudinal datasets.
\newblock {\em Developmental cognitive neuroscience\/}~{\em 55}, 101115.

\bibitem[\protect\citeauthoryear{Schelldorfer, Buhlmann, and {De
  Geer}}{Schelldorfer et~al.}{2011}]{Schelldorfer2011Aim3}
Schelldorfer, J., P.~Buhlmann, and S.~V. {De Geer} (2011).
\newblock {Estimation for High-Dimensional Linear Mixed-Effects Models Using
  l1-Penalization}.
\newblock {\em Scandinavian Journal of Statistics\/}~{\em 38\/}(2), 197--214.

\bibitem[\protect\citeauthoryear{Schlather, Malinowski, Menck, Oesting, and
  Strokorb}{Schlather et~al.}{2015}]{Schlather2015}
Schlather, M., A.~Malinowski, P.~J. Menck, M.~Oesting, and K.~Strokorb (2015).
\newblock Analysis, simulation and prediction of multivariate random fields
  with package {RandomFields}.
\newblock {\em Journal of Statistical Software\/}~{\em 63\/}(8), 1--25.

\bibitem[\protect\citeauthoryear{Sexton and Swensen}{Sexton and
  Swensen}{2000}]{Sexton2000}
Sexton, J. and A.~R. Swensen (2000).
\newblock Ecm algorithms that converge at the rate of em.
\newblock {\em Biometrika\/}~{\em 87\/}(3), 651--662.

\bibitem[\protect\citeauthoryear{Sholokhov, Burke, Santomauro, Zheng, and
  Aravkin}{Sholokhov et~al.}{2024}]{Sholokhov2024}
Sholokhov, A., J.~V. Burke, D.~F. Santomauro, P.~Zheng, and A.~Aravkin (2024).
\newblock A relaxation approach to feature selection for linear mixed effects
  models.
\newblock {\em Journal of Computational and Graphical Statistics\/}~{\em
  33\/}(1), 261--275.

\bibitem[\protect\citeauthoryear{Storey}{Storey}{2007}]{Sto07}
Storey, J.~D. (2007).
\newblock The optimal discovery procedure: a new approach to simultaneous
  significance testing.
\newblock {\em Journal of the Royal Statistical Society: Series B
  (Methodological)\/}~{\em 69\/}(3), 347--368.

\bibitem[\protect\citeauthoryear{Storey, Taylor, and Siegmund}{Storey
  et~al.}{2004}]{Storey2004}
Storey, J.~D., J.~E. Taylor, and D.~Siegmund (2004).
\newblock Strong control, conservative point estimation and simultaneous
  conservative consistency of false discovery rates: A unified approach.
\newblock {\em Journal of the Royal Statistical Society: Series B
  (Methodological)\/}~{\em 66\/}(1), 187--205.

\bibitem[\protect\citeauthoryear{Sun and Cai}{Sun and Cai}{2007}]{Sun2007}
Sun, W. and T.~T. Cai (2007).
\newblock Oracle and adaptive compound decision rules for false discovery rate
  control.
\newblock {\em Journal of the American Statistical Association\/}~{\em
  102\/}(479), 901--912.

\bibitem[\protect\citeauthoryear{Tibshirani}{Tibshirani}{1996}]{Tib96}
Tibshirani, R. (1996).
\newblock Regression shrinkage and selection via the lasso.
\newblock {\em Journal of the Royal Statistical Society: Series B
  (Methodological)\/}~{\em 58\/}(1), 267--288.

\bibitem[\protect\citeauthoryear{Varadhan and Roland}{Varadhan and
  Roland}{2008}]{Varadhan2008}
Varadhan, R. and C.~Roland (2008).
\newblock Simple and globally convergent methods for accelerating the
  convergence of any em algorithm.
\newblock {\em Scandinavian Journal of Statistics\/}~{\em 35}, 335--353.

\bibitem[\protect\citeauthoryear{Vehtari, Gelman, Sivula, Jylänki, Tran,
  Sahai, Blomstedt, Cunningham, Schiminovich, and Robert}{Vehtari
  et~al.}{2020}]{Vehetal20}
Vehtari, A., A.~Gelman, T.~Sivula, P.~Jylänki, D.~Tran, S.~Sahai,
  P.~Blomstedt, J.~P. Cunningham, D.~Schiminovich, and C.~P. Robert (2020).
\newblock Expectation {Propagation} as a {Way} of {Life}: {A} {Framework} for
  {Bayesian} {Inference} on {Partitioned} {Data}.
\newblock {\em Journal of Machine Learning Research\/}~{\em 21\/}(17), 1--53.

\bibitem[\protect\citeauthoryear{Wang, Zhou, and Qu}{Wang
  et~al.}{2012}]{Wang2012}
Wang, L., J.~Zhou, and A.~Qu (2012).
\newblock {Penalized Generalized Estimating Equations for High-Dimensional
  Longitudinal Data Analysis}.
\newblock {\em Biometrics\/}~{\em 68\/}(2), 353--360.

\bibitem[\protect\citeauthoryear{Zgodic, Olejua, and McLain}{Zgodic
  et~al.}{2023}]{ZgoMcL23}
Zgodic, A., P.~Olejua, and A.~McLain (2023, 9).
\newblock {Performing high-dimensional linear mixed modeling with LMM-PROBE}.
\newblock \url{https://github.com/anjazgodic/lmmprobe}.

\bibitem[\protect\citeauthoryear{Zhou, Carbonetto, and Stephens}{Zhou
  et~al.}{2013}]{Zhou2013}
Zhou, X., P.~Carbonetto, and M.~Stephens (2013).
\newblock Polygenic modeling with bayesian sparse linear mixed models.
\newblock {\em PLOS Genetics\/}~{\em 9\/}(2), e1003264.

\end{thebibliography}
\end{document}